\def\aol{\rule[0.5865ex]{1.38ex}{0.1ex}}
\def\pdra{\mbox{$\,>\mkern-8mu\raisebox{-0.065ex}{\aol}\,$}}
\renewcommand{\epsilon}{\varepsilon}
\newcommand{\vp}{\overline{p}}
\newcommand{\vq}{\overline{q}}
\newcommand{\va}{\overline{a}}
\newcommand{\diamdot}{\Diamond\!\!\!\cdot\ }
\newcommand{\lhddot}{{\lhd}\!\!\cdot\ }
\newcommand{\rhddot}{{\rhd}\!\!\!\cdot\ }
\newcommand{\diamdotland}{\Diamond\!\!\!\!\cdot\ }
\newcommand{\lhddotland}{{\lhd}\!\!\!\!\cdot\ }
\newcommand{\rhddotland}{{\rhd}\!\!\!\!\cdot\ }
\newcommand{\diamdotb}{\Diamondblack\!\!\!{\color{white}{\cdot\ }}}
\newcommand{\boxdotb}{\blacksquare\!\!\!{\color{white}{\cdot\ }}}
\newcommand{\lhddotb}{{\blacktriangleleft}\!\!{\color{white}{\cdot\ }}}
\newcommand{\rhddotb}{{\blacktriangleright}\!\!\!{\color{white}{\cdot\ \ }}}
\newcommand{\nomi}{\mathbf{i}}
\newcommand{\nomj}{\mathbf{j}}
\newcommand{\nomk}{\mathbf{k}}
\newcommand{\cnomm}{\mathbf{m}}
\newcommand{\cnomn}{\mathbf{n}}
\newcommand{\diam}{\Diamond}
\renewcommand{\i}{\mathbf{i}}
\newcommand{\m}{\mathbf{m}}
\newcommand{\ca}{A^\delta}
\newcommand{\blhd}{\blacktriangleleft_\lambda}
\newcommand{\brhd}{\blacktriangleright_\rho}
\newcommand{\dimp}{\Diamond_\pi}
\newcommand{\bba}{A}
\newcommand{\bbas}{A^{\delta}}
\newcommand{\kbbas}{K(A^\delta)}
\newcommand{\obbas}{O(A^\delta)}
\newcommand{\jty}{J^{\infty}}
\newcommand{\mty}{M^{\infty}}
\newcommand\val[1]{{\lbrack\!\lbrack} {#1}{\rbrack\!\rbrack}}
\newcommand{\bigamp}{\mathop{\mbox{\Large \&}}}
\newcommand{\iamp}{\parr}
\newcommand{\amp}{\mathop{\&}}
\theoremstyle{plain}
\newtheorem{thm}{Theorem}
\newtheorem{lem}[thm]{Lemma}
\newtheorem{cor}[thm]{Corollary}
\newtheorem{prop}[thm]{Proposition}
\newtheorem{lemma}[thm]{Lemma}
\theoremstyle{definition}
\newtheorem{dfn}[thm]{Definition}
\newtheorem{definition}[thm]{Definition}
\newtheorem{exa}[thm]{Example}
\newtheorem{exs}[thm]{Example}
\newtheorem{remark}[thm]{Remark}
\title{Canonicity and Relativized Canonicity via Pseudo-Correspondence: an Application of ALBA}
\author[3]{Willem Conradie}
\author[1,3]{Alessandra Palmigiano\thanks{The research of the first author has been made possible by the National Research Foundation of South Africa, Grant number 81309. The research of the second and fourth author has been made possible by the NWO Vidi grant 016.138.314, by the NWO Aspasia grant 015.008.054, and by a Delft Technology Fellowship awarded in 2013.}}
\author[2]{Sumit Sourabh}
\author[1]{Zhiguang Zhao}
\affil[1]{\small Faculty of Technology, Policy and Management, Delft University of Technology, the Netherlands}
\affil[2]{\small Institute for Logic, Language and Computation, University of Amsterdam, the Netherlands}
\affil[3]{\small Department of Pure and Applied Mathematics, University of Johannesburg, South Africa}
\begin{document}
\maketitle
\begin{abstract}
We generalize Venema's result on the canonicity of the additivity of positive terms, from classical modal logic to a class of logics the algebraic semantics of which is given by varieties of  normal distributive lattice expansions (normal DLEs), aka `distributive lattices with operators'. We provide two contrasting proofs for this result: the first is along the lines of Venema's pseudo-correspondence argument but using the insights and tools of unified correspondence theory, and in particular the algorithm ALBA; the second closer to the style of J\'onsson. Using insights gleaned from the second proof, we define a suitable enhancement of the algorithm ALBA, which we use to prove the canonicity of certain syntactically defined classes of DLE-inequalities (referred to as the {\em meta-inductive inequalities}), relative to the structures in which the formulas asserting the additivity of some given terms are valid.

\noindent {\em Keywords:} Modal logic, canonicity, Sahlqvist theory, algorithmic correspondence, pseudo-corresp- ondence, distributive lattice with operators.\\
{\em Math. Subject Class.} 03B45, 06D50, 06D10, 03G10, 06E15.
\end{abstract}

\tableofcontents

\section{Introduction}

\paragraph{Canonicity and elementarity.} Contemporary logic studies logical systems in classes, rather than treating each of them in isolation. Hence, methods for proving  results holding uniformly for classes of logical systems are intensely sought after. One of the most important among such results concerns completeness via canonicity.   
Sahlqvist theory is  the best developed and best known uniform answer to this issue, providing an algorithmic, syntactic identification of a class of modal formulas whose associated normal modal logics are shown to be complete via canonicity w.r.t.\  elementary classes of frames.
For most Sahlqvist-type results, elementarity and canonicity go hand in hand. Specifically, a methodology pioneered by Sambin and Vaccaro \cite{SaVa89} proves the canonicity of the Sahlqvist formulas by making use of their elementarity. This approach has become known as \emph{canonicity-via-correspondence}.  There are also other techniques for proving canonicity, which do not seem to rely on the correspondence results. The most prominent of these techniques were first independently introduced by Ghilardi and Meloni \cite{GhMe97} and J\'onsson \cite{Jo94}, respectively. Their approaches rely on similar algebraic constructions, but while Ghilardi and Meloni's treatment is constructive, J\'onsson's is not.\footnote{In \cite{PaSoZh15}, canonicity results are developed which unify J\'onsson's strategy for canonicity and Sambin-Vaccaro's, and in \cite{CP:constructive}, an analogous unification is achieved between  Sambin-Vaccaro's strategy and Ghilardi-Meloni's. We refer to these papers for an extended discussion.}


The scope of the approaches to canonicity mentioned so far is confined to classes of formulas to which the canonicity-via-correspondence method also applies (cf.\ Remark \ref{rem:ghilardi suzuki} for further discussion). However, canonicity and elementarity are not equivalent, as Fine \cite{Fine75} showed by providing an example of a canonical formula which lacks a first-order correspondent. J\'onsson \cite{Jo94} gives a purely algebraic proof of the canonicity of Fine's formula, generalized in terms of the canonicity of all formulas asserting the additivity of {\em stable} terms.

Venema \cite{Yv98} shows that, even though Fine's canonical formula is not elementary, its canonicity can be proven using a suitable weakening of the canonicity-via-correspondence argument, in which the first-order correspondent is replaced by a {\em pseudo-correspondent}. Venema's canonicity result   generalizes, in a model-theoretic setting, J\'onsson's proof of the canonicity of the additivity of stable terms, by proving  the canonicity of all formulas asserting the additivity of arbitrary {\em positive} terms. 

\paragraph{Contributions.}
In the present paper, we prove a canonicity result without an accompanying elementarity result, going `via pseudo-correspondence' (cf.\ Section \ref{Sec:pseudo}). Using the ALBA technology (see below), we first recreate Venema's proof of the canonicity of the additivity of positive terms in the algebraic and more general setting of normal distributive lattice expansions (normal DLEs, cf.\ Definition \ref{def:DLE}). 
Then, still in the setting of normal DLEs, we distill the algebraic and order-theoretic content of the argument above. This, in turn, allows for an alternative canonicity proof, somewhat resembling---although different from---that in \cite{Jo94}, not argued via pseudo-correspondence.

The order-theoretic facts underlying this generalization provide the basis for the soundness of additional ALBA rules relative to the classes of structures in which the formulas asserting the additivity of some given terms are valid. These classes do not need to be first-order definable, and in general they are not. Accordingly, an enhanced version of ALBA, which we call ALBA$^e$, is defined, which is proven to be successful on a certain class of inequalities which significantly extends (see discussion on Section \ref{sec:meta:albae}) the class of inequalities on which the canonicity-via-correspondence argument is known to work (cf.\ \cite[Section 3]{CoPa12}). These inequalities are shown to be canonical relative to the subclass defined by the given additivity axioms.

\paragraph*{Unified correspondence theory.} 
The contributions of the present paper belong to  \emph{unified correspondence theory} \cite{CoGhPa14}, a  recent line of research which,  building on duality-theoretic insights, uniformly exports the state-of-the-art in Sahlqvist theory from normal modal logic to a wide range of logics. These logics include intuitionistic and distributive lattice-based (normal modal) logics \cite{CoPa12}, substructural logics and any other logic algebraically captured by normal lattice expansions   \cite{CoPa:non-dist,CFPPTW},  non-normal (regular) modal logics and any other logic algebraically captured by regular distributive lattice expansions \cite{PaSoZh16}, hybrid logics \cite{ConRob}, many-valued logics \cite{LeRoux:MThesis:2016}, and bi-intuitionistic and lattice-based modal mu-calculus \cite{CoCr14,CFPS15,CCPZ}.

The breadth of this work has also stimulated many and varied applications. Some of them are closely related to the core concerns of the theory itself, such as
the understanding of the relationship between different
methodologies for obtaining canonicity results in different contexts \cite{PaSoZh15,CP:constructive,YZ16}. Other, possibly surprising applications  include the dual characterizations of classes of finite lattices \cite{FrPaSa16}, the identification of the syntactic shape of axioms which can be translated into analytic structural rules of a proper display calculus \cite{GMPTZ}, and the definition of cut-free Gentzen calculi for subintuitionistic logics \cite{MZ16}. Finally, the insights of unified correspondence theory have made it possible to determine the extent to which the Sahlqvist theory of classes of normal DLEs can be reduced to the Sahlqvist theory of normal Boolean expansions, by means of G\"{o}del-type translations \cite{CPZ:Trans}.


The starting point of this theory, discussed extensively in \cite{ConPalSou,CoGhPa14}, is the insight that dualities / adjunctions between the relational and the algebraic semantics of given logics are the mathematical machinery underlying  the phenomenon of correspondence.  
The most important technical tools of unified correspondence are: (a) very general syntactic definitions of the class of Sahlqvist formulas/inequalities and of the strict superclass of inductive formulas/inequalities, which apply uniformly to all logical signatures; 
(b) the algorithm ALBA in its many adaptations, uniformly based on the order theoretic properties of the algebraic interpretation of the connectives of each logical signature, designed to effectively compute first-order correspondents of propositional formulas or inequalities. 

It is interesting to observe that,  through the development of  applications such as \cite{PaSoZh15,GMPTZ,CP:constructive}, the algorithm ALBA  acquires  novel conceptual significance, which cannot be reduced exclusively to its original purpose as a computational tool for correspondence theory. In this respect, the results of the present paper are yet another instance of the potential of ALBA to be used as a general-purpose computational tool, capable of meaningfully contributing to more general and different issues than pure correspondence.

\paragraph{Relevance to other research themes.}  There are two  unrelated directions to which the results of the present paper are useful: the first one concerns the exploration of canonicity in the presence of additional axioms (or relativized canonicity, cf.\ Definition \ref{def:rela:canonicity}). It is well known that certain modal axioms which are \emph{not} in general canonical (i.e., over the class of all algebras) \emph{are} canonical over some smaller class of algebras. Examples of relativized canonicity results crop up in the literature in disparate contexts. In
\cite{LeSc66}, 
 it is shown that the McKinsey formula becomes canonical when taken in conjunction with the transitivity axiom.  More generally, all modal reduction principles are canonical in the presence of transitivity, and this can be seen as follows: Zakharyaschev \cite{Zakharyaschev:97} proves that any extension of $\mathbf{K4}$ axiomatized with modal reduction principles has the finite model property, and is hence Kripke complete. Combining this fact with the elementarity of the reduction principles over transitive frames as proved by van Benthem \cite{vanBenthem:Reduction:Principles}, the claim follows by Fine's theorem \cite{Fine75}.
 The McKinsey formula becomes  equivalent to a Sahlqvist formula, and hence is canonical, also in the presence of modal reduction principles  such as $\Diamond p \leftrightarrow \box\Diamond p$.
 Another well known example is the van Benthem formula $\Box\Diamond \top\rightarrow \Box(\Box(\Box p\rightarrow p)\rightarrow p)$ \cite{vB79}, which axiomatizes a Kripke incomplete logic, yet it is easy to see that  in the presence of the transitivity axiom it becomes
equivalent to the variable-free formula $\Box\Diamond \top\rightarrow \Box\bot$ and hence is canonical. Finally, all formulas are canonical relative to any pretabular
logic which is itself canonical, such as $\mathbf{S5}$.

 The problem of relativized canonicity is difficult to tackle  uniformly for general classes, and the present paper can be regarded as an ALBA-aided contribution to this problem. Notice that, in contrast to each of the previously mentioned results, the class relative to which the  relativized canonicity result of the present paper is proven does not need to be elementary.

The second direction concerns regular modal logics: these are non-normal modal logics, the modal operations of which distribute over binary joins and meets, but are not required to satisfy normality. The new rules of the enhanced ALBA, as well as the generalized canonicity-via-correspondence argument based on the conditional Esakia lemma, form the technical basis for the extension of unified correspondence theory to regular modal logics (cf.\ \cite[Definition 8.8]{CH90}). This is the focus of the companion paper \cite{PaSoZh16}, on which we will expand in the conclusions.

\paragraph{Structure.}  In Section \ref{sec:prelims}, we provide the necessary preliminaries on the logical environment of normal DLEs, and the version of ALBA  and inductive formulas/inequalities corresponding to this environment. In Section \ref{Sec:pseudo}, we discuss the notion of pseudo-correspondence which we also illustrate by recasting Venema's argument \cite{Yv98} in terms of an ALBA-type reduction and the methodology of unified correspondence while lifting it to the setting of normal DLEs. In Section \ref{Sec:AlternativeProof}, we provide the algebraic and order-theoretic facts, in the setting of distributive lattices, at the basis of the generalization of the results in \cite{Yv98}. We prove the canonicity of the inequalities stating the additivity of $\varepsilon$-positive terms as immediate consequences of the  order-theoretic results. In Section \ref{sec:enhancedALBA}, we introduce the enhanced algorithm ALBA$^e$ and prove the soundness of its new rules on the basis of the results in Section \ref{Sec:AlternativeProof}. In Section \ref{sec:meta:albae}, we prove that ALBA$^e$ succeeds on the class of {\em meta-inductive} inequalities introduced there; then, in Section \ref{Sec:Rel:Canon:Albae}, the relativized canonicity of all meta-inductive inequalities is stated and proved. Section \ref{sec:examples} presents some examples, and in Section \ref{sec:conclusions} we draw some conclusions and discuss further directions. Some technical facts are collected in Section \ref{sec:appendix}, the appendix. 

\section{Preliminaries}\label{sec:prelims}
\subsection{The algorithm ALBA, informally}\label{subseq:informal}
The  contribution of the present paper is set in the context of order-theoretic algorithmic correspondence theory \cite{CoPa12, CoGhPa14}.
As mentioned in the introduction, the correspondence  strategy can be developed in the context of the algebraic semantics of modal logic, and then generalized to various other logics. The algebraic setting helps to distill the essentials of this strategy. We refer the reader to \cite{ConPalSou} for an in-depth treatment linking the traditional and algebraic approaches, and to \cite{CoPa12} for a fully-fledged treatment of the algebraic-algorithmic approach. The algorithm ALBA is the main tool of unified correspondence theory. In the present subsection, we will guide the reader through the main principles which make it work, by means of an example.

\bigskip

 Let us start with one of the best known examples in correspondence theory, namely $\Diamond \Box p \rightarrow \Box \Diamond p$. It is well known that for every Kripke frame $\mathcal{F} = (W, R)$,
\[
\mathcal{F} \Vdash \Diamond \Box p \rightarrow \Box \Diamond p \,\,\,\mbox{ iff }\,\,\, \mathcal{F} \models \,\forall xyz\,(Rxy \wedge Rxz \rightarrow \exists u(Ryu \wedge Rzu)).
\]
As is discussed at length in \cite{CoPa12, CoGhPa14}, every piece of argument used to prove this correspondence on frames can be translated by duality to complex algebras\footnote{cf.\ \cite[Definition 5.21]{BdRV01}.}. We will show how this is done in the case of the example above.

As is well known, complex algebras are characterized in purely algebraic terms as complete and atomic BAOs where the modal operations are completely join-preserving. These are also known as \emph{perfect} BAOs \cite[Definition 40, Chapter 6]{HBMoL}.

First of all, the condition $\mathcal{F} \Vdash \Diamond \Box p \rightarrow \Box \Diamond p$ translates to the complex algebra $A = \mathcal{F}^{+}$ of $\mathcal{F}$  as $\val{\Diamond\Box p} \subseteq \val{\Box \Diamond p}$ for every assignment of $p$ into $A$, so this validity clause can be rephrased as follows:
\begin{equation}\label{Church:Rosser}
A \models \forall p [\Diamond\Box p \leq \Box\Diamond p],
\end{equation}
where the order $\leq$ is interpreted as set inclusion in the complex algebra.  In perfect BAOs every element is both the join of the completely join-prime elements (the set of which is denoted $\jty(A)$) below it and the meet of the completely meet-prime elements (the set of which is denoted $\mty(A)$) above it\footnote{In BAOs the completely join-prime elements, the completely join-irreducible elements and the atoms coincide. Moreover, the completely meet-prime elements, the completely meet-irreducible elements and the co-atoms coincide.}. Hence, taking some liberties in our use of notation, the condition above can be equivalently rewritten as follows:
\begin{equation*}
A\models \forall p [\bigvee\{i\in \jty(A)\mid i \leq\Box\Diamond p\} \leq \bigwedge\{m\in \mty(A)\mid\Box\Diamond p\leq m\}].
\end{equation*}
By elementary properties of least upper bounds and greatest lower bounds in posets (cf.\ \cite{DaPr}), this condition is true if and only if every element in the join is less than or equal to every element in the meet; thus, condition \eqref{Church:Rosser} above  can be rewritten as:
\begin{equation}\label{Eq:First:Approx}
A\models \forall p \forall \nomi \forall \cnomm [(\nomi \leq\Diamond \Box p \,\,\,\& \,\,\,\Box \Diamond p\leq \cnomm) \Rightarrow \nomi \leq \cnomm ],
\end{equation}
where the variables $\nomi$ and $\cnomm$ range over $\jty(A)$ and $\mty(A)$ respectively (following the literature, we will refer to the former variables as {\em nominals}, and  to the latter ones as {\em co-nominals}). Since $A$ is a perfect BAO, the element of $A$ interpreting $\Box p$ is the join of the completely join-prime elements below it. Hence, if $i \in \jty(A)$ and $i \leq \Diamond \Box p$, because $\Diamond$ is completely join-preserving on $A$, we have that
\[
i \leq \Diamond(\bigvee \{j \in \jty(A)\mid j\leq  \Box p \}) = \bigvee\{\Diamond j \mid j\in \jty(A)\mbox{ and } j\leq \Box p \},
\]
which implies that $i \leq \Diamond j_0$ for some $j_0 \in \jty(A)$ such that $j_0 \leq \Box p$. Hence, we can equivalently rewrite the validity clause above as follows:
\begin{equation}\label{Eq:DiaApprox}
A\models \forall p \forall \nomi \forall \cnomm [(\exists \nomj(\nomi\leq\Diamond \nomj \,\,\,\& \,\,\, \nomj\leq \Box p) \,\,\,\& \,\,\, \Box \Diamond p \leq \cnomm) \Rightarrow \nomi \leq \cnomm ],
\end{equation}
and then use standard manipulations from first-order logic to pull out quantifiers:
\begin{equation}\label{Eq:PullOut}
A\models \forall p \forall \nomi \forall \cnomm \forall \nomj[(\nomi\leq\Diamond \nomj \,\,\,\& \,\,\, \nomj\leq \Box p \,\,\,\& \,\,\,\Box \Diamond p \leq \cnomm) \Rightarrow \nomi \leq \cnomm ].
\end{equation}
Now we observe that the operation $\Box$ preserves arbitrary meets in the perfect BAO $A$. By the general theory of adjunction in complete lattices, this is equivalent to $\Box$ being a right adjoint (cf.\  \cite[Proposition 7.34]{DaPr}). It is also well known that the left or lower adjoint (cf.\ \cite[Definition 7.23]{DaPr}) of $\Box$ is the operation $\Diamondblack$, which can be recognized as the backward-looking diamond $P$, interpreted with the converse $R^{-1}$ of the accessibility relation $R$ of the frame $\mathcal{F}$ in the context of tense logic (cf.\ \cite[Example 1.25]{BdRV01} and \cite[Exercise 7.18]{DaPr} modulo translating the notation). Hence the condition above can be equivalently rewritten as:
\begin{equation}\label{Eq:BoxAdj}
A \models \forall p \forall \nomi \forall \cnomm \forall \nomj[(\nomi\leq\Diamond \nomj \,\,\,\& \,\,\, \Diamondblack \nomj \leq p \,\,\,\& \,\,\,\Box \Diamond p \leq \cnomm) \Rightarrow \nomi \leq \cnomm ],
\end{equation}
and then as follows:
\begin{equation}
\label{Before:Ack:Eq}
A \models \forall \nomi \forall \cnomm \forall \nomj[(\nomi\leq\Diamond \nomj \,\,\,\& \,\,\, \exists p (\Diamondblack \nomj \leq p \,\,\,\& \,\,\,\Box \Diamond p \leq \cnomm)) \Rightarrow \nomi \leq \cnomm ].
\end{equation}
At this point we are in a position to eliminate the variable $p$ and equivalently rewrite the previous condition as follows:
\begin{equation}
\label{After:Ack:Eq}
A\models \forall \nomi \forall \cnomm \forall \nomj[(\nomi\leq\Diamond\nomj \,\,\,\& \,\,\, \Box \Diamond \Diamondblack \nomj \leq \cnomm) \Rightarrow \nomi \leq \cnomm ].
\end{equation}
Let us justify this equivalence: for the direction from top to bottom, fix an interpretation $V$ of the variables $\nomi, \nomj$, and $\cnomm$ such that $\nomi\leq\Diamond \nomj$ and $\Box \Diamond \Diamondblack \nomj \leq \cnomm$. To prove that $\nomi\leq \cnomm$ holds under  $V$,  consider the variant $V^\ast$ of $V$ such that $V^\ast(p) = \Diamondblack \nomj$. Then it can be easily verified that $V^{\ast}$ witnesses the antecedent of (\ref{Before:Ack:Eq}) under $V$; hence $\nomi\leq \cnomm$ holds under $V$. Conversely,  fix an interpretation $V$ of the variables $\nomi$, $\nomj$ and $\cnomm$ such that
$\nomi\leq\Diamond \nomj \,\,\,\& \,\,\, \exists p (\Diamondblack \nomj \leq p \,\,\,\& \,\,\,\Box \Diamond p \leq \cnomm)$. Then, by monotonicity,  the antecedent of (\ref{After:Ack:Eq}) holds under $V$, and hence so does $\nomi\leq \cnomm$, as required. This is an instance of the following result, known as {\em Ackermann's lemma} (\cite{Ack35}, see also \cite{Conradie:et:al:SQEMAI}):
\begin{lemma}\label{Right:Ackermann}
    Fix an arbitrary propositional language $L$. Let $\alpha, \beta(p), \gamma(p)$ be $L$-formulas such that $\alpha$ is $p$-free, $\beta$ is positive and $\gamma$ is negative in $p$. For any assignment $V$ on an $L$-algebra $\mathbb A$, the following are equivalent:
    \begin{enumerate}
    \item  $A, V \models \beta(\alpha/p) \leq \gamma(\alpha/p)$ ;
    \item  there exists a $p$-variant $V^\ast$ of $V$ such that $A, V^\ast \models \alpha \leq p$ and $\ A, V^\ast \models \beta(p)\leq\gamma(p)$,
    \end{enumerate}
where $\beta(\alpha/p)$ and $\gamma(\alpha/p)$ denote the result of uniformly substituting $\alpha$ for $p$ in $\beta$ and $\gamma$, respectively.
\end{lemma}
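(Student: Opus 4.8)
The plan is to prove the two implications separately, the whole argument resting on two ingredients: the standard substitution (coincidence) lemma for the algebraic interpretation, and the monotonicity properties encoded by the syntactic notions of positive and negative occurrence. First I would record the substitution lemma: writing $t^V$ for the value in $A$ of a term $t$ under the assignment $V$, for any term $t(p)$ one has $t(\alpha/p)^V = t^{V^\ast}$, where $V^\ast$ is the $p$-variant of $V$ sending $p$ to $\alpha^V$. The crucial use of the hypothesis that $\alpha$ is $p$-free is that $\alpha^W = \alpha^V$ for every $p$-variant $W$ of $V$; this is what lets us treat $\alpha^V$ as a fixed element against which the value assigned to $p$ is allowed to vary.

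Next I would establish the monotonicity fact by induction on term structure: if $\beta$ is positive in $p$, then the map sending $x$ to the value of $\beta(p)$ under the $p$-variant of $V$ assigning $x$ to $p$ is order-preserving, and if $\gamma$ is negative in $p$, then the corresponding map for $\gamma$ is order-reversing. This is where the DLE hypotheses enter: each connective of the signature is interpreted by an operation that is monotone or antitone in each of its coordinates, and the positive/negative bookkeeping is precisely what tracks the parity of the composite. I expect this inductive lemma to be essentially the only content of the proof; once it is in hand, both implications are formal. The only mild subtlety to watch is the base and polarity-flipping steps (e.g.\ a positive occurrence sitting inside an antitone coordinate), which the inductive definition of positivity/negativity is designed to handle.

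For the implication from $(1)$ to $(2)$, I would simply take $V^\ast$ to be the $p$-variant of $V$ with $V^\ast(p) = \alpha^V$. Since $\alpha$ is $p$-free, $\alpha^{V^\ast} = \alpha^V = V^\ast(p)$, so $A, V^\ast \models \alpha \leq p$ (in fact with equality); and by the substitution lemma $\beta(p)^{V^\ast} = \beta(\alpha/p)^V \leq \gamma(\alpha/p)^V = \gamma(p)^{V^\ast}$, which is exactly $A, V^\ast \models \beta(p) \leq \gamma(p)$. Conversely, for $(2) \Rightarrow (1)$, suppose such a $V^\ast$ is given, so that $\alpha^V = \alpha^{V^\ast} \leq V^\ast(p)$ and $\beta(p)^{V^\ast} \leq \gamma(p)^{V^\ast}$. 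Letting $V'$ be the $p$-variant of $V$ with $V'(p) = \alpha^V$, we have $V'(p) \leq V^\ast(p)$, so the monotonicity lemma gives $\beta(p)^{V'} \leq \beta(p)^{V^\ast}$ and $\gamma(p)^{V^\ast} \leq \gamma(p)^{V'}$. Chaining these inequalities around the hypothesis $\beta(p)^{V^\ast} \leq \gamma(p)^{V^\ast}$ and applying the substitution lemma at $V'$ yields $\beta(\alpha/p)^V = \beta(p)^{V'} \leq \gamma(p)^{V'} = \gamma(\alpha/p)^V$, which is statement $(1)$. The hard part is thus isolated in the monotonicity induction; the equivalence itself is then a three-step chain.
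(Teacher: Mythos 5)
Your proof is correct and follows exactly the argument the paper has in mind: the paper omits the proof, citing \cite[Lemma 4.2]{CoPa12}, and the inline justification of the equivalence between \eqref{Before:Ack:Eq} and \eqref{After:Ack:Eq} in Subsection \ref{subseq:informal} uses precisely your two ingredients---taking $V^\ast(p) = \alpha^V$ for one direction and the monotonicity of $\beta$ (positive in $p$) and antitonicity of $\gamma$ (negative in $p$) for the other. Your isolation of the substitution lemma and the polarity-tracking induction as the only real content is also how the standard proof is organized.
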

The proof is essentially the same as \cite[Lemma 4.2]{CoPa12}. Whenever, in a reduction, we reach a shape in which the lemma above (or its order-dual) can be applied, we say that the condition is in {\em Ackermann shape}\label{Ackermann:Shape}.

Taking stock, we note that we have equivalently transformed (\ref{Church:Rosser}) into (\ref{After:Ack:Eq}), which is a condition in which all propositional variables (corresponding to monadic second-order variables) have been eliminated, and all remaining variables range over completely join- and meet-prime elements. Via the duality, the latter correspond to singletons and complements of singletons, respectively, in Kripke frames. Moreover, $\Diamondblack$ is interpreted on Kripke frames using the converse of the same accessibility relation used to interpret $\Box$. Hence, clause (\ref{After:Ack:Eq}) translates equivalently into a condition in the first-order correspondence language. To facilitate this translation we first rewrite (\ref{After:Ack:Eq}) as follows, by reversing the reasoning that brought us from \eqref{Church:Rosser} to \eqref{Eq:First:Approx}:
\begin{equation}
A\models \forall \nomj[\Diamond \nomj \leq \Box \Diamond \Diamondblack \nomj].
\end{equation}

By again applying the fact that $\Box$ is a right adjoint we obtain

\begin{equation}\label{eq:pure: church rosser}
A\models \forall \nomj[\Diamondblack \Diamond \nomj \leq \Diamond \Diamondblack \nomj].
\end{equation}

Recalling that $A$ is the complex algebra of $\mathcal{F} = (W,R)$, we can interpret the variable $\nomj$ as an individual variable ranging in the universe $W$ of $\mathcal{F}$, and the operations $\Diamond$ and $\Diamondblack$ as the set-theoretic operations defined on $\mathcal{P}(W)$ by the assignments $X\mapsto R^{-1}[X]$ and $X\mapsto R[X]$ respectively. Hence, clause \eqref{eq:pure: church rosser} above can be equivalently rewritten on the side of the frames  as
\begin{equation}
\mathcal{F}\models \forall w( R[R^{-1}[ w ]] \subseteq  R^{-1}[R[ w ]]).
\end{equation}
 Notice that  $R[R^{-1}[ w ]]$ is the set of all states $x \in W$ which have a predecessor $z$ in common with $w$, while  $R^{-1}[R[ w ]]$ is the set of all states $x \in W$ which have a successor in common with $w$. This can be spelled out as
\[
\forall x \forall w ( \exists z (Rzx \wedge Rzw) \rightarrow \exists y (Rxy \wedge Rwy))
\]
or, equivalently,
\[
\forall z \forall x \forall w ( (Rzx \wedge Rzw) \rightarrow \exists y (Rxy \wedge Rwy))
\]
which is the familiar Church-Rosser condition.

Before moving on, it is worthwhile to observe that it is not a special situation that  we have been able to extract the familiar Church-Rosser condition from clause \eqref{eq:pure: church rosser}. Indeed,  the well known standard translation of classical Sahlqvist correspondence theory can be extended to the ``hybrid'' language comprising the additional variables $\nomj$ and $\cnomm$ and the connectives $\Diamondblack$ and $\blacksquare$, in such a way that  pure expressions in this language (i.e.\ those which are free from proposition variables) correspond to formulas in the first order  language of Kripke frames. In the next subsection we will provide a formal definition of this expanded language.

\subsection{Language, basic axiomatization and algebraic semantics of DLE and DLE$^{*}$}\label{subset:language:algsemantics}

In previous settings \cite{CoPa12,CFPS15}, a specific base language was fixed in order to develop the corresponding calculus for correspondence ALBA. In the present paper, however, similarly to \cite{CoPa:non-dist,PaSoZh16,CP:constructive}, our base language is an unspecified but fixed modal-type language DLE, to be interpreted over normal distributive lattice expansions (cf.\ Definition \ref{def:DLE}).  For such a language and naturally associated axiomatization, the theory of unified correspondence as outlined in \cite{CoGhPa14} can be deployed to obtain both the relative algorithm ALBA and definition of inductive inequalities. In what follows, we will provide a concise account of the main definitions and facts. Moreover, for the sake of the developments in Section \ref{sec:meta:albae}, we will find it useful to work with an expansion DLE$^{*}$ of DLE, obtained by adding `placeholder modalities'. Since   DLE$^{*}$ is itself a member of the DLE family, all the results and notions pertaining to the unified correspondence theory for DLE will apply to DLE$^{*}$ as well.

\medskip

An {\em order-type} over $n\in \mathbb{N}$ is an $n$-tuple $\epsilon\in \{1, \partial\}^n$. For every order-type $\epsilon$,  let $\epsilon^\partial$ be its {\em opposite} order-type, i.e., $\epsilon^\partial_i = 1$ iff $\epsilon_i=\partial$ for every $1 \leq i \leq n$.

We fix a set of proposition letters $\mathsf{PROP}$, two sets $\mathcal{F}$ and $\mathcal{G}$ of connectives of arity $n_f, n_g\in \mathbb{N}$ for each $f\in \mathcal{F}$ and $g\in \mathcal{G}$,  and define the languages DLE and DLE$^{*}$, respectively, by the following dependent recursion:

\[
\mathrm{DLE} \ni \phi ::= p \mid \bot \mid \top \mid \phi \wedge \phi \mid \phi \vee \phi \mid f(\overline{\phi}) \mid g(\overline{\phi})
\]

where $p \in \mathsf{PROP}$, $f \in \mathcal{F}$ and $g \in \mathcal{G}$, and

\[
\mathrm{DLE}^* \ni \psi ::= \phi \mid \boxdot \psi \mid \diamdotland \psi \mid \lhddotland \psi \mid \rhddotland \psi
\]

where $\phi \in \mathrm{DLE}$.

We further assume that each $f\in \mathcal{F}$ and $g\in \mathcal{G}$ is associated with some order-type $\varepsilon_f$ on $n_f$ (resp.\ $\varepsilon_g$ on $n_g$). The equational axiomatizations of DLE and DLE$^*$ are obtained by adding the following axioms to the equational axiomatization of bounded distributive lattices:

\begin{itemize}
\item if $\varepsilon_f(i) = 1$, then $f(p_1,\ldots, p\vee q,\ldots,p_{n_f}) = f(p_1,\ldots, p,\ldots,p_{n_f})\vee f(p_1,\ldots, q,\ldots,p_{n_f})$ and\\ $f(p_1,\ldots, \bot,\ldots,p_{n_f}) = \bot$;
\item if $\varepsilon_f(i) = \partial$, then $f(p_1,\ldots, p\wedge q,\ldots,p_{n_f}) = f(p_1,\ldots, p,\ldots,p_{n_f})\vee f(p_1,\ldots, q,\ldots,p_{n_f})$ and\\ $f(p_1,\ldots, \top,\ldots,p_{n_f}) = \bot$;
\item if $\varepsilon_g(j) = 1$, then $g(p_1,\ldots, p\wedge q,\ldots,p_{n_g}) = g(p_1,\ldots, p,\ldots,p_{n_g})\wedge g(p_1,\ldots, q,\ldots,p_{n_g})$ and\\ $g(p_1,\ldots, \top,\ldots,p_{n_g}) = \top$;
\item if $\varepsilon_g(j) = \partial$, then $g(p_1,\ldots, p\vee q,\ldots,p_{n_g}) = g(p_1,\ldots, p,\ldots,p_{n_g})\wedge g(p_1,\ldots, q,\ldots,p_{n_g})$ and\\ $g(p_1,\ldots, \bot,\ldots,p_{n_g}) = \top$.
\end{itemize}

for each $f\in \mathcal{F}$ (resp.\ $g\in \mathcal{G}$) and $1\leq i\leq n_f$ (resp.\ for each $1\leq j\leq n_g$).

For DLE$^*$ we also add

\begin{center}
\begin{tabular}{c c c c c }
$\diamdotland (p\vee q) =\diamdotland p\vee \diamdotland q$ & $\diamdotland \bot =\bot$
& &
$\boxdot p \wedge \boxdot q =\boxdot (p\wedge q) $ & $\top =\boxdot \top $\\
& \\
$\lhddotland (p\wedge q)= \lhddotland p\vee \lhddotland q$ & $\lhddotland \top = \bot$
& &
$\rhddotland p\wedge \rhddotland q =\rhddotland (p\vee q) $ & $\top = \rhddotland \bot.$\\

\end{tabular}
\end{center}

The ALBA algorithm manipulates inequalities in the following expansions of the base languages DLE  and DLE$^*$: Let $\mathrm{DLE}^+$ be the expansion of DLE with two additional sorts of variables, namely, \emph{nominals} $\nomi, \nomj,\ldots$ and \emph{conominals} $\cnomm, \cnomn, \ldots$ (which, as mentioned early on, are intended as individual variables ranging over  the sets  of the completely join-irreducible elements and the completely meet-irreducible elements of perfect DLEs, see below), and with residuals $\to, -$ of $\wedge$ and $\vee$, and residuals $\underline{f}^{(i)}$ and $\overline{g}^{(j)}$ for each  $1\leq i\leq n_f$ and $1\leq j\leq n_g$. The language $\mathrm{DLE}^{*+}$ is the expansion of $\mathrm{DLE}^+$ with the adjoint connectives $\diamdotb$, $\boxdotb$, $\lhddotb$ and $\rhddotb$ for $\boxdot$, $\diamdot$, $\lhddot$ and $\rhddot$, respectively.

\begin{definition}
		\label{def:DLE}
		For any tuple $(\mathcal{F}, \mathcal{G})$ of disjoint sets of function symbols as above, a {\em  distributive lattice expansion} (abbreviated as DLE) is a tuple $\bba = (D, \mathcal{F}^\bba, \mathcal{G}^\bba)$ such that $D$ is a bounded distributive lattice (abbreviated as BDL), $\mathcal{F}^\bba = \{f^\bba\mid f\in \mathcal{F}\}$ and $\mathcal{G}^\bba = \{g^\bba\mid g\in \mathcal{G}\}$, such that every $f^\bba\in\mathcal{F}^\bba$ (resp.\ $g^\bba\in\mathcal{G}^\bba$) is an $n_f$-ary (resp.\ $n_g$-ary) operation on $\bba$. An DLE is {\em normal} if every $f^\bba\in\mathcal{F}^\bba$ (resp.\ $g^\bba\in\mathcal{G}^\bba$) preserves finite (hence also empty) joins (resp.\ meets) in each coordinate with $\epsilon_f(i)=1$ (resp.\ $\epsilon_g(i)=1$) and reverses finite (hence also empty) meets (resp.\ joins) in each coordinate with $\epsilon_f(i)=\partial$ (resp.\ $\epsilon_g(i)=\partial$).\footnote{\label{footnote:DLE vs DLO} Normal DLEs are sometimes referred to as {\em distributive lattices with operators} (DLOs). This terminology derives from the setting of Boolean algebras with operators, in which operators are understood as operations which preserve finite (hence also empty) joins in each coordinate. Thanks to the Boolean negation, operators are typically taken as primitive connectives, and all the other operations are reduced to these. However, this terminology results somewhat ambiguous in the setting of bounded distributive lattices, in which primitive operations are typically maps which are operators if seen as $\bba^\epsilon\to \bba^\eta$ for some order-type $\epsilon$ on $n$ and some order-type $\eta\in \{1, \partial\}$. Rather than speaking of lattices with $(\varepsilon, \eta)$-operators, we then speak of normal DLEs.} Let $\mathbb{DLE}$ be the class of DLEs. Sometimes we will refer to certain DLEs as $\mathcal{L}_\mathrm{DLE}$-algebras when we wish to emphasize that these algebras have a compatible signature with the logical language we have fixed.
	\end{definition}
In the remainder of the paper,
we will abuse notation and write e.g.\ $f$ for $f^\bba$ when this causes no confusion.
Normal DLEs constitute the main semantic environment of the present paper. Henceforth, since every DLE is assumed to be normal, the adjective will be typically dropped.

\begin{definition}
A DLE is {\em perfect} if $D$ is a perfect distributive lattice\footnote{A distributive lattice is {\em perfect} if it is complete, completely distributive and completely join-generated by the collection of its completely join-prime elements. Equivalently, a distributive lattice is perfect iff it is isomorphic to the lattice of upsets of some poset.}, and all the preservations and reversions mentioned above hold for arbitrary joins and meets.
\end{definition}

\begin{definition}\label{def:can:ext2.3}
The \emph{canonical extension} of a BDL $A$ is a complete BDL $A^\delta$ containing $A$ as a sublattice, such that:

\begin{enumerate}

\item \emph{(denseness)} every element of $A^\delta$ can be expressed both as a join of meets and as a meet of joins of elements from $A$;

\item \emph{(compactness)} for all $S,T \subseteq A$ with $\bigwedge S \leq \bigvee T$ in $A^\delta$, there exist some finite sets $F \subseteq S$ and $G\subseteq T$ s.t.\ $\bigwedge F \leq \bigvee G$.

\end{enumerate}

\end{definition}

It is well known that the canonical extension of a BDL is unique up to isomorphism (cf.\ \cite[Section 2.2]{GNV}), and that the canonical extension of a BDL is a perfect BDL (cf.\ \cite[Definition 2.14]{GNV})\label{canext bdl is perfect}.
An element $x \in \ca$ is \emph{closed} (resp.\ \emph{open}) if it is the meet (resp.\ join) of some subset of $A$. Let  $K(\ca)$ (resp.\ $O(\ca)$) be the set of closed (resp.\ open) elements of $\ca$. It is easy to see that the denseness condition in Definition \ref{def:can:ext2.3} implies that $J^{\infty}(\bbas)\subseteq K (A^\delta)$ and $M^{\infty}(\bbas)\subseteq O (A^\delta)$ (cf.\ \cite{GNV}, page 9).

Let $A,B$ be BDLs. An order-preserving map $f : A\rightarrow B$ can be extended to a map $:A^\delta\to B^{\delta}$ in two canonical ways. Let $f^\sigma$ and $f^\pi$ respectively denote the $\sigma$ and $\pi$\emph{-extension} of $f$ defined as follows:

\begin{definition}[cf.\ Remark 2.17 in \cite{GNV}]
If $f : \mathbb{A}\rightarrow\mathbb{B}$ is order-preserving, then for all $u\in A^\delta$, \\
\[f^\sigma (u) =\bigvee \{ \bigwedge \{f(a): x\leq a\in A\}: u \geq x \in K (\ca)\}\]
\[f^\pi (u) =\bigwedge \{ \bigvee \{f(a): y\geq a\in A\}: u \leq y \in O (\ca)\}.\]
\end{definition}



\begin{definition}
 For any DLE $\bba = (D, \mathcal{F}, \mathcal{G})$, its canonical extension $\mathbb{A}^{\delta}$  is defined as $A^{\delta}=(D^{\delta}, \mathcal{F}^\delta, \mathcal{G}^\delta )$, where $D^\delta$ is the canonical extension of the underlying BDL, the set $\mathcal{F}^\delta$ (resp. $\mathcal{G}^\delta$ consists of the map $f^\sigma$ (resp. map $g^\pi$) for every $f\in\mathcal{F}$.
\end{definition}

The canonical extension of a DLE is a perfect DLE (cf.\ Lemma 2.21 in \cite{GNV}).

\subsection{Inductive DLE and DLE$^{*}$ inequalities}

In this subsection we define the inductive inequalities in the two languages DLE and DLE$^{*}$ simultaneously. The definitions (Definition \ref{def:sgt}, Definition \ref{Def:Good:Branches} and Definition \ref{Inducive:Ineq:Def})  are the same, except that they refer to nodes in the left and right hand sides of table \ref{Join:and:Meet:Friendly:Table}, respectively.

\begin{table}[\here]
\begin{center}
\begin{tabular}{| c | c || c | c |}
\hline
Skeleton  &PIA &Skeleton  &PIA\\
\hline
$\Delta$-adjoints  & SRA  &$\Delta$-adjoints  & SRA \\
\begin{tabular}{ c c c c c  c}
$+$ &$\vee$ &$\wedge$ &$\phantom{\lhd}$ & &\\
$-$ &$\wedge$ &$\vee$\\
\hline
\end{tabular}
&
\begin{tabular}{c c c c}
$+$ &$\wedge$ &$g_{(n_g = 1)}$  \\
$-$ &$\vee$ &$f_{(n_f = 1)}$   \\
\hline
\end{tabular}
&
\begin{tabular}{ c c c c c  c}
$+$ &$\vee$ &$\wedge$ &$\phantom{\lhd}$ & &\\
$-$ &$\wedge$ &$\vee$\\
\hline
\end{tabular}
&
\begin{tabular}{c c c c c}
$+$ &$\wedge$ &$\boxdot$  &$\rhddot$  &$g_{(n_g = 1)}$  \\
$-$ &$\vee$ &$\diamdot$ &$\lhddot$  &$f_{(n_f = 1)}$ \\
\hline
\end{tabular}
\\
SLR  &SRR &SLR  &SRR\\
\begin{tabular}{c c c c c c}
$+$ & $\wedge$ &$f_{(n_f \geq 1)}$\\
$-$ & $\vee$ &$g_{(n_g \geq 1)}$\\
\end{tabular}
&\begin{tabular}{c c c c}
$+$ &$\vee$ &$g_{(n_g \geq 2)}$\\
$-$ & $\wedge$ &$f_{(n_f \geq 2)}$\\
\end{tabular}
&\begin{tabular}{c c c c c c}
$+$ & $\wedge$ & $\diamdot$ &$\lhddot$ &$f_{(n_f \geq 1)}$\\
$-$ & $\vee$ &$\boxdot$ &$\rhddot$ &$g_{(n_g \geq 1)}$ \\
\end{tabular}
&\begin{tabular}{c c c c}
$+$ &$\vee$ &$g_{(n_g \geq 2)}$\\
$-$ & $\wedge$ &$f_{(n_f \geq 2)}$\\
\end{tabular}
\\
\hline
\end{tabular}
\end{center}
\caption{Skeleton and PIA nodes for $\mathrm{DLE}$ and $\mathrm{DLE}^*$.}\label{Join:and:Meet:Friendly:Table}
\end{table}

\begin{definition}[Signed generation tree] \label{def:sgt}

A \emph{positive (resp.\ negative) signed generation tree} for a term $s$ is defined as follows:

\begin{itemize}

\item The root node $+s$ (resp.\ $-s$) is the root node of the positive (resp.\ negative) generation tree of $s$ signed with + (resp.\ $-$).
\item If a node is labelled with $ \lor,\land,\boxdot,\diamdot$, assign the same sign to its child node(s).
\item If a node is labelled with $\lhddot$, $\rhddot$, assign the opposite sign to its child node.
\item If a node is labelled with $f, g$, assign the same sign to its child node with order-type $1$ and different sign to its child node with order-type $\partial$.

\end{itemize}

We say that a node in the signed generation tree is \emph{positive} (resp.\ \emph{negative}), if it is signed $+$ (resp.\ $-$).

\end{definition}

For any term $s(p_1,\ldots p_n)$, any order-type $\epsilon$ over $n$, and any $1 \leq i \leq n$, an \emph{$\epsilon$-critical node} in a signed generation tree of $s$ is a leaf node $+p_i$ with $\epsilon_i = 1$ or $-p_i$ with $\epsilon_i = \partial$. An $\epsilon$-{\em critical branch} in the tree is a branch from an $\epsilon$-critical node. The intuition, which will be built upon later, is that variable occurrences corresponding to  $\epsilon$-critical nodes are \emph{to be solved for, according to $\epsilon$}.

For every term $s(p_1,\ldots p_n)$ and every order-type $\epsilon$, we say that $+s$ (resp.\ $-s$) {\em agrees with} $\epsilon$, and write $\epsilon(+s)$ (resp.\ $\epsilon(-s)$), if every leaf in the signed generation tree of $+s$ (resp.\ $-s$) is $\epsilon$-critical.
In other words, $\epsilon(+s)$ (resp.\ $\epsilon(-s)$) means that all variable occurrences corresponding to leaves of $+s$ (resp.\ $-s$) are to be solved for according to $\epsilon$. Finally, we will write $\epsilon^\partial(\gamma) \prec \ast s$ (resp.\ $\epsilon(\gamma_h) \prec \ast s$) to indicate that the signed subtree $\gamma$, with the sign  inherited from  $\ast s$, agrees with $\epsilon$ (resp.\ with $\epsilon^\partial$).

\begin{definition}[Good and excellent branches]
\label{Def:Good:Branches}
Nodes in signed generation trees will be called \emph{$\Delta$-adjoints}, \emph{syntactically left residual (SLR)}, \emph{syntactically right residual (SRR)}, and \emph{syntactically right adjoint (SRA)}, according to the specification given in table \ref{Join:and:Meet:Friendly:Table}. We will find it useful to group these classes as \emph{Skeleton} and \emph{PIA}\footnote{The acronym PIA stands for ``Positive Implies Atomic'', and was introduced by van Benthem in \cite{vanbenthem2005}. The crucial model-theoretic property possessed by PIA-formulas is the intersection property, isolated in \cite{vanbenthem2005}, which means that a formula, seen as an operation on the complex algebra of a frame, commutes with arbitrary intersections of subsets.} as indicated in the table. 
A branch in a signed generation tree $\ast s$, with $\ast \in \{+, - \}$, is called a \emph{good branch} if it is the concatenation of two paths $P_1$ and $P_2$, one of which may possibly be of length $0$, such that $P_1$ is a path from the leaf consisting (apart from variable nodes) only of PIA-nodes, and $P_2$ consists (apart from variable nodes) only of Skeleton-nodes. A branch is \emph{excellent} if it is good and in $P_1$ there are only SRA-nodes. A good branch is \emph{Skeleton} if the length of $P_1$ is $0$ (hence Skeleton branches are excellent), and  is {\em SLR}, or {\em definite}, if  $P_2$ only contains SLR nodes.
\end{definition}
\begin{remark}\label{Remark:Pos:VS:Neg:Classification}
The classification above follows the general principles of unified correspondence as discussed in \cite{CoGhPa14}. The subclassification of nodes as SLR, SRR, SRA and $\Delta$-adjoints refers to the inherent order theoretic properties of the operations interpreting these connectives, whereas the grouping of these classifications into Skeleton and PIA nodes obeys a functional rationale. Indeed, as we will see later, the reduction strategy involves roughly two tasks, namely approximation and display. The order theoretic properties of Skeleton nodes facilitate approximation while those of PIA nodes facilitate display.  
\end{remark}

\begin{definition}[Inductive inequalities]
			\label{Inducive:Ineq:Def}
			For any order type $\epsilon$ and any irreflexive and transitive relation $\Omega$ on $p_1,\ldots p_n$, the signed generation tree $*s$ $(* \in \{-, + \})$ of a term $s(p_1,\ldots p_n)$ is \emph{$(\Omega, \epsilon)$-inductive} if
			\begin{enumerate}
				\item for all $1 \leq i \leq n$, every $\epsilon$-critical branch with leaf  $p_i$ is good (cf.\ Definition \ref{Def:Good:Branches});
				\item  every $m$-ary SRR-node occurring in the branch is of the form  $ \circledast(\gamma_1,\dots,\gamma_{j-1},\beta,\gamma_{j+1}\ldots,\gamma_m)$,  where for any $h\in\{1,\ldots,m\}\setminus j$:  
\begin{enumerate}
\item  $\epsilon^\partial(\gamma_h) \prec \ast s$ (cf.\ discussion before Definition \ref{Def:Good:Branches}), and
%
\item $p_k <_{\Omega} p_i$ for every $p_k$ occurring in $\gamma_h$ and for every $1\leq k\leq n$.
\end{enumerate}

			\end{enumerate}
			
			We will refer to $<_{\Omega}$ as the \emph{dependency order} on the variables. An inequality $s \leq t$ is \emph{$(\Omega, \epsilon)$-inductive} if the signed generation trees $+s$ and $-t$ are $(\Omega, \epsilon)$-inductive. An inequality $s \leq t$ is \emph{inductive} if it is $(\Omega, \epsilon)$-inductive for some $\Omega$ and $\epsilon$.
		\end{definition}

\begin{definition}[Sahlqvist inequalities]
\label{Sahlqvist:Ineq:Def}
Given an order type $\epsilon$, the signed generation tree $\ast s$, $\ast \in \{-, + \}$, of a term $s(p_1,\ldots p_n)$ is \emph{$\epsilon$-Sahlqvist} if every $\epsilon$-critical branch is excellent (cf.\ Definition \ref{Def:Good:Branches}). An inequality $s \leq t$ is \emph{$\epsilon$-Sahlqvist} if the trees $+s$ and $-t$ are both $\epsilon$-Sahlqvist.  An inequality $s \leq t$ is \emph{Sahlqvist} if it is $\epsilon$-Sahlqvist for some $\epsilon$.
\end{definition}

\begin{remark}
\label{rem:ghilardi suzuki}  Ghilardi-Meloni's logical setting in \cite{GhMe97} is a bi-intuitionistic modal logic the language of which, in the notation of the present paper, corresponds to the DLE language arising from $\mathcal{F}: = \{\pdra, \Diamond\}$ and $\mathcal{G}: = \{\rightarrow, \Box\}$, with $n_{\Diamond} = n_{\Box} = 1$, $n_{\rightarrow} = n_{\pdra} = 2$,  $\epsilon_{\Diamond} = \epsilon_{\Box} = 1$, $\epsilon_{\rightarrow} = \epsilon_{\pdra}=(\partial, 1)$. The basic logic treated in \cite{GhMe97} is the normal DLE logic with the additional requirements that $\rightarrow$ (resp.\ $\pdra$) is the right (resp.\ left) residual of $\wedge$ (resp.\ $\vee$), and $\Diamond\dashv \Box$. As  mentioned earlier on, the main canonicity result in \cite{GhMe97}, Theorem 7.2, is formulated purely in order-theoretic terms. However, a syntactically defined class of terms in the bi-intuitionistic modal language to which Theorem 7.2 applies can be extracted  from the subsequent Propositions 7.3--7.7. This class  can be described in terms of the two families defined by simultaneous recursion as follows:
\[ \cup\mbox{-terms}\quad s:: = b\mid r \mid s\wedge s\mid s\vee s\mid t\pdra s\mid \Diamond s\]
\[ \cap\mbox{-terms}\quad  t:: = d\mid \ell \mid t\wedge t\mid t\vee t\mid s\rightarrow t \mid \Box t\]
where $b$ (resp.\ $d$) denotes a {\em box-term} (resp.\ {\em diamond-term}), namely, a  term $b = b(x)$ (resp.\ $d = d(x)$) in a single positive variable $x$ such that the following equations hold in every appropriate DLE:
\[b(x\wedge y) = b(x)\wedge b(y)\quad b(\top) = \top \quad d(x\vee y) = d(x)\vee d(y)\quad d(\bot) = \bot\]
and $\ell$ (resp.\ $r$) denotes a {\em left triangle-term} (resp.\ {\em right triangle-term}), namely, a  term $\ell = \ell(x)$ (resp.\ $r = r(x)$) in a single negative variable $x$ such that the following equations hold in every appropriate DLE:
\[\ell(x\wedge y) = \ell(x)\vee \ell(y)\quad \ell(\top) = \bot \quad r(x\vee y) = r(x)\wedge r(y)\quad d(\bot) = \top.\]
The terms proven to be canonical in \cite[Theorem 7.2]{GhMe97}  are of the form \[\phi = t(\overline{x},  \overline{a(\overline{x})}/\overline{z})\] such that $t(\overline{x}, \overline{z})$ is a $\cap$-term, and  there exists some order-type $\epsilon$ on $\overline{x}$ such that $\epsilon^{\partial}(a(\overline{x}))\prec \phi$ for each $a$ in $\overline{a}$.

From the description of  $\phi$ given above, it is easy to see that the inequality $\top\leq \phi$ matches almost perfectly the definition of $\epsilon$-Sahlqvist inequality (cf.\ Definition \ref{Sahlqvist:Ineq:Def}). The differences concern exclusively the box-, diamond-, left triangle-, and right triangle-terms, which in \cite{GhMe97} have not been defined  recursively, but rather in terms of their order-theoretic properties. Notice however that these order-theoretic properties are those characterizing the SRA terms. Hence, in the terminology of the present paper,  box-, diamond-, left triangle- and right triangle-terms are  PIA-terms, and certainly the restricted subclass of PIA-terms allowed in Definition \ref{Sahlqvist:Ineq:Def} above would fit in those. However, Ghilardi and Meloni allow slightly more. Specifically, below we give a recursive definition which clarifies to which extent Ghilardi-Meloni's class extends Definition \ref{Sahlqvist:Ineq:Def}, while being a special case of Definition \ref{Inducive:Ineq:Def}. Below, $c$ denotes a constant term.
\[ \mbox{box- and right triangle-terms}\quad u:: = p\mid \top \mid\bot \mid  u\wedge u\mid c\vee u\mid c\rightarrow u\mid  v\rightarrow c\mid \Box u\]
\[ \mbox{diamond and left triangle-terms}\quad  v:: = p\mid \top \mid\bot \mid v\vee v\mid c\wedge v\mid c\pdra v \mid u\pdra c \mid \Diamond v.\]

In order to understand the recursive  definition above, notice that when the arguments of a non-unary  SRR node are all constant but one, the order-theoretic behaviour of that node becomes essentially the same as that of an SRA node, and hence the whole term can be treated as a Sahlqvist PIA-term.
Unlike the recursive definition above, Definition \ref{Inducive:Ineq:Def} allows the immediate subterms of binary SRR nodes to be {\em both} non-constant terms, provided the requirements expressed in terms of the dependency order $\Omega$ among variables are satisfied. For non-unary maps, adjunction and residuation are different and logically unrelated properties. Hence, the order-theoretic underpinning of inductive inequalities is substantially different and more general from that of Sahlqvist inequalities. Summing up, Ghilardi-Meloni's class is intermediate between the Sahlqvist and inductive class defined in the present paper, although the order-theoretic underpinning of Ghilardi-Meloni's definition is the very same as that of Definition \ref{Sahlqvist:Ineq:Def}.
%
\end{remark}

%
%
%

\subsection{The algorithm \textsf{ALBA} for DLE and DLE$^*$}

The versions of ALBA relative to DLE and DLE$^*$ run as detailed in \cite{CoPa12}. In a nutshell, DLE-inequalities (resp.\ in DLE$^*$-inequalities) are  transformed into equivalent DLE$^+$ quasi-inequalities in  (resp.\  DLE$^{*+}$ quasi-inequalities) with the aim of eliminating propositional variable occurrences via the application of Ackermann rules. We refer the reader to \cite{CoPa12} for a fully detailed account. In what follows, we illustrate how ALBA works, while at the same time we introduce its rules. The proof of the soundness and invertibility of the general rules for the BDL setting is discussed in \cite{CoPa12, CoGhPa14}. We refer the reader to these discussions, and we do not elaborate further on this topic.

ALBA manipulates input inequalities $\phi\leq\psi$ and proceeds in three stages:

\paragraph{First stage: preprocessing and first approximation.}

ALBA preprocesses the input inequality $\phi\leq \psi$ by performing the following steps
exhaustively in the signed generation trees $+\phi$ and $-\psi$:

\begin{enumerate}
\item
\begin{enumerate}
\item Push down, towards variables, occurrences of $+f$ for $\epsilon_f(i)=1$, $+\land$, $-g$ for $\epsilon_g(i) = \partial$, $+\diamdot, -\rhddot$ by distributing them over nodes labelled with $+\lor$ which are not below PIA nodes, and

\item Push down, towards variables, occurrences of $-g$ for $\epsilon_g(i)=1$, $-\lor$ and $+ f$ for $\epsilon_f(i) = \partial$, $-\boxdot, +\lhddot$ by distributing them over nodes labelled with $-\land$ which are not below PIA nodes.

\end{enumerate}

\item Apply the splitting rules:

$$\infer{\alpha\leq\beta\ \ \ \alpha\leq\gamma}{\alpha\leq\beta\land\gamma}
\qquad
\infer{\alpha\leq\gamma\ \ \ \beta\leq\gamma}{\alpha\lor\beta\leq\gamma}
$$

\item Apply the monotone and antitone variable-elimination rules:

$$\infer{\alpha(\perp)\leq\beta(\perp)}{\alpha(p)\leq\beta(p)}
\qquad
\infer{\beta(\top)\leq\alpha(\top)}{\beta(p)\leq\alpha(p)}
$$

for $\beta(p)$ positive in $p$ and $\alpha(p)$ negative in $p$.

\end{enumerate}

Let $\mathsf{Preprocess}(\phi\leq\psi)$ be the finite set $\{\phi_i\leq\psi_i\mid 1\leq i\leq n\}$ of inequalities obtained after the exhaustive application of the previous rules. We proceed separately on each of them, and hence, in what follows, we focus only on one element $\phi_i\leq\psi_i$ in $\mathsf{Preprocess}(\phi\leq\psi)$, and we drop the subscript. Next, the following {\em first approximation rule} is applied {\em only once} to every inequality in $\mathsf{Preprocess}(\phi\leq\psi)$:

$$\infer{\nomi_0\leq\phi\ \ \ \psi\leq \cnomm_0}{\phi\leq\psi}
$$

Here, $\nomi_0$ and $\cnomm_0$ are a nominal and a co-nominal respectively. The first-approximation
step gives rise to systems of inequalities $\{\nomi_0\leq\phi_i, \psi_i\leq \cnomm_0\}$ for each inequality in $\mathsf{Preprocess}(\phi\leq\psi)$. Each such system is called an {\em initial
system}, and is now passed on to the reduction-elimination cycle.

\paragraph{Second stage: reduction-elimination cycle.}

The goal of the reduction-elimination cycle is to eliminate all propositional variables from the systems
which it receives from the preprocessing phase. The elimination of each variable is effected by an
application of one of the Ackermann rules given below. In order to apply an Ackermann rule, the
system must have a specific shape. The adjunction, residuation, approximation, and splitting rules are used to transform systems into this shape. The rules of the reduction-elimination cycle, viz.\ the adjunction, residuation, approximation, splitting, and Ackermann rules, will be collectively called the {\em reduction} rules.

\paragraph{Residuation rules. }

Here below we provide the residuation rules relative to each $f\in \mathcal{F}$ and  $g\in \mathcal{G}$: for each $1\leq h\leq n_f$ and each $1\leq k\leq n_g$:
\begin{center}
\begin{tabular}{cc}

\AxiomC{$f(\psi_1,\ldots,\psi_h, \ldots, \psi_{n_f})\leq \chi$}
\LeftLabel{(if $\varepsilon_f(h) = 1$)}
\UnaryInfC{$\psi_h\leq \overline{f}^{(h)}(\psi_1,\ldots,\chi, \ldots, \psi_{n_f})$}
\DisplayProof

&
\AxiomC{$f(\psi_1,\ldots,\psi_h, \ldots, \psi_{n_f})\leq \chi$}
\RightLabel{(if $\varepsilon_f(h) = \partial$)}
\UnaryInfC{$\overline{f}^{(h)}(\psi_1,\ldots,\chi, \ldots, \psi_{n_f})\leq \psi_h$}
\DisplayProof
\\
\end{tabular}
\end{center}

\begin{center}
\begin{tabular}{cc}

\AxiomC{$\chi \leq g(\psi_1,\ldots,\psi_k, \ldots, \psi_{n_g})$}
\LeftLabel{(if $\varepsilon_g(k) = \partial$)}
\UnaryInfC{$\psi_k \leq \underline{g}^{(k)}(\psi_1,\ldots,\chi, \ldots, \psi_{n_g})$}
\DisplayProof

&
\AxiomC{$\chi \leq g(\psi_1,\ldots,\psi_k, \ldots, \psi_{n_g})$}
\RightLabel{(if $\varepsilon_g(k) = 1$)}
\UnaryInfC{$\underline{g}^{(k)}(\psi_1,\ldots,\chi, \ldots, \psi_{n_g})\leq  \psi_k$}
\DisplayProof
\\
\end{tabular}
\end{center}

\paragraph{Adjunction rules.}

\begin{prooftree}
\AxiomC{${\diamdot} \phi\leq \psi $}\UnaryInfC{$\phi\leq {\boxdotb} \psi$}
\AxiomC{$\phi\leq \boxdot \psi $}\UnaryInfC{${\diamdotb} \phi\leq \psi$}
\AxiomC{${\lhddot} \phi\leq \psi $}\UnaryInfC{${\lhddotb} \psi\leq \phi$}
\noLine\TrinaryInfC{}
\AxiomC{$\phi\leq {\rhddot} \psi $}\UnaryInfC{$\psi\leq {\rhddotb}\phi$} \noLine\UnaryInfC{} \noLine\BinaryInfC{}
\end{prooftree}
In a given system, each of these rules replaces an instance of the upper inequality with the corresponding instances of the two lower inequalities.

The leftmost inequalities in each rule above will be referred to as the \emph{side condition}\label{def:sidecondition}.

\paragraph{Approximation rules.}

Approximation and adjunction rules for the additional connectives in DLE$^*$ are completely analogous to those of the DML-connectives in \cite{CoPa12}, namely:

\begin{prooftree}
\AxiomC{$\mathbf{i}\leq{\diamdot} \phi$}\UnaryInfC{$\mathbf{j}\leq
\phi\quad  \mathbf{i}\leq{\diamdot}  \mathbf{j}$} 
\AxiomC{$\boxdot \phi\leq \mathbf{m} $}\UnaryInfC{$ \phi\leq
\mathbf{n}\quad \boxdot\mathbf{n}\leq \mathbf{m}$}
\AxiomC{$\mathbf{i}\leq {\lhddot} \phi $}\UnaryInfC{$\phi\leq \mathbf{m}\quad
\mathbf{i}\leq {\lhddot}\mathbf{m}$} \noLine\TrinaryInfC{}
\AxiomC{${\rhddot} \phi\leq \mathbf{m} $}\UnaryInfC{$\mathbf{i}\leq \phi\quad
{\rhddot}\mathbf{i}\leq \mathbf{m} $} \noLine\UnaryInfC{}
\noLine\BinaryInfC{}
\end{prooftree}
The nominals and co-nominals introduced in the conclusions of the approximation rules must be fresh, in the sense that they may not occur anywhere in the system before the rule is applied.

\begin{center}
\AxiomC{$\nomi\leq f(\psi_1, \ldots, \psi_{n})$}
\UnaryInfC{$\nomi\leq f(\nomj_1^{\epsilon_1},\ldots,\nomj_{n}^{\epsilon_{n}})\quad\bigamp_{k=1}^{n}\nomj_k^{\epsilon_k}\leq^{\epsilon_k} \psi_k$}
\DisplayProof
\end{center}

\begin{center}
\AxiomC{$g(\psi_1, \ldots, \psi_{n})\leq \cnomm$}
\UnaryInfC{$g(\cnomn_1^{\epsilon_1},\ldots,\cnomn_{n}^{\epsilon_{n}})\leq \cnomm\quad\bigamp_{k=1}^{n}\psi_k\leq^{\epsilon_k} \cnomn_k^{\epsilon_k}$}
\DisplayProof
\end{center}

where  for each $1\leq k\leq n$, the variable $\nomj_k^{\epsilon_k}$ (resp.\ $\cnomn_k^{\epsilon_k}$) is a nominal (resp.\ a conominal) if $\epsilon_f(k)=1$ (resp.\ $\epsilon_g(k)=1$), and is a conominal (resp.\  a nominal) if $\epsilon_f(k)=\partial$ (resp.\ $\epsilon_g(k)=\partial$). Moreover, $\leq^{\epsilon_k}$ denotes $\leq$  if  $\epsilon_k = 1$, and denotes  $\geq$ if  $\epsilon_k = \partial$. The leftmost inequalities in each rule above will be referred to as the \emph{side condition}.

Each approximation rule transforms a given system $S\cup\{s\leq t\}$ into systems $S\cup\{s_1\leq t_1\}$ and $S\cup\{s_2\leq t_2, s_3\leq t_3\}$, the first of which containing only the side condition (in which no propositional variable occurs), and the second one containing the instances of the two remaining lower inequalities.

The nominals and co-nominals introduced by the approximation rules must be {\em fresh}, i.e.\ must not already occur in the system before applying the rule.

\paragraph{Ackermann rules.} These rules are the core of ALBA, since their application eliminates proposition variables. As mentioned earlier, all the preceding steps are aimed at equivalently rewriting the input system into one or more systems, each of which of a shape in which the Ackermann rules can be applied. An important feature of Ackermann rules is that they are executed on the whole set of inequalities in which a given variable occurs, and not on a single inequality.\\

\begin{center}
\AxiomC{$(\bigamp \{ \alpha_i \leq p \mid 1 \leq i \leq n \} \amp \bigamp \{ \beta_j(p)\leq \gamma_j(p) \mid 1 \leq j \leq m \} \; \Rightarrow \; \nomi \leq \cnomm$}
\RightLabel{$(RAR)$}
\UnaryInfC{$(\bigamp \{ \beta_j(\bigvee_{i=1}^n \alpha_i)\leq \gamma_j(\bigvee_{i=1}^n \alpha_i) \mid 1 \leq j \leq m \} \; \Rightarrow \; \nomi \leq \cnomm$}
\DisplayProof
\end{center}
where  $p$ does not occur in $\alpha_1, \ldots, \alpha_n$,  $\beta_{1}(p), \ldots, \beta_{m}(p)$ are positive in $p$, and $\gamma_{1}(p), \ldots, \gamma_{m}(p)$ are negative in $p$.

\begin{center}
\AxiomC{$(\bigamp \{ p \leq \alpha_i \mid 1 \leq i \leq n \} \amp \bigamp \{ \beta_j(p)\leq \gamma_j(p) \mid 1 \leq j \leq m \} \; \Rightarrow \; \nomi \leq \cnomm$}
\RightLabel{$(LAR)$}
\UnaryInfC{$(\bigamp \{ \beta_j(\bigwedge_{i=1}^n \alpha_i)\leq \gamma_j(\bigwedge_{i=1}^n \alpha_i) \mid 1 \leq j \leq m \} \; \Rightarrow \; \nomi \leq \cnomm$}
\DisplayProof
\end{center}
where $p$ does not occur in $\alpha_1, \ldots, \alpha_n$, $\beta_{1}(p), \ldots, \beta_{m}(p)$ are negative in $p$, and $\gamma_{1}(p), \ldots, \gamma_{m}(p)$ are positive in $p$.

\paragraph{Third stage: output.}

If there was some system in the second stage from which not all occurring propositional variables could be eliminated through the application of the reduction rules, then ALBA reports failure and terminates. Else, each system $\{\nomi_0\leq\phi_i, \psi_i\leq \cnomm_0\}$ obtained from $\mathsf{Preprocess}(\varphi\leq \psi)$ has been reduced to a system, denoted $\mathsf{Reduce}(\varphi_i\leq \psi_i)$, containing no propositional variables. Let ALBA$(\varphi\leq \psi)$ be the set of quasi-inequalities \begin{center}{\Large{\&}}$[\mathsf{Reduce}(\varphi_i\leq \psi_i) ]\Rightarrow \nomi_0 \leq \cnomm_0$\end{center} for each $\varphi_i \leq \psi_i \in \mathsf{Preprocess}(\varphi\leq \psi)$.

Notice that all members of ALBA$(\varphi\leq \psi)$ are free of propositional variables. ALBA returns ALBA$(\varphi\leq \psi)$ and terminates. An inequality $\varphi\leq \psi$ on which ALBA succeeds will be called an ALBA-{\em inequality}.

The proof of the following theorem is a straightforward generalization of \cite[Theorem 10.11]{CoPa12}, and hence its proof is omitted.
\begin{thm}\label{Thm:ALBA:Success:Inductive}
For every DLE-type language $\mathcal{L}$, its corresponding version of ALBA  succeeds on all inductive $\mathcal{L}$-inequalities, and hence all these inequalities are canonical and the corresponding logics are complete with respect to elementary classes of relational structures.
\end{thm}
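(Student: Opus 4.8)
The plan is to establish the three assertions of the theorem in turn: first that ALBA terminates successfully on every inductive inequality, and then that \emph{success} of ALBA yields both canonicity and completeness with respect to an elementary class. The last two implications are generic consequences of the soundness and invertibility of the individual ALBA rules, which for the BDL setting are recorded in \cite{CoPa12,CoGhPa14} and which I am entitled to assume here; hence the genuinely new content is the combinatorial \emph{success} argument, which adapts \cite[Theorem 10.11]{CoPa12} to the present DLE signature.

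\emph{Success.} Let $s\leq t$ be $(\Omega,\epsilon)$-inductive. After the preprocessing and first-approximation stage I fix one initial system and eliminate the propositional variables one at a time along a linear order refining the strict order $\Omega$, removing $\Omega$-maximal variables first. To eliminate a fixed $p_i$, I use that by clause (1) of Definition \ref{Inducive:Ineq:Def} every $\epsilon$-critical branch for $p_i$ is good, hence factors (cf.\ Definition \ref{Def:Good:Branches}) into a Skeleton path $P_2$ adjacent to the root followed by a PIA path $P_1$ reaching the leaf. First I apply the approximation rules to the Skeleton nodes, whose SLR and $\Delta$-adjoint behaviour (complete join- or meet-preservation in the relevant coordinate, cf.\ Table \ref{Join:and:Meet:Friendly:Table}) justifies peeling off fresh nominals and conominals from the root downwards, until each critical branch becomes an inequality one side of which is the PIA term carried by $P_1$. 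Then I apply the residuation and adjunction rules to the nodes of $P_1$ so as to \emph{display}, i.e.\ solve for, the critical occurrence of $p_i$, obtaining $\alpha\leq p_i$ or $p_i\leq\alpha$. Clause (2) of Definition \ref{Inducive:Ineq:Def} guarantees this leaves the system in Ackermann shape: each $m$-ary SRR node along $P_1$ has the form $\circledast(\gamma_1,\dots,\beta,\dots,\gamma_m)$ with $\beta$ on the critical branch, and residuating it moves the $\gamma_h$ to the opposite side; by (2)(b) every variable in $\gamma_h$ is strictly $\Omega$-below $p_i$, so by irreflexivity $p_i$ does not occur in $\gamma_h$, while by (2)(a) the polarity constraint $\epsilon^\partial(\gamma_h)\prec\ast s$ ensures the displaced $\gamma_h$ carry the sign needed for the remaining, $\Omega$-smaller variables to be solved later. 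The appropriate Ackermann rule (Lemma \ref{Right:Ackermann} and its order-dual) then eliminates $p_i$, and since $\Omega$ is a strict partial order, iterating over a linearization of $\Omega$ removes every propositional variable, so ALBA succeeds and returns a pure quasi-inequality.

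\emph{Canonicity and completeness.} For canonicity I would run the standard $U$-shaped argument through the canonical extension $A^\delta$ of an arbitrary DLE $A$, which is a perfect DLE. Admissible validity $A\models s\leq t$ is equivalent to the validity of $s\leq t$ in $A^\delta$ under assignments into $A$; the reduction rules preserve this admissible validity --- here one uses the compactness and denseness of $A^\delta$ together with the fact that the terms produced along the run are syntactically closed or open, so that the topological Ackermann lemma and the Esakia-type lemmas of \cite{CoPa12} apply --- while on the \emph{perfect} algebra $A^\delta$ the same rules are sound and invertible for unrestricted assignments. Since the output is pure, its admissible and unrestricted validity on $A^\delta$ coincide; chaining the equivalences yields $A\models s\leq t \Rightarrow A^\delta\models s\leq t$, which is canonicity. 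Finally, being pure, the output quasi-inequality translates --- via the standard translation extended to nominals, conominals and the adjoint/residual connectives, exactly as illustrated for the Church--Rosser example in Section \ref{subseq:informal} --- into a first-order sentence over the relational structures dual to perfect DLEs, so the class of structures validating $s\leq t$ is elementary; together with canonicity this gives completeness of the associated logic with respect to that elementary class.

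\emph{Main obstacle.} The crux is the success argument: one must check that for the general DLE signature, with connectives in $\mathcal{F}$ and $\mathcal{G}$ of arbitrary arity and order-type, the approximation rules always apply along Skeleton paths and the residuation and adjunction rules always reach Ackermann shape along PIA paths, with the dependency order $\Omega$ correctly controlling the non-principal SRR arguments. The bookkeeping of signs and order-types, and the interaction between the elimination order and $\Omega$, is the delicate part, although conceptually it runs parallel to \cite[Theorem 10.11]{CoPa12}.
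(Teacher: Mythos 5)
Your proposal is correct in substance and reconstructs the argument which the paper itself omits, deferring to \cite[Theorem 10.11]{CoPa12}; but your execution of the success step runs in the opposite direction to that reference proof, and this is worth spelling out. The proof the paper points to eliminates variables \emph{minimal-first}: one shows that preprocessing and first approximation yield systems in a `definite good shape', that this shape is preserved by the reduction rules, and that exhaustive rule application produces a system in reduced $(\Omega,\epsilon)$-Ackermann form, whereupon the Ackermann rule removes all $\Omega$-minimal variables at once. Since the side formulas $\gamma_h$ on the critical branches of an $\Omega$-minimal variable contain no variables at all (clause 2(b) of Definition \ref{Inducive:Ineq:Def}), the minimal valuations in that order are \emph{pure}, so each Ackermann substitution introduces no new variable occurrences and the induction along $\Omega$ needs essentially no bookkeeping. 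You instead eliminate \emph{maximal-first}, so your minimal valuations carry the $\Omega$-smaller variables as parameters. This is legitimate: Lemma \ref{Right:Ackermann} requires only that $\alpha$ be $p$-free, which 2(b) plus irreflexivity secures, and your appeal to 2(a) correctly explains why the displaced $\gamma_h$ land with the parametric polarity needed for the later rounds. However, your route needs one further check that your write-up leaves implicit: after substituting the non-pure minimal valuation for the non-critical occurrences of $p_i$, the newly created occurrences of a smaller variable $p_k$ must not land inside side formulas on $p_k$'s \emph{own} critical branches, since otherwise they would later enter $p_k$'s minimal valuation and destroy its $p_k$-freeness. This is excluded by 2(b) together with transitivity and irreflexivity of $\Omega$: a non-critical occurrence of $p_i$ inside a side formula on a critical branch of $p_k$ would force $p_i<_\Omega p_k$, while $p_k$ occurring in $p_i$'s minimal valuation already forces $p_k<_\Omega p_i$. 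With that observation added, your top-down induction goes through; what the bottom-up order of the reference proof buys is precisely the avoidance of this polarity-and-location propagation, at the price of the good-shape invariant machinery. Your canonicity and elementarity halves (U-shaped argument under admissible assignments via the topological Ackermann and Esakia lemmas, and the standard translation of the pure output) coincide with the standard package the paper invokes. One cosmetic slip: the $\Delta$-adjoint Skeleton nodes $+\vee$ and $-\wedge$ are handled by distribution and splitting, not by the nominal-introducing approximation rules, which pertain to the $f$- and $g$-nodes.
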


\section{Pseudo-correspondence and relativized canonicity and correspondence}\label{Sec:pseudo}

In the present section, we give an account of the proof in \cite{Yv98} of the canonicity of additivity for positive terms, via pseudo-correspondence. Our presentation differs from the one in \cite{Yv98} in some respects. These differences will be useful to motivate the results in the following sections. Namely, our presentation is set in the context of algebras and their canonical extensions, rather than in the original setting of descriptive general frames and their underlying Kripke structures. This makes it possible to establish an explicit link between the proof-strategy of canonicity via pseudo-correspondence in \cite{Yv98} and unified correspondence theory. Specifically, our account of canonicity via pseudo-correspondence is given in terms of an ALBA-type reduction, and the pseudo-correspondent of a given modal formula is defined as a quasi inequality in the language $\mathrm{DLE}^{++}$, which is the expansion of $\mathrm{DLE}^+$ with the connectives $\diam_\pi$, $\Box_\sigma$, ${\lhd}_{\lambda}$ and ${\rhd}_\rho$, and their respective adjoints $\blacksquare_\pi$, $\Diamondblack_\sigma$, ${\blhd}$ and ${\brhd}$. Another difference between \cite{Yv98} and the present account is that here the Boolean setting does not play an essential role. Indeed,  the present treatment holds for general DLEs.

\begin{definition}[\cite{Yv98}, Definition 1.1]\label{def:pseudo Yde}
A modal formula $\varphi$  and a  first order sentence $\alpha$ are \emph{canonical pseudo-correspondents} if the following conditions hold for any descriptive general frame $\mathfrak{g}$ and any Kripke frame $\mathcal{F}$:
\begin{enumerate}
\item
if $\mathfrak{g}\Vdash\varphi$, then $\mathfrak{g}^{\sharp}\models\alpha$, where $\mathfrak{g}^{\sharp}$ denotes the underlying (Kripke) frame of $\mathfrak{g}$;
\item
if $\mathcal{F}\models \alpha$ then $\mathcal{F}\Vdash\varphi$.
\end{enumerate}
\end{definition}

This definition can be better understood in the context of the familiar canonicity-via-correspondence argument, illustrated by the diagram below:

\begin{center}

\begin{tabular}{l c l}\label{table:U:shape}
$\mathfrak{g}\vDash\varphi$ & &$\mathfrak{g}^\sharp\vDash\varphi$\\

$\ \ \Updownarrow$ & &$ \ \ \ \Updownarrow $\\

$\mathfrak{g}\vDash\alpha$

&\ \ \ $\Leftrightarrow$ \ \ \ &$\mathfrak{g}^\sharp\vDash\alpha$.\\

\end{tabular}
\end{center}
 Indeed, if  $\alpha$ is a first-order frame correspondent of  $\varphi$, then the U-shaped chain of equivalences illustrated in the diagram above holds,\footnote{The horizontal equivalence in the diagram holds since $\alpha$ is a sentence in the first order language of Kripke structures, and hence its validity does not depend on assignments of atomic propositions.} which implies the canonicity of $\varphi$. The observation motivating the definition of pseudo-correspondents is that, actually, less is needed: specifically,  the arrows of the following diagram are already enough to guarantee the canonicity of $\varphi$:

\begin{center}

\begin{tabular}{l c l}\label{table:U:shape}
$\mathfrak{g}\vDash\varphi$ & &$\mathfrak{g}^\sharp\vDash\varphi$\\

$\ \ \Downarrow$ & &$\ \ \ \Uparrow $\\

$\mathfrak{g}\vDash\alpha'$

&\ \ \ $\Leftrightarrow$ \ \ \ &$\mathfrak{g}^\sharp\vDash\alpha'$.\\

\end{tabular}
\end{center}
The conditions 1 and 2 of the definition above precisely make sure that the implications in the diagram above hold. Thus, we have the following:

\begin{prop}

If $\phi$ and $\alpha$ are canonical pseudo-correspondents, then $\phi$ is canonical.

\end{prop}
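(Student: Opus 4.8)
The plan is to unpack the two defining conditions of canonical pseudo-correspondence and chain them through the canonical extension (equivalently, the descriptive general frame setting) exactly as the two-arrow diagram preceding the statement suggests. Let $\mathfrak{g}$ be an arbitrary descriptive general frame and let $\mathfrak{g}^\sharp$ denote its underlying Kripke frame. The goal of canonicity is to show that whenever $\mathfrak{g}\Vdash\varphi$ holds (validity in the admissible-subset sense), then $\mathfrak{g}^\sharp\Vdash\varphi$ holds (validity with arbitrary subsets as assignments), since this is precisely the statement that $\varphi$ is preserved under the canonical extension.

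First I would assume $\mathfrak{g}\Vdash\varphi$. By condition 1 of Definition \ref{def:pseudo Yde}, this yields $\mathfrak{g}^\sharp\models\alpha$. The crucial observation, recorded in the footnote to the U-shaped diagram, is that $\alpha$ is a \emph{sentence} in the first-order language of Kripke structures: its truth value does not depend on any assignment of the propositional variables, and it is evaluated purely on the relational structure. Hence the validity of $\alpha$ in $\mathfrak{g}$ and in $\mathfrak{g}^\sharp$ coincide — this is the horizontal equivalence $\mathfrak{g}\vDash\alpha \Leftrightarrow \mathfrak{g}^\sharp\vDash\alpha$ in the diagram — so in particular $\mathcal{F}\models\alpha$ for the Kripke frame $\mathcal{F} = \mathfrak{g}^\sharp$. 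Finally, applying condition 2 of Definition \ref{def:pseudo Yde} with $\mathcal{F} = \mathfrak{g}^\sharp$ gives $\mathfrak{g}^\sharp\Vdash\varphi$, which completes the chain.

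Stringing these together, the argument traverses exactly the two arrows and the horizontal equivalence of the second diagram: $\mathfrak{g}\Vdash\varphi \Rightarrow \mathfrak{g}\vDash\alpha \Leftrightarrow \mathfrak{g}^\sharp\vDash\alpha \Rightarrow \mathfrak{g}^\sharp\Vdash\varphi$. Since $\mathfrak{g}$ was arbitrary, $\varphi$ is canonical.

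I do not expect any serious obstacle, as the proposition is essentially an unwinding of the definition designed to make precisely this implication go through; the entire content has been front-loaded into the two conditions of Definition \ref{def:pseudo Yde}. The only point requiring a word of care is the horizontal step, where one must note that $\alpha$ being a first-order \emph{sentence} (rather than a formula with free second-order/propositional parameters) is what licenses transferring its validity between $\mathfrak{g}$ and its underlying frame $\mathfrak{g}^\sharp$ without reference to the admissible sets; this is exactly the role flagged in the footnote. Everything else is a direct substitution of $\mathfrak{g}^\sharp$ for the frame $\mathcal{F}$ in condition 2.
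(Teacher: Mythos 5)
Your proof is correct and matches the paper's own justification, which is precisely the two-arrow diagram preceding the proposition: condition 1 of Definition \ref{def:pseudo Yde} gives $\mathfrak{g}\Vdash\varphi \Rightarrow \mathfrak{g}^\sharp\models\alpha$, and condition 2, instantiated with $\mathcal{F}=\mathfrak{g}^\sharp$, gives $\mathfrak{g}^\sharp\models\alpha \Rightarrow \mathfrak{g}^\sharp\Vdash\varphi$, so validity of $\varphi$ transfers from $\mathfrak{g}$ to $\mathfrak{g}^\sharp$, which is canonicity. One minor observation: since condition 1 already concludes $\mathfrak{g}^\sharp\models\alpha$ directly on the underlying frame, your appeal to the horizontal equivalence $\mathfrak{g}\vDash\alpha \Leftrightarrow \mathfrak{g}^\sharp\vDash\alpha$ is harmless but strictly redundant in this direction of the argument (it is what motivates the definition, not a step needed in the proof).
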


Definition \ref{def:pseudo Yde} above straightforwardly generalizes to the algebraic setting introduced in Subsection \ref{subset:language:algsemantics} as follows:
\begin{definition}
\label{def:alg pseudo}
A DLE-inequality $\varphi\leq\psi$ and a  pure quasi-inequality $\alpha$ in DLE$^{++}$  are \emph{canonical algebraic pseudo-correspondents} if the following conditions hold for every DLE $A$ and every perfect DLE $B$:
\begin{enumerate}
\item
if  $A\models\varphi\leq\psi$, then $\ca\models \alpha$;
\item
if $B\models \alpha$, then $B\models \varphi\leq \psi$.
\end{enumerate}
\end{definition}

\begin{lemma}

If $\varphi\leq \psi$ and $\alpha$ are canonical algebraic pseudo-correspondents, then $\varphi\leq\psi$ is canonical.

\end{lemma}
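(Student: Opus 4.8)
The goal is to show that if $\varphi \leq \psi$ and $\alpha$ are canonical algebraic pseudo-correspondents, then $\varphi \leq \psi$ is canonical. By definition, canonicity means that for every DLE $A$, we have $A \models \varphi \leq \psi$ implies $\ca \models \varphi \leq \psi$, where $\ca$ is the canonical extension of $A$. The plan is to establish this implication by composing the two conditions from Definition \ref{def:alg pseudo} with the two crucial structural facts about canonical extensions recorded earlier in the preliminaries.

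First I would fix an arbitrary DLE $A$ and assume $A \models \varphi \leq \psi$. Applying condition 1 of Definition \ref{def:alg pseudo} directly, I obtain $\ca \models \alpha$. This is the step that transfers the validity of the original inequality on $A$ to the validity of the pure pseudo-correspondent $\alpha$ on the canonical extension $\ca$. Crucially, $\alpha$ is a \emph{pure} quasi-inequality, meaning it contains no propositional variables, so its validity is insensitive to assignments of proposition letters; this purity is exactly what makes it behave, in the algebraic setting, like the first-order frame condition in Venema's original U-shaped diagram.

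Next I would invoke the fact, stated immediately after Definition \ref{def:DLE} of the canonical extension, that the canonical extension of a DLE is a perfect DLE (cf.\ Lemma 2.21 in \cite{GNV}). Hence $\ca$ is itself a perfect DLE. This allows me to apply condition 2 of Definition \ref{def:alg pseudo} with $B := \ca$: since $B = \ca$ is perfect and $B \models \alpha$ (just established), I conclude $B \models \varphi \leq \psi$, i.e.\ $\ca \models \varphi \leq \psi$. Chaining these two implications through the intermediate validity of $\alpha$ on $\ca$ yields precisely $A \models \varphi \leq \psi \Rightarrow \ca \models \varphi \leq \psi$, which is the definition of canonicity of $\varphi \leq \psi$.

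The argument is essentially a direct transcription of the arrows in the second diagram above (the weakened version of the canonicity-via-correspondence square), now read algebraically: condition 1 supplies the left downward arrow, the perfection of $\ca$ together with condition 2 supplies the right upward arrow, and the horizontal equivalence is replaced by the trivial observation that $\alpha$ is the same pure statement on both sides. There is no genuine obstacle here beyond bookkeeping; the only point requiring care is to confirm that $\ca$ qualifies as the algebra $B$ in condition 2, which is guaranteed precisely because canonical extensions of DLEs are perfect DLEs. The brevity of the proof is the whole point: all the real work is deferred into verifying, in the subsequent sections, that a suitable pair $(\varphi \leq \psi, \alpha)$ actually \emph{are} pseudo-correspondents, which is where the ALBA-type reduction and the order-theoretic lemmas enter.
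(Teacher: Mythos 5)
Your proof is correct and follows essentially the same route as the paper: the paper's own proof is precisely the U-shaped diagram you describe, with condition 1 of Definition \ref{def:alg pseudo} giving the left downward arrow, the purity of $\alpha$ giving the horizontal step, and the fact that $\ca$ is a perfect DLE licensing the instantiation $B := \ca$ in condition 2 for the right upward arrow. No gaps.
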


\begin{proof}
Similar to the discussion above, using the following diagram:
\begin{center}

\begin{tabular}{l c l}\label{table:U:shape:algebra}
$\bbas\models_{\bba}\phi\leq\psi$ & &$\bbas\models\phi\leq\psi$\\

$\ \ \ \ \ \Downarrow$ & &$\  \ \ \ \ \Uparrow $\\

$\bbas\models_{\bba}\alpha$

&\ \ \ $\Leftrightarrow$ \ \ \ &$\bbas\models\alpha$.\\

\end{tabular}
\end{center}
In the diagram above, the notation $\models_{\bba}$ refers to validity restricted to assignments mapping atomic propositions to elements of the DLE $A$.
\end{proof}

In the remainder of the present subsection, we will prove the canonicity of the inequality $\pi(p\lor q)\leq \pi(p)\lor\pi(q)$  for any  positive term  $\pi$, by using the strategy provided by the lemma above. That is,  by proving that $\pi(p\lor q)\leq \pi(p)\lor\pi(q)$ and the following pure quasi-inequality are canonical algebraic pseudo-correspondents.
\begin{equation}\label{eq:def:cpi}
C(\pi)=\forall\cnomm[\pi(\bot)\leq\cnomm\Rightarrow\pi(\blacksquare_{\pi}\cnomm)\leq\cnomm],\end{equation}

where the new connective $\blacksquare_{\pi}$ is interpreted in any perfect DLE $B$ as the  operation defined by the assignment $u\mapsto \blacksquare_{\pi} u: = \bigvee\{i\in J^\infty(B)\mid \pi^B(i)\leq u\}.$

Hence, in the light of the discussion in Subsection \ref{subseq:informal}, it is not difficult to see that, under the standard translation, for every conominal variable $\cnomm$, the term  $\blacksquare_{\pi}\cnomm$ denotes a first-order definable set in any Kripke frame. Indeed, if $\cnomm$ is interpreted as $W\setminus \{v\}$ for some state $v$, then  $\blacksquare_{\pi}\cnomm$ is interpreted as the set of all the states $w$ such that $v$ does not belong to the set defined  by the standard translation of $\pi$ in which the predicate variable $P$ is substituted for the description of the singleton set $\{w\}$.

Hence, via the standard translation, the pure quasi-inequality  $C(\pi)$ can be identified with a first order sentence.

The following proposition is the algebraic counterpart of \cite[Proposition 2.1]{Yv98}, and shows that $\pi(p\lor q)\leq\pi(p)\lor\pi(q)$ and $C(\pi)$ satisfy item 1 of Definition \ref{def:alg pseudo}.

\begin{prop}
\label{prop:proof item 1 pseudo}
For any algebra $A$,
\begin{center}
if $A\models\pi(p\lor q)\leq \pi(p)\lor\pi(q)$,\ \ \ then\ \ \  $A^{\delta}\models_{A}C(\pi)=\forall\cnomm[\pi(\bot)\leq\cnomm\Rightarrow\pi(\blacksquare_{\pi}\cnomm)\leq\cnomm]$.
\end{center}
\end{prop}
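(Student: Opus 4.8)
The plan is to verify $C(\pi)$ on $\bbas$ directly. Fix a conominal $\cnomm \in \mty(\bbas)$ with $\pi(\bot) \leq \cnomm$ and recall that, by definition, $\blacksquare_{\pi}\cnomm = \bigvee S$ with $S = \{i \in \jty(\bbas) \mid \pi(i) \leq \cnomm\}$; the goal is then to show $\pi(\bigvee S) \leq \cnomm$. The conceptual core is a single structural fact: because $\pi$ is additive on $A$, its interpretation $\pi^{\bbas}$ on the canonical extension coincides with the $\sigma$-extension $(\pi^A)^{\sigma}$ of its restriction to $A$. This has two consequences I would use: (i) on a closed element $x$, $\pi^{\bbas}(x) = \bigwedge\{\pi(a) \mid x \leq a \in A\}$; and (ii) $\pi^{\bbas}$ is join-continuous from below, i.e.\ $\pi^{\bbas}(u) = \bigvee\{\pi^{\bbas}(x) \mid x \in \kbbas,\ x \leq u\}$ for every $u$. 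Granting complete additivity one could finish in a line---$\pi(\bigvee S) = \pi(\bot) \lor \bigvee\{\pi(i) \mid i \in S\} \leq \cnomm$---but I would instead derive the bound through the compactness that actually powers that additivity.

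First I would descend from $S$ to $A$. Let $i \in S$. Since $i$ is closed, consequence (i) gives $\bigwedge\{\pi(a) \mid i \leq a \in A\} = \pi^{\bbas}(i) \leq \cnomm$. As $\cnomm$ is open, this exhibits a meet of elements of $A$ lying below a join of elements of $A$, so compactness (Definition \ref{def:can:ext2.3}(2)) yields finitely many $a_1, \dots, a_k \in A$, each above $i$, with $\bigwedge_{j=1}^{k}\pi(a_j) \leq \cnomm$. Putting $a_i := a_1 \wedge \cdots \wedge a_k$ and using monotonicity of $\pi$, I obtain a single $a_i \in A$ satisfying $i \leq a_i$ and $\pi(a_i) \leq \cnomm$.

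Next I would bound $\pi(\bigvee S)$ from above. Since $i \leq a_i$ for every $i \in S$, we have $\bigvee S \leq \bigvee\{a_i \mid i \in S\}$. By consequence (ii) it suffices to show $\pi^{\bbas}(z) \leq \cnomm$ for every closed $z \leq \bigvee S$. Fixing such a $z$, compactness applied to $z = \bigwedge\{b \in A \mid b \geq z\} \leq \bigvee\{a_i \mid i \in S\}$ produces a finite subset $S_0 \subseteq S$ with $z \leq \bigvee_{i \in S_0} a_i =: a^{\ast} \in A$. Finite additivity of $\pi$ on $A$ then gives $\pi(a^{\ast}) = \bigvee_{i \in S_0}\pi(a_i) \leq \cnomm$, and monotonicity of $\pi^{\bbas}$ yields $\pi^{\bbas}(z) \leq \pi(a^{\ast}) \leq \cnomm$. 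Taking the join over all closed $z \leq \bigvee S$ and invoking (ii), I conclude $\pi(\blacksquare_{\pi}\cnomm) = \pi^{\bbas}(\bigvee S) \leq \cnomm$, which is the required implication.

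The main obstacle is precisely the structural fact used at the outset, namely that finite additivity of $\pi$ on $A$ forces $\pi^{\bbas} = (\pi^A)^{\sigma}$, so that $\pi^{\bbas}$ is governed by the values of $\pi$ on $A$ through closed elements and is join-continuous from below. This is the step where additivity, rather than mere monotonicity, is genuinely needed (for a generic positive term the compositional interpretation $\pi^{\bbas}$ need not be join-continuous from below, e.g.\ when the term is built from a $\pi$-extended connective), and it is the order-theoretic heart that Section \ref{Sec:AlternativeProof} is designed to isolate. Once it is available, the two compactness arguments and the monotonicity bookkeeping above are routine.
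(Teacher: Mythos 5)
There is a genuine gap, and it sits exactly where you flag it yourself: the ``single structural fact'' that additivity of $\pi$ on $A$ forces $\pi^{\bbas}=(\pi^A)^{\sigma}$, from which you extract consequence (ii), the determination of $\pi^{\bbas}$ by closed elements from below. Nothing available at this point of the paper delivers this. The compositional interpretation $\pi^{\bbas}$ of a term is \emph{not} the $\sigma$-extension of $\pi^A$ in general (composites of $\sigma$-extensions need not be $\sigma$-extensions), and the only properties one can legitimately prove for it by induction on the term are the directed Esakia properties (Lemma \ref{lem:Esakia for original maps}): preservation of down-directed meets of closed elements and up-directed joins of open elements. Join-continuity from below at \emph{arbitrary} $u\in\bbas$ is equivalent to the complete additivity of $\pi^{\bbas}$, i.e.\ to the very conclusion this canonicity result is after (it is literally what Proposition \ref{prop:main}/Theorem \ref{thm:mainmaps} establish, in the later Section \ref{Sec:AlternativeProof}). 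So your proof assumes its hardest step; as written it is circular, or at best defers its core to a theorem proved afterwards. Everything else in your argument---the descent from $i\in S$ to some $a_i\in A$ via the closed Esakia property and compactness, and the finite-additivity bookkeeping---is correct and indeed mirrors steps in the paper's proof of \eqref{equ:blacksquare:open} and Lemma \ref{lem:idealtoidealmaps}.

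Two remarks on how to close the gap. The paper's own route avoids (ii) entirely: it proves Lemma \ref{lem:approximation}, namely $\pi(a)=\diam_\pi(a)\lor\pi(\bot)$ for $a\in A$ only (admissible assignments), using the ideal $J=\{b\in A\mid \pi(b)\leq m\}$---whose up-directedness is where additivity enters---and then runs an ALBA-style reduction whose Ackermann step needs only that $\blacksquare_\pi\cnomm$ is \emph{open}; so $\pi$ is never evaluated at an arbitrary element of $\bbas$. Alternatively, your own computation can be repaired locally without (ii): your descent step together with the converse inclusion (every $a$ with $\pi(a)\leq\cnomm$ lies below $\blacksquare_\pi\cnomm$, since $j\leq a$ implies $\pi(j)\leq\pi(a)\leq\cnomm$ for $j\in\jty(\bbas)$) shows $\blacksquare_\pi\cnomm=\bigvee J$ with $J=\{a\in A\mid \pi(a)\leq\cnomm\}$, and additivity makes $J$ up-directed; since elements of $A$ are open, the open Esakia property of the term function (Lemma \ref{lem:Esakia for original maps}, provable without additivity) gives $\pi(\blacksquare_\pi\cnomm)=\bigvee\{\pi(a)\mid a\in J\}\leq\cnomm$ in one line, with no appeal to join-continuity from below.
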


As in \cite{Yv98}, for every $u\in A^\delta$, let
\[\diam_\pi (u) := \bigvee\{ \pi(j) \ | \ j\in J^\infty(A^\delta) \mbox{ and } j\leq u \},\]

\begin{lemma}\label{lem:approximation}

For every positive term $\pi$ and every algebra $A$,
\begin{center}
if $A\models\pi(p\lor q)\leq \pi(p)\lor\pi(q)$,\ \ \ \ then\ \ \ \ $A^{\delta}\models_A\pi(p)=\Diamond_{\pi}(p)\lor\pi(\bot).$
\end{center}

\end{lemma}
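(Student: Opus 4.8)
The plan is to prove the two inequalities $\Diamond_\pi(a)\vee\pi(\bot)\leq\pi(a)$ and $\pi(a)\leq\Diamond_\pi(a)\vee\pi(\bot)$ separately, for every $a\in A$ (recall that $\models_A$ quantifies only over assignments of $p$ into $A$). Throughout I would exploit that the hypothesis, together with the monotonicity of the positive term $\pi$, makes $\pi$ \emph{additive} on $A$, i.e.\ $\pi(b\vee b')=\pi(b)\vee\pi(b')$ for all $b,b'\in A$: monotonicity gives $\pi(b)\vee\pi(b')\leq\pi(b\vee b')$, while the assumed inequality gives the converse. The inequality $\Diamond_\pi(a)\vee\pi(\bot)\leq\pi(a)$ is the easy half and uses only monotonicity: every $j\in J^\infty(A^\delta)$ with $j\leq a$ satisfies $\pi(j)\leq\pi(a)$, so $\Diamond_\pi(a)\leq\pi(a)$, and $\bot\leq a$ gives $\pi(\bot)\leq\pi(a)$.

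For the converse, which is the main part, I would argue that $\pi(a)\leq m$ for every completely meet-prime $m\in M^\infty(A^\delta)$ lying above $\Diamond_\pi(a)\vee\pi(\bot)$; since $A^\delta$ is a perfect distributive lattice, and hence completely meet-generated by $M^\infty(A^\delta)$, this suffices. So fix such an $m$; in particular $\pi(j)\leq\Diamond_\pi(a)\leq m$ for every $j\in J^\infty(A^\delta)$ with $j\leq a$. Using that $A^\delta$ is perfect, write $a=\bigvee\{j\in J^\infty(A^\delta)\mid j\leq a\}$. If this set is empty then $a=\bot$ and the desired inequality is the trivial $\pi(\bot)\leq m$; otherwise I proceed as follows. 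Each such $j$ is closed, so $\pi(j)=\bigwedge\{\pi(b)\mid j\leq b\in A\}$; since $m$ is completely meet-prime and this meet lies below $m$, there is some $b_j\in A$ with $j\leq b_j$ and $\pi(b_j)\leq m$. Now $a\leq\bigvee_j b_j$, and the finite joins of the $b_j$ form an up-directed subset of $A$, so by compactness (Definition \ref{def:can:ext2.3}(2)) there is a finite nonempty set $F$ with $a\leq\bigvee_{j\in F}b_j$. Applying monotonicity and then the finite additivity of $\pi$ on $A$ yields $\pi(a)\leq\pi(\bigvee_{j\in F}b_j)=\bigvee_{j\in F}\pi(b_j)\leq m$, as required.

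The main obstacle is the step identifying $\pi(j)$, for a closed (join-irreducible) element $j$, with the down-directed meet $\bigwedge\{\pi(b)\mid j\leq b\in A\}$ of the values of $\pi$ on elements of $A$ above $j$: this is what allows the complete meet-primeness of $m$ to produce the witness $b_j\in A$, and it is the only place where the canonical-extension machinery (the defining formula of the $\sigma$-extension, resp.\ the relevant Esakia-type lemma for composite positive terms) is genuinely needed. Once the witnesses $b_j$ are in hand, the rest is a routine combination of compactness and the finite additivity extracted from the hypothesis. It is worth noting that the disjunct $\pi(\bot)$ contributes nothing when $a\neq\bot$, since then $\pi(\bot)\leq\pi(j)\leq\Diamond_\pi(a)$ for any $j\leq a$; it is present precisely to make the identity correct in the degenerate case $a=\bot$.
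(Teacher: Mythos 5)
Your proof is correct, and it shares the skeleton of the paper's argument: the easy inclusion by monotonicity, and for the converse the reduction to showing $\pi(a)\leq m$ for every $m\in M^\infty(A^\delta)$ above $\Diamond_\pi(a)\vee\pi(\bot)$, with the crux in both cases being the identity $\pi(j)=\bigwedge\{\pi(b)\mid j\leq b\in A\}$ for $j\in J^\infty(A^\delta)\subseteq K(A^\delta)$, i.e.\ the intersection (Esakia-type) lemma for the positive term $\pi$ --- exactly the step you single out as the obstacle. Where you genuinely diverge is in how witnesses in $A$ are produced and where the additivity hypothesis enters. The paper forms the lattice ideal $J=\{b\in A\mid \pi(b)\leq m\}$, using additivity once, to show $J$ is closed under binary joins; it then extracts a witness above each $j\leq a$ by \emph{compactness} (the meet $\bigwedge\{\pi(b)\mid j\leq b\in A\}$ of closed elements lies below the open $m$, so finitely many $\pi(b_i)$ already do, and $b:=\bigwedge_i b_i\in J$ with $j\leq b$), proves $a\leq\bigvee J$, and applies compactness a second time to get a single $b\in J$ with $a\leq b$, concluding by monotonicity alone. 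You instead invoke the complete meet-primeness of $m$ to pull a single witness $b_j$ out of the meet in one step, use compactness once to pass to a finite join $\bigvee_{j\in F}b_j$, and only then apply the (finite) additivity of $\pi$ on $A$. Both routes are sound, and yours is slightly shorter; but note that your shortcut trades on the coincidence of completely meet-irreducible and completely meet-prime elements, which is a feature of the perfect \emph{distributive} setting, whereas the paper's compactness-and-ideal argument uses only that $\pi(b)$ is closed for $b\in A$ and that $m$ is open --- which is why it transfers verbatim to the abstract closed-Esakia maps of Proposition \ref{prop:algholdsmaps} and is more robust beyond distributivity. Two small remarks: the up-directedness of the finite joins of the $b_j$ is superfluous, since the compactness of Definition \ref{def:can:ext2.3} applies to arbitrary $S,T\subseteq A$ with $S=\{a\}$; and your closing observation about $\pi(\bot)$ mirrors the paper, where $\pi(\bot)\leq m$ is precisely what guarantees $J\neq\varnothing$.
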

\begin{proof}
Let us fix $a\in A$ and let us show that $\pi(a)=\Diamond_{\pi}(a)\lor\pi(\bot)$.

The right-to-left inequality immediately follows from the fact that $\pi$ is monotone. As to the converse, it is enough to show that, if $m \in M^\infty(A^\delta)$ and $\pi(\bot) \lor \diam_\pi(a) \leq m$, then $\pi(a)\leq m$.
Consider the set
\[ J := \{b \in A \ | \ \pi(b) \leq m\}.\]
Note that $J\neq \varnothing$, since $\pi(\bot) \leq \pi(\bot) \lor \diam_\pi(a) \leq m$, and $J$ is a lattice ideal: indeed, $J$ is downward closed by construction, and the join of two elements in $J$ belongs to  $J$, since by assumption $ \pi(a \lor b) \leq \pi(a) \lor \pi(b)$.

We claim that $a \leq \bigvee J$. To see this, let $j \in J^\infty(A^\delta)$ s.t.\ $j \leq a$ and let us show that there is some $b \in J$ s.t.\ $j \leq b$. From $j\leq a$, by  definition of $\diam_\pi (a)$, we have $\pi(j) \leq \diam_\pi(a) \leq m$. By applying the intersection lemma \cite{CoPa12, SaVa89} to $\pi$,
\[\bigwedge \{\pi(b)\mid b\in A\ \mbox{and}\ j \leq b\}  =\pi(j) \leq m.\]

Since, by assumption, $\pi(b)$ is a clopen hence closed element of $A^{\delta}$ for all $b\in A$, the displayed inequality implies by compactness that $\bigwedge_{i=1}^n \pi(b_i) \leq m$ for some $b_1, \dots, b_n \in A$ such that $j \leq b_i$ holds for every $1\leq i\leq n$.
Since $\pi$ is order-preserving, putting $b := \bigwedge_{i=1}^n b_i$ we get $j\leq b$ and  $\pi(b)=\pi(\bigwedge_{i=1}^n b_i)\leq \bigwedge_{i=1}^n \pi(b_i) \leq m$. So we have found some $b \in J$ with $j \leq b$, which finishes the proof of the claim.

From $a \leq \bigvee J$, by compactness, there is some element $b \in J$ such that $a \leq b$. Then $\pi(a) \leq \pi(b) \leq m$, as required.
\end{proof}

As in \cite{Yv98}, we are now ready to prove Proposition \ref{prop:proof item 1 pseudo}:

\begin{proof}[Proof of Proposition \ref{prop:proof item 1 pseudo}]

We need to show that
\begin{equation}\label{equ:to:prove}
A^{\delta}\models\forall\cnomm[\pi(\bot)\leq\cnomm\Rightarrow\pi(\blacksquare_{\pi}\cnomm)\leq\cnomm].
\end{equation}

By assumption, we have that $A\models\pi(p\lor q)\leq\pi(p)\lor\pi(q)$. Hence, by Lemma \ref{lem:approximation},
\begin{equation}\label{equ:from:lemma}
A^{\delta}\models_A\forall p[\pi(p)\leq\diam_\pi(p)\lor\pi(\bot)].
\end{equation}

The remainder of the present proof consists in showing that \eqref{equ:from:lemma} is equivalent to \eqref{equ:to:prove}. This equivalence is the counterpart of an analogous equivalence which  was proved    in \cite{Yv98}   between $\mathfrak{g}\Vdash \pi(p)\leq\diam_\pi(p)\lor\pi(\bot)$ and $\mathfrak{g}^\sharp \Vdash C(\pi)$ for any descriptive general frame $\mathfrak{g}$.\footnote{In \cite{Yv98}, $C(\pi)$ is defined in terms of the standard translation, and formulated in  the extended language, it would correspond to $\forall \nomi[ \nomi\leq \pi (\bot)\iamp \nomi \leq \overline{\pi}(\Diamondblack_{\pi}\nomi)]$, where $\iamp$ is disjunction, and the new connective $\Diamondblack_\pi$ is interpreted in any perfect BAO $B$ as the operation defined by the assignment $u\mapsto \bigvee \{j\in J^\infty(B)\mid i\leq \pi(j)\mbox{ for some } i\in J^\infty(B) \mbox{ s.t.\ } i\leq u\}$, and $\overline{\pi}(p): = \neg \pi(\neg p)$.}   We will prove this equivalence by means of an ALBA-type reduction. By performing a first approximation, we get:
\begin{equation}\label{equ:first:approxi}
A^{\delta}\models_A\forall p\forall\nomi\forall\cnomm[(\nomi\leq\pi(p)\ \&\ \diam_\pi(p)\lor\pi(\bot)\leq\cnomm)\Rightarrow\nomi\leq\cnomm].
\end{equation}
By applying the splitting rule to the second inequality in the premise in \eqref{equ:first:approxi}, we obtain
\begin{equation}
A^{\delta}\models_A\forall p\forall\nomi\forall\cnomm[(\nomi\leq\pi(p)\ \&\ \diam_\pi(p)\leq\cnomm\ \&\ \pi(\bot)\leq\cnomm)\Rightarrow\nomi\leq\cnomm].
\end{equation}
Notice that the interpretations of $\diam_{\pi}$ and $\blacksquare_{\pi}$ in any perfect DLE form an adjoint pair (this will be expanded on below). Hence, the  syntactic  rule corresponding to this semantic adjunction is sound on $A^{\delta}$. By applying this new rule, we get:
\begin{equation}\label{equ:Ackermann}
A^{\delta}\models_A\forall p\forall\nomi\forall\cnomm[(\nomi\leq\pi(p)\ \&\ p\leq\blacksquare_{\pi}\cnomm\ \&\ \pi(\bot)\leq\cnomm)\Rightarrow\nomi\leq\cnomm].
\end{equation}
The quasi-inequality above is in topological Ackermann shape (this will be expanded on below). Hence, by applying the Ackermann rule, we get:
\begin{equation}
A^{\delta}\models_A\forall\nomi\forall\cnomm[(\nomi\leq\pi(\blacksquare_{\pi}\cnomm)\ \&\ \pi(\bot)\leq\cnomm)\Rightarrow\nomi\leq\cnomm],
\end{equation}
which is a pure quasi-inequality, i.e.\ it is free of atomic propositions. Hence in particular, its validity does not depend on whether the valuations are admissible or not. Therefore, the condition above can be equivalently rewritten as follows:
\begin{equation}
A^{\delta}\models\forall\nomi\forall\cnomm[(\nomi\leq\pi(\blacksquare_{\pi}\cnomm)\ \&\ \pi(\bot)\leq\cnomm)\Rightarrow\nomi\leq\cnomm].
\end{equation}
It is easy to see that the inequality above is equivalent to
\begin{equation}
\label{eq: end of story}
A^{\delta}\models\forall\cnomm[\pi(\bot)\leq\cnomm\Rightarrow\pi(\blacksquare_{\pi}\cnomm)\leq\cnomm],
\end{equation}
 as required. To finish the proof, we need to justify our two claims above. As to the adjunction between $\blacksquare_{\pi}$ and $\diam_{\pi}$, for all $u, v\in A^\delta$,
\begin{center}
\begin{tabular}{r c l}

 $u\leq\blacksquare_{\pi}(v)$  & iff & $u\leq\bigvee\{j\in J^{\infty}(A^\delta)\mid\pi(j)\leq v\}$\\
&iff & $\bigvee\{j\in J^{\infty}(A^\delta)\mid j\leq u\}\leq\bigvee\{j\mid\pi(j)\leq v\}$\\
&iff  & if $j\in J^{\infty}(A^\delta)$ and $j\leq u$, then $\pi(j)\leq v$\\
&iff & $\pi(j)\leq v$ for all $j\in J^{\infty}(A^{\delta})$ s.t.\ $j\leq u$\\
&iff & $\bigvee\{\pi(j)\mid j\in J^{\infty}(A^\delta)$ and $j\leq u\}\leq v$\\
&iff & $\diam_\pi(u)\leq v$.
\end{tabular}
\end{center}

As to the applicability of the topological Ackermann lemma (cf.\ \cite{CoPa12}), let us recall that  this lemma is the restriction of the general Ackermann lemma to validity w.r.t.\ descriptive general frames. The algebraic counterpart of this lemma is set in the environment of canonical extensions and requires for its applicability on a given quasi-inequality the additional condition that in every inequality the left-hand side belongs to $K(A^{\delta})$ and the right-hand side belongs to $O(A^{\delta})$. Hence, in order for the topological Ackermann lemma to be applicable to  the quasi-inequality \eqref{equ:Ackermann}, it remains to be shown that the interpretation of $\blacksquare_{\pi}\cnomm$ is in $O(A^{\delta})$. This is an immediate consequence of the  fact that, for every DLE$^+$-assignment $V$,
\begin{equation}\label{equ:blacksquare:open}
V(\blacksquare_{\pi}\cnomm)=\bigvee\{a\in A\mid \pi(a)\leq V(\cnomm)\}.
\end{equation}

Let $m\in M^\infty (A^\delta)$ be s.t.\ $V(\cnomm) = m$. The right-hand side of \eqref{equ:blacksquare:open} can be equivalently rewritten as $\bigvee\{j\in J^{\infty}(A^{\delta})\mid j\leq a\mbox{ for some }a\in A\mbox{ s.t.\ }\pi(a)\leq m\}$.
If $j$ is one of the joinands of the latter join, then $\pi(j)\leq \pi (a)\leq m$, hence the right-to-left inequality immediately follows from the fact that $V(\blacksquare_{\pi}\cnomm)=\bigvee\{i\in J^{\infty}(A)\mid \pi(i)\leq m\}$.

Conversely, let $i\in J^{\infty}(A^{\delta})$ s.t.\ $\pi(i)\leq m$. Since $i=\bigwedge\{a\in A\mid i\leq a\}$, by the intersection lemma (cf.\ \cite{CoPa12, SaVa89}) applied on $\pi$, we have $\pi(i)=\pi(\bigwedge\{a\in A\mid i\leq a\})=\bigwedge\{\pi(a)\mid a\in A \mbox{ and }i\leq a\}$. Hence by compactness, $\bigwedge_{i=1}^{n}\pi(a_i)\leq m$. Then let $a=a_1\land\ldots\land a_n$. Clearly, $\pi(a)=\pi(a_1\land\ldots\land a_n)\leq\bigwedge\pi(a_i)\leq m$, which finishes the proof as required.
\end{proof}

The following proposition proves that $\pi(p\lor q)\leq\pi(p)\lor\pi(q)$ and $C(\pi)$ satisfies item 2 of Definition \ref{def:alg pseudo}, and is the algebraic counterpart of  \cite[Proposition 2.2]{Yv98}.

\begin{prop}\label{lem:discrete:side}
For every positive term $\pi$ and every perfect algebra $B$,
\begin{center}
if $B\models\forall\cnomm[\pi(\bot)\leq\cnomm\Rightarrow\pi(\blacksquare_{\pi}\cnomm)\leq\cnomm]$,\ \ \ \ then\ \ \ \ $B\models\pi(p\lor q)\leq \pi(p)\lor\pi(q)$.
\end{center}
\end{prop}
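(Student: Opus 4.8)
The plan is to prove the nontrivial inequality $\pi(p\lor q)\leq \pi(p)\lor\pi(q)$ directly on the perfect algebra $B$, exploiting the two defining features of perfectness together with the adjunction $\diam_\pi\dashv\blacksquare_\pi$ recorded in the proof of Proposition \ref{prop:proof item 1 pseudo}. Since $B$ is perfect, every element is the meet of the completely meet-irreducible elements above it; hence it suffices to fix an arbitrary $m\in M^\infty(B)$ with $\pi(p)\lor\pi(q)\leq m$ and to show $\pi(p\lor q)\leq m$. This reduction is the organizing idea of the whole argument, since it is what lets us apply the quasi-inequality $C(\pi)$ at the single instance $\cnomm=m$.

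So fix such an $m$. First I would note that $\pi(\bot)\leq m$: indeed $\bot\leq p$, so monotonicity of $\pi$ gives $\pi(\bot)\leq\pi(p)\leq m$. Next I would show that $\diam_\pi(p\lor q)\leq m$. Unfolding the definition, $\diam_\pi(p\lor q)=\bigvee\{\pi(j)\mid j\in J^\infty(B)\text{ and } j\leq p\lor q\}$, so it is enough to check $\pi(j)\leq m$ for each such join-irreducible $j$. Here the second feature of perfectness enters: the completely join-irreducible elements of a perfect distributive lattice are completely join-prime, so $j\leq p\lor q$ forces $j\leq p$ or $j\leq q$; in either case monotonicity of $\pi$ yields $\pi(j)\leq\pi(p)\lor\pi(q)\leq m$, and taking the join gives $\diam_\pi(p\lor q)\leq m$.

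With $\diam_\pi(p\lor q)\leq m$ in hand, the adjunction $\diam_\pi\dashv\blacksquare_\pi$ gives $p\lor q\leq\blacksquare_\pi m$. Now I invoke the hypothesis: instantiating $C(\pi)$ at $\cnomm=m$ and using $\pi(\bot)\leq m$ established above, we obtain $\pi(\blacksquare_\pi m)\leq m$. Finally, from $p\lor q\leq\blacksquare_\pi m$ and monotonicity of $\pi$ we conclude $\pi(p\lor q)\leq\pi(\blacksquare_\pi m)\leq m$, which is exactly what the reduction required.

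The argument is short because the real work has already been localized: the two nontrivial ingredients are the adjunction between $\diam_\pi$ and $\blacksquare_\pi$ and the join-primeness of completely join-irreducibles, both of which are consequences of $B$ being perfect. I do not expect a serious obstacle; the one point deserving explicit care is the reduction to meet-irreducible tests $m\in M^\infty(B)$, which is legitimate precisely because $B$ is perfect (every element is a meet of completely meet-irreducibles). Once that reduction is stated, the single application of $C(\pi)$ closes the proof, and no appeal to compactness or to admissible valuations is needed, in contrast with the proof of the converse implication in Proposition \ref{prop:proof item 1 pseudo}.
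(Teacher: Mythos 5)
Your proof is correct, but it takes a different route from the paper's. The paper proves this proposition by observing that the entire chain of ALBA equivalences \eqref{equ:from:lemma}--\eqref{eq: end of story} from the proof of Proposition \ref{prop:proof item 1 pseudo} remains valid on any perfect algebra $B$ under \emph{arbitrary} valuations (the topological Ackermann lemma being replaced by the plain one), so that $B\models C(\pi)$ yields the global identity $\pi(u)=\diam_\pi(u)\lor\pi(\bot)$ for all $u$, whence $\pi$ is completely additive because $\diam_\pi$ is a complete operator. You instead bypass the equivalence chain entirely and verify the binary inequality directly by testing against an arbitrary $m\in M^\infty(B)$: your join-primeness computation showing $\diam_\pi(p\lor q)\leq m$ plays the role of the paper's appeal to complete additivity of $\diam_\pi$, and your single instantiation of $C(\pi)$ at $\cnomm=m$, combined with the adjunction $\diam_\pi\dashv\blacksquare_\pi$ and monotonicity, is exactly an inlined, pointwise version of the Ackermann/adjunction steps that the paper invokes wholesale. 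What the paper's route buys is the stronger conclusion (the identity $\pi(u)=\diam_\pi(u)\lor\pi(\bot)$ and hence \emph{complete} additivity, which is reused elsewhere, e.g.\ in the conclusions) at the cost of relying on the soundness of the reduction rules on perfect algebras; what your route buys is self-containedness and economy --- only monotonicity, complete join-primeness of the elements of $J^\infty(B)$, the adjunction, and one instance of $C(\pi)$ are used, with the reduction to meet-irreducible tests correctly justified by perfectness, and no hidden dependence on the Ackermann lemma or on the valuation-theoretic apparatus.
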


\begin{proof}
It is easy to see that the equivalences \eqref{equ:from:lemma}--\eqref{eq: end of story} in the proof of Proposition \ref{prop:proof item 1 pseudo} hold on any perfect algebra $B$ and for arbitrary valuations on $B$. Hence, from the assumption we get
\begin{equation}\label{equ:from:lemma:2}
B\models\forall p[\pi(p)\leq\diam_\pi(p)\lor\pi(\bot)].
\end{equation}
Notice that the assumption that $\pi$ is positive implies that the equality holds in the clause above, that is,
\begin{equation}\label{equ:from:lemma:3}
B\models \pi(p)=\diam_\pi(p)\lor\pi(\bot).
\end{equation}
Since $\diam_\pi$ is by definition a complete operator, the condition above immediately implies that the interpretation of $\pi$ is completely additive. Thus $B\models\pi(p\lor q)\leq \pi(p)\lor\pi(q)$, as required.
\end{proof}

\section{An alternative proof of the canonicity of additivity}\label{Sec:AlternativeProof}
In the present section, we extract the algebraic and order-theoretic essentials from the account given in Section  \ref{Sec:pseudo} of the proof of the canonicity of the inequality $\pi(p\lor q)\leq \pi(p)\lor\pi(q)$. In the following subsection, we abstract away from any logical signature, and present order-theoretic results on monotone maps $f, g:A^\delta\rightarrow B^\delta$ defined between the canonical extensions of given bounded distributive lattices (BDLs).   The feature that sets apart the results in Subsection \ref{ssec:order theoretic perspective} from similar existing results in the theory of canonical extensions is that these maps are not assumed to be the  $\sigma$- or $\pi$-extensions of primitive functions $A\rightarrow B$. As we will see, this calls for different proof techniques from the ones typically used for $\sigma$- and $\pi$-extensions.

In Subsection \ref{ssec:canonicity additivity pseudo jonsson style}, we apply the results of  Subsection \ref{ssec:order theoretic perspective} to term functions of given DLE-type modal languages, so as to achieve a generalization of the results in \cite{Yv98} and in Section \ref{Sec:pseudo} which does not rely anymore on pseudo-correspondence, but which is more similar  to the proof strategy referred to (cf.\ \cite{PaSoZh15, PaSoZh16})  as {\em J\'onsson-style canonicity} after J\'onsson \cite{Jo94}.

\subsection{A purely order-theoretic perspective}
\label{ssec:order theoretic perspective}

 The treatment in the present section makes use of some of the notions and proof-strategies presented in Section \ref{Sec:pseudo}, and slightly generalizes them.  

Throughout the present section, $A, B$ will denote bounded distributive lattices (BDLs), and  $\ca,B^\delta$ will respectively be their canonical extensions. 
In what follows, $\jty(\ca)$ (resp.\ $\mty(\ca)$) denotes the set of the completely join-irreducible (resp.\ meet-irreducible) elements of $\ca$, and  $K(\ca)$ (resp.\ $O(\ca)$) denotes the set of the closed (resp.\ open) elements of $\ca$.

\begin{dfn}
\label{def:additivity}
A monotone map $f: A \to B$ is {\em additive} if $f$ preserves non-empty finite joins, and it is {\em completely additive} if it preserves all (existing) nonempty joins.

\end{dfn}

\begin{dfn}
For all maps $f, g:\ca\rightarrow B^\delta$,
\begin{enumerate}
\item
$f$ is {\em closed Esakia} if it preserves down-directed meets of closed elements of $\ca$, that is: $$f(\bigwedge\{c_i : i\in I\})=\bigwedge\{f(c_i): i\in I\}$$ for any downward-directed collection $\{c_i : i\in I\}\subseteq \kbbas$;
\item
$g$ is {\em open Esakia} if it preserves upward-directed joins of open elements of $\ca$, that is: $$g(\bigvee\{o_i : i\in I\})=\bigvee\{g(o_i): i\in I\}$$ for any upward-directed collection $\{o_i : i\in I\}\subseteq O(\ca)$.
\end{enumerate}
\end{dfn}
Our main aim in this section is proving the following

\begin{thm}\label{thm:mainmaps}
Let $f, g:A^\delta\rightarrow B^\delta$ be  monotone maps, which are both closed and open Esakia. 
\begin{enumerate}
\item
If $f(a)\in K(B^{\delta})$ for all $a\in A$, and $f(a \lor b) \leq f(a) \lor f(b)$ for all $a, b\in A$, then $f(u \lor v) \leq f(u) \lor f(v)$ for all $u,v\in \ca$.
\item
 If $g(a)\in O(B^{\delta})$ for all $a\in A$, and $g(a \land b) \geq g(a) \land g(b)$ for all $a, b\in A$, then $g(u \land v) \geq g(u) \land g(v)$ for all $u,v\in \ca$.
\end{enumerate}

\end{thm}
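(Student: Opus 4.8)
The plan is to lift the subadditivity of $f$ from $A$ to all of $A^\delta$ by a single approximation argument, generalising the technique of Lemma~\ref{lem:approximation}. Since $A^\delta$ and $B^\delta$ are perfect, $B^\delta$ is meet-generated by $\mty(B^\delta)$, so it suffices to fix an arbitrary $m\in \mty(B^\delta)$ with $f(u)\vee f(v)\leq m$ (equivalently $f(u)\leq m$ and $f(v)\leq m$) and to prove that $f(u\vee v)\leq m$; taking the meet over all such $m$ then gives $f(u\vee v)\leq f(u)\vee f(v)$. The heart of the argument is the introduction of the open element
\[
o:=\bigvee\{a\in A\mid f(a)\leq m\},
\]
which I will use as a single $A$-approximant lying above both $u$ and $v$ while keeping its $f$-image below $m$.

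First I would check that $f(o)\leq m$. The index set $\{a\in A\mid f(a)\leq m\}$ is non-empty, since it contains $\bot$ (because $f(\bot)\leq f(u)\leq m$ by monotonicity), and it is up-directed: if $f(a_1),f(a_2)\leq m$ then subadditivity of $f$ on $A$ gives $f(a_1\vee a_2)\leq f(a_1)\vee f(a_2)\leq m$, with $a_1\vee a_2\in A$. As this is an up-directed family of open (indeed clopen) elements, open Esakia yields $f(o)=\bigvee\{f(a)\mid a\in A,\ f(a)\leq m\}\leq m$. Next I would show $u\leq o$, and symmetrically $v\leq o$. Since closed elements join-generate $A^\delta$, it is enough to prove $c\leq o$ for every $c\in K(A^\delta)$ with $c\leq u$. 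Fix such a $c$ and write $c=\bigwedge\{a\in A\mid a\geq c\}$, a down-directed meet of closed elements. By closed Esakia, $f(c)=\bigwedge\{f(a)\mid a\in A,\ a\geq c\}\leq f(u)\leq m$. Here the hypothesis $f(a)\in K(B^\delta)$ enters: each $f(a)$ is closed and $m\in O(B^\delta)$, so rewriting both sides in terms of elements of $B$ and invoking compactness produces finitely many $a_1,\dots,a_n\geq c$ in $A$ with $\bigwedge_{i}f(a_i)\leq m$; setting $a_0:=a_1\wedge\cdots\wedge a_n\in A$ gives $a_0\geq c$ and $f(a_0)\leq\bigwedge_i f(a_i)\leq m$. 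Thus $a_0$ lies in the defining set of $o$, whence $c\leq a_0\leq o$.

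Combining the two facts, $u\vee v\leq o$, and so by monotonicity $f(u\vee v)\leq f(o)\leq m$, which completes part~1. Part~2 is then proved by the order-dual argument: one uses $\jty(B^\delta)$-join-generation, replaces $o$ by the closed element $\bigwedge\{a\in A\mid g(a)\geq k\}$ for a completely join-irreducible $k$, and interchanges the roles of the two Esakia conditions together with the hypothesis $g(a)\in O(B^\delta)$. I expect the only delicate point to be the step establishing $c\leq o$: one must pass from the mere pointwise bound $f(c)\leq m$ on a closed element to a genuine $A$-element above $c$ with controlled image, and it is precisely here that closed Esakia (to rewrite $f(c)$ as a down-directed meet) and closedness of the values $f(a)$ (to run compactness) are both needed. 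An alternative and slightly shorter route to this step replaces compactness by the complete meet-primeness of $m$, which holds because $B^\delta$ is perfect and hence completely distributive: from $f(c)=\bigwedge\{f(a)\mid a\geq c\}\leq m$ one obtains directly some $a_0\geq c$ with $f(a_0)\leq m$.
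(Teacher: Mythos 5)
Your proof is correct, but it is organized quite differently from the paper's. The paper derives the theorem from the stronger Proposition~\ref{prop:main}: it introduces the completely join-preserving approximation $\diam_f$ and its adjoint $\blacksquare_f$, first proves the identity $f(k)=f(\bot)\vee\diam_f(k)$ on closed elements (Proposition~\ref{prop:algholdsmaps}), then shows that $\blacksquare_f(o)$ is open whenever $f(\bot)\leq o$ (Lemma~\ref{lem:idealtoidealmaps}), and finally lifts the identity to all of $\ca$ by adjunction and the open Esakia property; subadditivity then follows because $f=f(\bot)\vee\diam_f(\cdot)$ is a join of completely additive maps. You instead prove exactly the stated inequality in a single pass: fixing $m\in\mty(B^\delta)$ above $f(u)\vee f(v)$, your element $o=\bigvee\{a\in A\mid f(a)\leq m\}$ is in effect the open part of $\blacksquare_f(m)$, and your two steps---$f(o)\leq m$ via open Esakia applied to the ideal $\{a\in A\mid f(a)\leq m\}$ (whose up-directedness encodes the subadditivity hypothesis), and $c\leq o$ for closed $c\leq u$ via closed Esakia, closedness of the values $f(a)$, and compactness---are precisely the ingredients of Proposition~\ref{prop:algholdsmaps} and of the paper's final lifting, merged and specialized to conominal tests. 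What your route forgoes is exactly what the paper reuses later: the decomposition $f(u)=f(\bot)\vee\diam_f(u)$ and Lemma~\ref{lem:idealtoidealmaps} underpin the soundness of the ALBA$^e$ rules and the conditional Esakia lemma (Proposition~\ref{prop:Esa}), so the longer architecture is not wasted there. What you gain is economy, and your closing remark is sharper than you present it: since $B^\delta$ is perfect---in particular $x\vee\bigwedge_i y_i=\bigwedge_i(x\vee y_i)$ holds---every $m\in\mty(B^\delta)$ is indeed completely meet-prime, so your alternative version of the key step is sound and shows that the hypothesis $f(a)\in K(B^\delta)$ (dually, $g(a)\in O(B^\delta)$) is dispensable for the bare statement of the theorem, though not for the paper's stronger conclusions; the compactness version you give first is the one that parallels the paper and is the one that would survive in non-distributive settings, where completely meet-irreducible and completely meet-prime come apart.
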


Notice that the feature that sets apart the theorem above from similar existing results e.g.\ in the theory of canonical extensions is that $f, g:A^\delta\rightarrow B^\delta$ are not assumed to be the $\sigma$- or $\pi$-extensions of primitive functions $A\rightarrow B$. As we will see, this calls for different proof techniques from the ones typically used for $\sigma$- and $\pi$-extensions.

\begin{dfn}
For all $f,g:A^\delta\rightarrow B^\delta$, let  $\diam_f, \Box_g: A^\delta \to B^\delta$ be defined as follows. For any $u\in A^\delta$,

$$\diam_f (u) := \bigvee\{ f(j) \ | \ j\in J^\infty(A^\delta) \mbox{ and } j\leq u \};$$
$$\Box_g(u):=\bigwedge\{g(m)\mid m\in M^\infty(A^\delta) \mbox{ and } m\geq u\}.$$

\end{dfn}

The following fact straightforwardly follows from the definition:

\begin{lem}\label{lem:mapprops}
For all monotone maps $f,g:A^\delta\rightarrow B^\delta$,
\begin{enumerate}
\item $\diam_f$ is completely join-preserving, $\diam_f(u) \leq f(u)$ for all $u \in A^\delta$, and $\diam_f(j) = f(j)$ for every $j\in J^\infty(A^\delta)$.

\item $\Box_g$ is completely meet-preserving, $\Box_g(u)\geq g(u)$ for all $u\in \ca$, and $\Box_g(m) = g(m)$ for every $m\in M^\infty(A^\delta)$.
\end{enumerate}
\end{lem}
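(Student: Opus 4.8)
The plan is to prove item~1 in full and then obtain item~2 by order-duality (systematically replacing joins by meets, $J^\infty(A^\delta)$ by $M^\infty(A^\delta)$, $\diam_f$ by $\Box_g$, and reversing every inequality), since the definitions of $\diam_f$ and $\Box_g$ and the hypotheses on $f$ and $g$ are mirror images of one another. So all the work is in item~1, whose three assertions about $\diam_f$ I would dispatch in increasing order of difficulty.

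First I would record that $\diam_f$ is monotone: if $u\leq v$, then $\{j\in J^\infty(A^\delta)\mid j\leq u\}\subseteq\{j\in J^\infty(A^\delta)\mid j\leq v\}$, so the join defining $\diam_f(u)$ is taken over a subfamily of that defining $\diam_f(v)$, whence $\diam_f(u)\leq\diam_f(v)$. The inequality $\diam_f(u)\leq f(u)$ is then immediate from the monotonicity of $f$: each joinand $f(j)$ with $j\leq u$ satisfies $f(j)\leq f(u)$, so their join lies below $f(u)$. For the identity $\diam_f(j)=f(j)$ with $j\in J^\infty(A^\delta)$, the inequality $\leq$ is the special case just established, while $\geq$ holds because $j$ is itself a completely join-irreducible element below $j$, so that $f(j)$ is one of the joinands defining $\diam_f(j)$.

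The substantial point is the complete join-preservation. Given an arbitrary family $\{u_k\mid k\in K\}$ in $A^\delta$, the inequality $\bigvee_k\diam_f(u_k)\leq\diam_f(\bigvee_k u_k)$ follows at once from the monotonicity of $\diam_f$. For the reverse inequality I would unfold the definition, $\diam_f(\bigvee_k u_k)=\bigvee\{f(j)\mid j\in J^\infty(A^\delta),\ j\leq\bigvee_k u_k\}$, and bound each joinand separately. This is exactly where the perfectness of $A^\delta$ enters: the completely join-irreducible elements of the perfect distributive lattice $A^\delta$ are completely join-prime, so $j\leq\bigvee_k u_k$ forces $j\leq u_k$ for some $k\in K$, and hence $f(j)\leq\diam_f(u_k)\leq\bigvee_k\diam_f(u_k)$. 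Taking the join over all admissible $j$ yields $\diam_f(\bigvee_k u_k)\leq\bigvee_k\diam_f(u_k)$, and combining the two inequalities gives the equality. Specialising to the empty family additionally records $\diam_f(\bot)=\bot$ (no completely join-irreducible lies below $\bot$), so that arbitrary joins, including the empty one, are preserved.

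The appeal to complete join-primeness in the previous paragraph is really the only obstacle; everything else is a routine unwinding of the definition together with the monotonicity of $f$. It is worth emphasising that this argument uses nothing about $f$ beyond monotonicity---in particular $f$ need not be a $\sigma$- or $\pi$-extension of a map $A\to B$---which is precisely the feature highlighted after Theorem~\ref{thm:mainmaps}.
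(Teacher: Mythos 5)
Your proposal is correct and follows the intended route: the paper states this lemma as following straightforwardly from the definitions and gives no proof, and your argument is exactly the routine verification being left implicit, with the only non-trivial ingredient being the standard fact that in the perfect lattice $A^\delta$ the elements of $J^\infty(A^\delta)$ are completely join-prime (which is what licenses the step from $j\leq\bigvee_k u_k$ to $j\leq u_k$ for some $k$). Your handling of the empty join and the observation that only monotonicity of $f$ is used are both sound, and item~2 is indeed the order-dual.
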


The relational perspective comes about thanks to the following lemma, which states that $\diam_f$ and $\Box_g$ respectively are the diamond and the box operators associated with the binary relations defined by $f$ and $g$, respectively:

\begin{lem}\label{lem:mapsprops3}

For all monotone maps $f,g:A^\delta\rightarrow B^\delta$, and any $u\in\ca$,
\begin{enumerate}
\item $\diam_f(u) = \bigvee \{j \in J^\infty(B^\delta) \ | \ \exists i \in J^\infty(A^\delta) : i \leq u \text{ and } j \leq f(i)\}$;
\item $\Box_g(u)=\bigwedge\{m\in M^\infty(B^\delta)\mid\exists n\in M^\infty(\ca): u\leq n \mbox { and } g(n)\leq m\}$.
\end{enumerate}
\end{lem}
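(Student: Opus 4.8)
The plan is to prove part 1 directly from the definition of $\diam_f$ together with the fact that $B^\delta$, being the canonical extension of a BDL, is a perfect distributive lattice and hence completely join-generated by $J^\infty(B^\delta)$; part 2 then follows by order-duality, systematically replacing joins by meets, $J^\infty$ by $M^\infty$, $\leq$ by $\geq$, and $\diam_f$ by $\Box_g$ throughout, and invoking that $B^\delta$ is likewise completely meet-generated by $M^\infty(B^\delta)$.

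For part 1, I would begin by unfolding the definition, writing
\[
\diam_f(u) = \bigvee\{f(i) \mid i \in J^\infty(A^\delta) \text{ and } i \leq u\}.
\]
Since $B^\delta$ is perfect, each joinand $f(i)\in B^\delta$ is the join of the completely join-irreducible elements of $B^\delta$ below it, that is,
\[
f(i) = \bigvee\{j \in J^\infty(B^\delta) \mid j \leq f(i)\}.
\]
Substituting this into the previous display and using the associativity and commutativity of arbitrary joins in the complete lattice $B^\delta$ to collapse the resulting iterated join into a single join indexed over pairs, I obtain
\[
\diam_f(u) = \bigvee\{j \in J^\infty(B^\delta) \mid \exists i \in J^\infty(A^\delta): i \leq u \text{ and } j \leq f(i)\},
\]
which is precisely the claimed identity.

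This argument is essentially an unfolding of definitions, as the lemma's phrasing (``straightforwardly follows from the definition'') already suggests; the only point requiring minor care is the reindexing step, where one verifies that a join-irreducible $j \in J^\infty(B^\delta)$ contributes to the right-hand side exactly when some admissible $i \leq u$ witnesses $j \leq f(i)$, so that both sides range over the same collection of joinands. I do not anticipate any genuine obstacle: notice in particular that neither the monotonicity nor the Esakia hypotheses on $f$ play any role here, since the identity reduces purely to the complete join-generation of $B^\delta$ by $J^\infty(B^\delta)$. Part 2 is obtained by dualizing each of these steps.
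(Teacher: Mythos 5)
Your proof is correct, and it takes a slightly but genuinely different route from the paper's. The paper establishes the identity by proving the two inequalities separately: for the inequality from right to left it invokes Lemma \ref{lem:mapprops}.1 (that $\diam_f$ and $f$ agree on $J^\infty(A^\delta)$) together with the monotonicity of $\diam_f$, and for the converse it uses that each $j \in J^\infty(B^\delta)$ with $j \leq \diam_f(u)$ is \emph{completely join-prime}, hence lies below a single joinand $f(i)$ of the defining join of $\diam_f(u)$. You instead give a one-line equational computation whose only lattice-theoretic input is join-\emph{density}: every element of the perfect lattice $B^\delta$ is the join of the completely join-irreducibles below it, after which the iterated join collapses via $\bigvee_{i}\bigvee S_i = \bigvee\bigcup_i S_i$. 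Both facts (join-density and complete join-primeness of the completely join-irreducibles) hold in perfect distributive lattices, so both arguments are sound; but they are distinct facts, and your choice buys something: the argument is self-contained (no appeal to the preceding lemma) and, as you correctly observe, it shows the identity holds for \emph{arbitrary} maps $f$ --- neither monotonicity nor the Esakia conditions enter, so the hypothesis in the statement is not actually needed for this particular lemma. (In fairness, the paper's use of monotonicity is also inessential: its right-to-left direction only needs the trivial half $f(i) \leq \diam_f(i)$, which holds since $i \leq i$.) What the paper's phrasing buys in exchange is uniformity with the surrounding development, where complete join-primeness is the recurring workhorse (e.g., in the soundness of the approximation rules). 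Your treatment of part 2 by systematic order-dualization matches the paper, which likewise disposes of it as an order-variant of part 1.
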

\begin{proof}
1. If $j\in J^\infty(B^\delta)$ and  $j \leq f(i)$ for some $i\in  J^\infty(A^\delta)$ s.t.\ $i\leq u$,  then  $j\leq \diam_f(i)\leq \diam_f(u)$ by  Lemma \ref{lem:mapprops}.1 and the monotonicity of $\diam_f$. For the converse direction, it is enough to show that if $j \in J^\infty(B^\delta)$ and $j\leq \diam_f(u)$, then $j\leq f^{A^\delta}(i)$ for some $i\in J^\infty(A^\delta)$ such that $i\leq u$; this  immediately follows by the definition of $\diam_f$ and $j$ being completely join-prime.

2. is an order-variant of 1.
\end{proof}
\noindent The theorem above is an immediate consequence of the following:

\begin{prop}\label{prop:main}
For all maps $f,g:A^\delta\rightarrow B^\delta$ as in Theorem \ref{thm:mainmaps}, and any $u \in A^\delta$,

\begin{equation}\label{eq:aimmap1}
f(u)  =  f(\bot) \lor \diam_f(u);
\end{equation}
\begin{equation}\label{eq:aimmap2}
g(u) = g(\top)\land\Box_g(u).
\end{equation}
\end{prop}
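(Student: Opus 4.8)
The plan is to prove \eqref{eq:aimmap1} first, since \eqref{eq:aimmap2} follows by order-duality (exchanging joins with meets, $J^\infty$ with $M^\infty$, $K$ with $O$, and $\diam_f$ with $\Box_g$). I would structure the argument to closely mirror the proof of Lemma \ref{lem:approximation} in Section \ref{Sec:pseudo}, since that lemma is exactly the logical-signature-specific special case of what we want here; the novelty is only that $f$ is now an abstract monotone map rather than a term function, so the intersection lemma is unavailable and must be replaced by the closed Esakia hypothesis.

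First I would dispatch the easy inequality $f(\bot)\lor\diam_f(u)\leq f(u)$: since $\bot\leq u$ and $f$ is monotone we have $f(\bot)\leq f(u)$, and by Lemma \ref{lem:mapprops}.1 we have $\diam_f(u)\leq f(u)$, so the join is below $f(u)$. The substance is the reverse inequality $f(u)\leq f(\bot)\lor\diam_f(u)$. Following Lemma \ref{lem:approximation}, it suffices to fix an arbitrary $m\in M^\infty(B^\delta)$ with $f(\bot)\lor\diam_f(u)\leq m$ and show $f(u)\leq m$. I would then form the set
\[
S:=\{c\in K(A^\delta)\mid f(c)\leq m\},
\]
the analogue of the ideal $J$ in Lemma \ref{lem:approximation}, but now living among the closed elements rather than among elements of $A$. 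Note $\bot\in S$ since $f(\bot)\leq m$. The goal is to show $u\leq\bigvee S$ and that $\bigvee S$ (or a single suitable member of $S$) still satisfies $f(\,\cdot\,)\leq m$, which by monotonicity gives $f(u)\leq m$.

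For $u\leq\bigvee S$, I would argue that every $j\in J^\infty(A^\delta)$ with $j\leq u$ lies below some member of $S$; since $A^\delta$ is perfect and hence join-generated by $J^\infty(A^\delta)$, this yields $u\leq\bigvee S$. For such a $j$: from $j\leq u$ and Lemma \ref{lem:mapprops}.1 we get $f(j)=\diam_f(j)\leq\diam_f(u)\leq m$, and since $j\in J^\infty(A^\delta)\subseteq K(A^\delta)$, we have $j\in S$ directly (here the abstract setting is actually cleaner than in Lemma \ref{lem:approximation}, since closed elements are closed under the relevant operations, so no compactness-plus-intersection-lemma detour through $A$ is needed at this stage). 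The \textbf{main obstacle}, and the step where the closed Esakia hypothesis does the real work, is passing from $u\leq\bigvee S$ to $f(u)\leq m$. One cannot invoke compactness to replace $\bigvee S$ by a single member of $S$ as was done in Lemma \ref{lem:approximation}, because $S\subseteq K(A^\delta)$ rather than $S\subseteq A$. Instead I would close $S$ under finite joins (using the additivity hypothesis $f(a\lor b)\leq f(a)\lor f(b)$ on $A$, lifted to closed elements via the assumption that $f$ maps $A$ into $K(B^\delta)$ together with compactness, exactly as in the final paragraph of the proof of Proposition \ref{prop:proof item 1 pseudo}) so that $S$ becomes up-directed; then $\{c\mid c\in S\}$ is a down-directed family only after passing to complements, so more carefully I would exhibit $u$ as below a directed join of closed elements each in $S$ and apply the open/closed Esakia preservation properties of $f$ to commute $f$ with that directed join, concluding $f(u)\leq\bigvee\{f(c)\mid c\in S\}\leq m$. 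Verifying that the directedness and the Esakia hypothesis combine correctly—in particular that the relevant directed family consists of closed elements so that \emph{closed} Esakia applies, and handling the meet/join directions so that the hypotheses of Theorem \ref{thm:mainmaps} are used in the right polarity—is the delicate bookkeeping that the proof must get right.
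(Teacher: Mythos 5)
Your overall scaffolding (the easy inequality, reduction to showing $f(u)\leq m$ for a suitable $m$ above $f(\bot)\lor\diam_f(u)$, and a family of elements on which $f$ is known to be bounded by $m$) matches the paper's in spirit, but the core step fails, and it fails exactly at the point you flag as ``delicate bookkeeping''. Having placed $S:=\{c\in K(A^\delta)\mid f(c)\leq m\}$ inside the closed elements, you must pass from $u\leq\bigvee S$ to $f(u)\leq m$, i.e.\ you need $f$ to commute (up to inequality) with an \emph{up-directed join of closed elements}. Neither hypothesis of Theorem \ref{thm:mainmaps} licenses this: closed Esakia concerns \emph{down-directed meets} of closed elements, and open Esakia concerns up-directed joins of \emph{open} elements; an up-directed join of closeds falls squarely between the two. (Your aside about obtaining directedness ``after passing to complements'' does not help: there are no complements in a DLE, and in any case the required commutation is with a join, not a meet.) Worse, commuting $f$ with arbitrary up-directed joins of closed elements, given that $A^\delta$ is join-generated by $J^\infty(A^\delta)\subseteq K(A^\delta)$, is essentially the complete additivity being proved, so no hypothesis at hand could be expected to yield it directly. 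The ``cleanliness'' you note in skipping the detour through $A$ is precisely what breaks the argument: compactness (Definition \ref{def:can:ext2.3}) applies to subsets of $A$, so keeping the witnessing family inside $A$ is what lets one extract a single bounding element --- you correctly observe that compactness is lost for $S\subseteq K(A^\delta)$, but you then have no valid replacement. (Your proposed lifting of additivity from $A$ to $K(A^\delta)$ also cites the wrong mechanism: the final paragraph of the proof of Proposition \ref{prop:proof item 1 pseudo} establishes openness of $\blacksquare_\pi\cnomm$, not additivity on closeds.)

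The paper resolves this with a two-stage argument that cannot be shortcut in the way you attempt. Stage one (Proposition \ref{prop:algholdsmaps}) proves the identity for closed $k$ only, using the ideal $J=\{b\in A\mid f(b)\leq o\}$ \emph{inside} $A$ --- here closed Esakia, the hypothesis $f[A]\subseteq K(B^\delta)$, and compactness all do their work, exactly as in Lemma \ref{lem:approximation}. Stage two lifts to arbitrary $u$ via the adjoint: from $\diam_f(u)\leq o$ one gets $u\leq\blacksquare_f(o)$ by adjunction (Lemma \ref{lem:mapprops}), and Lemma \ref{lem:idealtoidealmaps} --- itself a consequence of stage one --- shows that when $f(\bot)\leq o$, the element $\blacksquare_f(o)$ equals $\bigvee\{a\in A\mid a\leq\blacksquare_f(o)\}$, in particular it is \emph{open} and its joinands lie in $A$, so the open Esakia hypothesis legitimately applies to $f(\blacksquare_f(o))$; each joinand then satisfies $f(a)=f(\bot)\lor\diam_f(a)\leq o$ by stage one and adjunction. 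The missing idea in your proposal is thus the adjoint $\blacksquare_f$ together with Lemma \ref{lem:idealtoidealmaps}: it is the device that converts your problematic join of closed elements into an up-directed join of \emph{lattice} elements with an \emph{open} supremum, which is the only configuration the Esakia hypotheses can handle.
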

\noindent Conditions (\ref{eq:aimmap1}) and (\ref{eq:aimmap2}) respectively imply that the maps $f$ and $g$ are compositions of completely additive maps (cf.\ definition \ref{def:additivity}), and is therefore completely additive.

\noindent From  Lemma \ref{lem:mapprops} and the monotonicity of $f$ and $ g$, it immediately follows that for every $u\in \ca$,
\begin{equation}\label{eq:easy}
f(\bot) \lor \diam_f(u)\leq f(u) \quad\quad g(u) \leq g(\top)\land\Box_g(u).
\end{equation}
The proof of the converse directions  will require two steps. The first one is to show that (\ref{eq:aimmap1})  and (\ref{eq:aimmap2}) respectively hold for every closed element $k \in K(A^\delta)$ and every open element $o\in O(A^\delta)$.
\begin{prop}\label{prop:algholdsmaps}
For all maps $f,g:A^\delta\rightarrow B^\delta$ as in Theorem \ref{thm:mainmaps},
\begin{enumerate}
\item
$f(k)=f(\bot) \lor \diam_f(k)$ for all $k \in K(A^\delta)$;
\item
$g(o) = g(\top)\land\Box_g(o)$ for all $o\in O(A^\delta)$.
\end{enumerate}

\end{prop}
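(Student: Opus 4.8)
The plan is to prove item 1 in detail, item 2 being entirely order-dual (obtained by interchanging $J^\infty$, $K$, $\bigwedge$, $\diam_f$ with $M^\infty$, $O$, $\bigvee$, $\Box_g$, and replacing the additivity hypothesis on $f$ with the multiplicativity hypothesis on $g$, and closed Esakia with open Esakia). Since the inequality $f(\bot)\lor\diam_f(k)\leq f(k)$ already holds for every element by \eqref{eq:easy}, it suffices to establish the converse $f(k)\leq f(\bot)\lor\diam_f(k)$ for every closed $k$. As $B^\delta$ is perfect, $f(\bot)\lor\diam_f(k)$ is the meet of the elements of $M^\infty(B^\delta)$ above it, so this reduces to showing that for every $m\in M^\infty(B^\delta)$ with $f(\bot)\lor\diam_f(k)\leq m$ one has $f(k)\leq m$. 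This is precisely the pattern of the proof of Lemma \ref{lem:approximation}, now run with $f$ in place of $\pi$ and with a closed $k\in K(A^\delta)$ in place of $a\in A$.

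Fixing such an $m$, I would set $J:=\{b\in A\mid f(b)\leq m\}$, mimicking the ideal used in Lemma \ref{lem:approximation}. First I would check that $J$ is a nonempty lattice ideal: $\bot\in J$ since $f(\bot)\leq f(\bot)\lor\diam_f(k)\leq m$; $J$ is downward closed by monotonicity of $f$; and $J$ is closed under binary joins because the additivity hypothesis $f(a\lor b)\leq f(a)\lor f(b)$ on $A$ forces $f(a\lor b)\leq m$ whenever $a,b\in J$.

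The crux is the claim that $k\leq\bigvee J$. Since $A^\delta$ is perfect, $k=\bigvee\{j\in J^\infty(A^\delta)\mid j\leq k\}$, so it is enough to find, for each completely join-irreducible $j\leq k$, some $b\in J$ with $j\leq b$. For such a $j$, Lemma \ref{lem:mapprops}.1 together with monotonicity gives $f(j)=\diam_f(j)\leq\diam_f(k)\leq m$. Now I would invoke the closed Esakia property of $f$: as $j$ is closed, $j=\bigwedge\{a\in A\mid j\leq a\}$ and this set is down-directed (being closed under finite meets) and contained in $K(A^\delta)$, so $f(j)=\bigwedge\{f(a)\mid a\in A,\ j\leq a\}\leq m$. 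This is the step in which closed Esakia plays the role that the intersection lemma played in Lemma \ref{lem:approximation}. Because each $f(a)$ lies in $K(B^\delta)$ by hypothesis and $m$ is open, compactness (in the form that a meet of closed elements dominated by an open element admits a finite subcover) yields $a_1,\dots,a_n\in A$ with $j\leq a_i$ and $\bigwedge_{i=1}^n f(a_i)\leq m$; putting $b:=\bigwedge_{i=1}^n a_i\in A$ gives $j\leq b$ and, by monotonicity, $f(b)\leq\bigwedge_{i=1}^n f(a_i)\leq m$, so $b\in J$. This proves $k\leq\bigvee J$.

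To conclude, note that $k$ is closed while $\bigvee J$ is a join of elements of $A$, hence open; so compactness yields a finite $J'\subseteq J$ with $k\leq\bigvee J'$, and since $J$ is an ideal, $b:=\bigvee J'\in J$. Monotonicity then gives $f(k)\leq f(b)\leq m$, as required. I expect the main obstacle to be the compactness step involving the elements $f(a)$: unlike in Lemma \ref{lem:approximation}, where $\pi(b)$ was clopen and lived in $A$, here $f(a)$ is only guaranteed to be closed in $B^\delta$, so I would need the compactness principle for an arbitrary meet of closed elements below an open element rather than the elementary compactness for subsets of $B$. This generalized form follows by writing each closed $f(a)$ as a meet of elements of $B$ and reducing to the basic compactness condition of Definition \ref{def:can:ext2.3}. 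The other point requiring care is the bookkeeping that ensures closed Esakia genuinely applies to the down-directed family $\{a\in A\mid j\leq a\}$ whose meet is $j$.
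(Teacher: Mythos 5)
Your proof is correct and follows essentially the same route as the paper's: the same ideal $J=\{b\in A\mid f(b)\leq m\}$, the same application of the closed Esakia property followed by compactness to show $k\leq\bigvee J$ via the completely join-irreducibles below $k$, and a final compactness step using that $J$ is an ideal, with item 2 by order-duality. The only cosmetic differences are that you test against completely meet-irreducibles $m\in M^\infty(B^\delta)$ where the paper tests against arbitrary opens $o\in O(B^\delta)$ (both suffice by meet-density), and that you make explicit the generalized compactness principle for a meet of closed elements below an open, which the paper invokes without comment.
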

\begin{proof}

1. Fix $k\in K(\ca)$. By (\ref{eq:easy}), it is enough to show that, if $o \in O(B^\delta)$ and $f(\bot) \lor \diam_f(k) \leq o$, then $f(k)\leq o$.
Consider the set
\[ J := \{b \in A \ | \ f(b) \leq o\}.\]
Note that $J\neq \varnothing$, since $f(\bot) \leq f(\bot) \lor \diam_f(k) \leq o$, and $J$ is a lattice ideal: indeed, $J$ is downward closed by construction, and the join of two elements in $J$ belongs to  $J$, since by assumption $ f(a \lor b) \leq f(a) \lor f(b)$.

We claim that $k \leq \bigvee J$. To see this, let $j \in J^\infty(A^\delta)$ s.t.\ $j \leq k$ and let us show that there is some $b \in J$ s.t.\ $j \leq b$. From $j\leq k$, by Lemma \ref{lem:mapsprops3}.1, we have $f(j) \leq \diam_f(k) \leq o$. Since $f$ is closed Esakia,
\[\bigwedge \{f(b)\mid b\in A\ \mbox{and}\ j \leq b\}  = f(j) \leq o.\]

Since, by assumption, $f(b)\in K(B^{\delta})$ for all $b\in A$, the displayed inequality implies by compactness that $\bigwedge_{i=1}^n f(b_i) \leq o$ for some $b_1, \dots, b_n \in A$ such that $j \leq b_i$ holds for every $1\leq i\leq n$.
Since $f$ is order-preserving, putting $b := \bigwedge_{i=1}^n b_i$ we get $j\leq b$ and  $f(b)=f(\bigwedge_{i=1}^n b_i)\leq \bigwedge_{i=1}^n f(b_i) \leq o$. So we have found some $b \in J$ with $j \leq b$, which finishes the proof of the claim.

From $k \leq \bigvee J$, by compactness, there is some element $a \in J$ such that $k \leq a$. Then $f(k) \leq f(a) \leq o$, as required.

2. is an order-variant of 1.
\end{proof}
The second step will be to show that the inequality proved in the proposition above can be lifted to arbitrary elements of $A^\delta$. For this, we remark that, by  Lemma \ref{lem:mapprops}, the maps $\diam_f,\Box_g :A^{\delta}\rightarrow B^\delta$ have adjoints, which we respectively denote by $\blacksquare_f, \Diamondblack_g :B^\delta\rightarrow \ca$. We will need the following lemma.
\begin{lem}\label{lem:idealtoidealmaps}

For all maps $f,g:A^\delta\rightarrow B^\delta$ as in Theorem \ref{thm:mainmaps}, for any $o \in O(B^\delta)$ and $k \in K(B^\delta)$,

\begin{enumerate}
\item
if $f(\bot)\leq o$, then $\blacksquare_f (o) = \bigvee\{ a\in A \ | \ a \leq \blacksquare_{f}(o) \}\in  O(A^\delta)$. 
\item
if $k\leq g(\top)$, then $\Diamondblack_g (k) = \bigwedge\{ a\in A \ | \ \Diamondblack_{g}(k)\leq a \}\in  K(A^\delta)$. 
\end{enumerate}

\end{lem}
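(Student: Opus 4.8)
The plan is to read everything off from the adjunction $\diam_f \dashv \blacksquare_f$, which exists because $\diam_f$ is completely join-preserving (Lemma \ref{lem:mapprops}.1). First I would record that $\blacksquare_f(v) = \bigvee\{u \in A^\delta \mid \diam_f(u) \leq v\}$ for every $v \in B^\delta$, and that, since $A^\delta$ is completely join-generated by $J^\infty(A^\delta)$ and $\diam_f(j) = f(j)$ for $j \in J^\infty(A^\delta)$ (again Lemma \ref{lem:mapprops}.1), this specializes to
\[
\blacksquare_f(o) = \bigvee\{j \in J^\infty(A^\delta) \mid f(j) \leq o\}.
\]

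The core of the argument is the claim
\[
\blacksquare_f(o) = \bigvee\{a \in A \mid f(a) \leq o\};
\]
since the right-hand side is a join of elements of $A$, this exhibits $\blacksquare_f(o)$ as an open element, whence the stated equality $\blacksquare_f(o) = \bigvee\{a \in A \mid a \leq \blacksquare_f(o)\}$ follows at once from the definition of $O(A^\delta)$. The hypothesis $f(\bot) \leq o$ serves, exactly as $J \neq \varnothing$ in the proof of Proposition \ref{prop:algholdsmaps}, to guarantee that the index set $\{a \in A \mid f(a) \leq o\}$ is nonempty (it contains $\bot$).

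For the inequality $\geq$ I would take $a \in A$ with $f(a) \leq o$; every $j \in J^\infty(A^\delta)$ with $j \leq a$ then satisfies $f(j) \leq f(a) \leq o$ by monotonicity and hence lies below $\blacksquare_f(o)$, so that $a = \bigvee\{j \in J^\infty(A^\delta) \mid j \leq a\} \leq \blacksquare_f(o)$. The reverse inequality is the main obstacle, and it is here that the Esakia property and compactness enter. Given $j \in J^\infty(A^\delta)$ with $f(j) \leq o$, I would write $j = \bigwedge\{a \in A \mid j \leq a\}$ (as $j$ is closed), a down-directed meet of clopen, hence closed, elements; since $f$ is closed Esakia,
\[
f(j) = \bigwedge\{f(a) \mid a \in A,\ j \leq a\} \leq o.
\]
This is a down-directed meet of elements of $K(B^\delta)$ (using $f(a) \in K(B^\delta)$) lying below the open element $o$. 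Writing each closed $f(a)$ as a meet of elements of $B$ and $o$ as a join of elements of $B$, compactness produces a finite subfamily whose meet is below $o$, and down-directedness collapses this finite subfamily to a single $a \in A$ with $j \leq a$ and $f(a) \leq o$. Hence $j \leq a \leq \bigvee\{a' \in A \mid f(a') \leq o\}$, which finishes the claim.

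Part 2 is the order-dual of part 1, for the adjoint $\Diamondblack_g$ of the completely meet-preserving map $\Box_g$: it uses that $g$ is open Esakia and that $g(a) \in O(B^\delta)$, and is obtained from the above by interchanging joins with meets, closed with open, and $J^\infty$ with $M^\infty$. The only genuine work is the compactness-plus-Esakia step isolated above; the remainder is bookkeeping through the adjunction and complete join-generation.
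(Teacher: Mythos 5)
Your proof is correct, but it takes a genuinely different route from the paper's. The paper reduces openness of $\blacksquare_f(o)$ to showing that every closed $c \leq \blacksquare_f(o)$ lies below some $a \in A$ with $a \leq \blacksquare_f(o)$, and to pass from $\diam_f(c) \leq o$ (adjunction) to $f(c) \leq o$ it invokes Proposition \ref{prop:algholdsmaps}, i.e.\ the identity $f(c) = f(\bot) \vee \diam_f(c)$ on closed elements --- this is precisely where the additivity of $f$ on $A$ and the side condition $f(\bot) \leq o$ enter. You instead restrict attention to completely join-irreducibles, where $\diam_f(j) = f(j)$ holds on the nose by Lemma \ref{lem:mapprops}.1, so this passage is free: your argument never uses Proposition \ref{prop:algholdsmaps}, never uses additivity, and uses $f(\bot) \leq o$ only to ensure nonemptiness of $\{a \in A \mid f(a) \leq o\}$ --- which is in fact dispensable, since an empty join is $\bot \in O(A^\delta)$. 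The kernel of the hard direction (writing $j = \bigwedge\{a \in A \mid j \leq a\}$, pushing $f$ through the down-directed meet via the closed Esakia property, compactness, and collapsing to the finite meet $a = \bigwedge_{i=1}^n a_i$) is identical to the paper's. What your route buys is a sharper statement: $\blacksquare_f(o) = \bigvee\{a \in A \mid f(a) \leq o\}$ for any monotone closed Esakia $f$ with $f(A) \subseteq K(B^\delta)$, no additivity or side condition needed; indeed, it is the same computation the paper itself performs, for term functions, in establishing \eqref{equ:blacksquare:open} inside the proof of Proposition \ref{prop:proof item 1 pseudo}. What the paper's route buys is economy within its own development: the lemma becomes an immediate corollary of the already-proved Proposition \ref{prop:algholdsmaps}, and it keeps the side condition $f(\bot) \leq o$ in view, which is exactly the hypothesis that later makes Proposition \ref{prop:Esa} a \emph{conditional} Esakia lemma. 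One step you should make explicit: your specialization $\blacksquare_f(o) = \bigvee\{j \in J^\infty(A^\delta) \mid f(j) \leq o\}$ uses, besides complete join-generation, that the elements of $J^\infty(A^\delta)$ are completely join-\emph{prime} (true in perfect distributive lattices, and the same step as in the adjunction chain of Section \ref{Sec:pseudo}), since otherwise a join-irreducible below the displayed join need not lie below a single joinand.
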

\begin{proof}
1. To prove the statement, it
is enough to show that if $c\in K(A^\delta)$ and $c\leq \blacksquare_f (o)$, then $c\leq a$ for some $a\in
A$ such that $a\leq {\blacksquare}_f (o)$.  By adjunction, $c\leq \blacksquare_f (o)$ is equivalent to $\Diamond_f(c)\leq o$. Then, by assumption,  $f(\bot)\vee \Diamond_f(c)\leq o$. Proposition \ref{prop:algholdsmaps} implies that $f(c)\leq f(\bot)\vee \Diamond_f(c)\leq o$.
%
Since $f$ is closed Esakia, $f(c) = \bigwedge\{f(a)\ |\ a\in A\mbox{ and } c\leq a\}$. Moreover, by assumption, $f(a)\in K(B^{\delta})$ for every $a\in A$. Hence by compactness, $\bigwedge_{i=1}^n f(a_i) \leq o$ for some $a_1, \ldots, a_n\in A$ s.t.\ $c \leq a_i$, $1\leq i\leq n$.
Let $a = \bigwedge_{i=1}^n a_i$. Clearly, $c\leq a$ and $a\in A$; moreover, by the monotonicity of $f$ and Lemma \ref{lem:mapprops}.1, we have $\Diamond_f (a)\leq f(a)\leq  \bigwedge_{i=1}^n f (a_i) \leq o$, and hence, by adjunction, $a\leq \blacksquare_f (o)$.

2. is an order-variant of 1.
\end{proof}
We are now ready to prove Proposition \ref{prop:main}:
\begin{proof}[Proof of identity (\ref{eq:aimmap1}).]
By (\ref{eq:easy}), it is enough to show that, if $o \in O(B^\delta)$ and $f(\bot) \lor \diam_f(u) \leq o$, then $f(u) \leq o$. The assumption implies that $\diam_f(u) \leq o$, so $u \leq \blacksquare_f(o)$ by adjunction. Hence $f(u) \leq f(\blacksquare_f(o))$, since $f$ is order-preserving. Since $f(\bot)\leq o$, the following chain holds:
\begin{align*}
f(u) \leq f(\blacksquare_f(o)) &= f\left(\bigvee \{ a \ |\ a\in A\ \text{and}\  \ a \leq \blacksquare_f(o) \}\right) \  \ &\text{(Lemma~\ref{lem:idealtoidealmaps})} \\
&= \bigvee \{ f(a) \ |\ a\in A\ \text{and}\ a \leq \blacksquare_f(o)\} \  \ &\text{($f$ is open Esakia)} \\
&= \bigvee \{ f(a) \ |\ a\in A\ \text{and}\  \ \diam_f(a) \leq o \}  \  \ &\text{(adjunction)} \\
&= \bigvee \{ f(\bot) \lor \diam_f(a) \ |\ a\in A\ \text{and}\  \ \diam_f(a) \leq o\} \  \ &\text{(Proposition~\ref{prop:algholdsmaps})}\\
&\leq o. \  \ &(f(\bot) \leq o) &\qedhere
\end{align*}
\end{proof}

\noindent Before moving on, we state and prove the following Esakia-type result. We can call it a {\em conditional Esakia lemma}, since, unlike other existing versions, it crucially relies on additional assumptions (on $f(\bot)$ and $g(\top)$).

\begin{prop}\label{prop:Esa}

For any $f,g:A^\delta\rightarrow B^\delta$ as in Theorem \ref{thm:mainmaps}, for any upward-directed collection $\mathcal{O} \subseteq O(B^\delta)$ and any downward-directed collection $\mathcal{C} \subseteq K(B^\delta)$,

\begin{enumerate}
\item
if $f(\bot)\leq \bigvee \mathcal{O}$, then $\blacksquare_f (\bigvee \mathcal{O}) = \bigvee\{ \blacksquare_f o \mid o \in \mathcal{O} \}$. Moreover, there exists some upward-directed subcollection  $\mathcal{O}'\subseteq \mathcal{O}$ such that $\bigvee\mathcal{O}' = \bigvee \mathcal{O}$, and $\blacksquare_f o\in O(\ca)$ for each $o\in \mathcal{O}'$, and $\bigvee\{ \blacksquare_f o \mid o \in \mathcal{O}' \} = \bigvee\{ \blacksquare_f o \mid o \in \mathcal{O} \}$.  
\item
if $g(\top)\geq \bigwedge \mathcal{C}$, then $\Diamondblack_g (\bigwedge \mathcal{C}) = \bigwedge\{ \Diamondblack_g c \mid c \in \mathcal{C} \}$. Moreover, there exists some downward-directed subcollection  $\mathcal{C}'\subseteq \mathcal{C}$ such that $\bigwedge\mathcal{C}' = \bigwedge \mathcal{C}$, and $\Diamondblack_g c\in K(\ca)$ for each $c\in \mathcal{C}'$, and $\bigwedge\{ \Diamondblack_g c \mid c \in \mathcal{C}' \} = \bigwedge\{ \Diamondblack_g c \mid c \in \mathcal{C} \}$.
\end{enumerate}
\end{prop}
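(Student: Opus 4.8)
The plan is to prove part~1 in full and then obtain part~2 by order-duality (interchanging joins with meets, closed with open, $\diam_f$ with $\Box_g$, and $\blacksquare_f$ with $\Diamondblack_g$ throughout). Fix an upward-directed $\mathcal{O} \subseteq O(B^\delta)$ with $f(\bot) \leq \bigvee \mathcal{O}$. I would separate the work into the core identity $\blacksquare_f(\bigvee \mathcal{O}) = \bigvee\{\blacksquare_f o \mid o \in \mathcal{O}\}$ and the subsequent refinement to a subcollection $\mathcal{O}'$. One inequality is immediate: since $\blacksquare_f$ is a right adjoint of $\diam_f$ (cf.\ Lemma~\ref{lem:mapprops}.1), it is monotone, so $\bigvee\{\blacksquare_f o\} \leq \blacksquare_f(\bigvee\mathcal{O})$. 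Everything thus reduces to the reverse inequality.

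For the reverse inequality I would use that $A^\delta$ is completely join-generated by $J^\infty(A^\delta)$, so it suffices to show that every $j \in J^\infty(A^\delta)$ with $j \leq \blacksquare_f(\bigvee\mathcal{O})$ lies below $\bigvee\{\blacksquare_f o\}$. By the adjunction $\diam_f \dashv \blacksquare_f$ and the identity $\diam_f(j) = f(j)$ (Lemma~\ref{lem:mapprops}.1), the hypothesis $j \leq \blacksquare_f(\bigvee\mathcal{O})$ becomes $f(j) \leq \bigvee\mathcal{O}$. Now $j$ is closed and equals the down-directed meet $\bigwedge\{a \in A \mid j \leq a\}$ of clopen (hence closed) elements, so the closed Esakia property of $f$ rewrites this as $\bigwedge\{f(a) \mid a \in A,\ j \leq a\} \leq \bigvee\mathcal{O}$: a down-directed meet of closed elements of $B^\delta$ (each $f(a) \in K(B^\delta)$ by hypothesis) below an up-directed join of opens. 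A directed form of compactness then produces a single $a_0 \in A$ with $j \leq a_0$ and a single $o_0 \in \mathcal{O}$ with $f(a_0) \leq o_0$; from $\diam_f(a_0) \leq f(a_0) \leq o_0$ and adjunction I conclude $j \leq a_0 \leq \blacksquare_f(o_0) \leq \bigvee\{\blacksquare_f o\}$, as required.

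For the refinement I would set $\mathcal{O}' := \{o \in \mathcal{O} \mid f(\bot) \leq o\}$. Since $f(\bot) \in K(B^\delta)$ is closed and $f(\bot) \leq \bigvee\mathcal{O}$, compactness gives $\mathcal{O}' \neq \varnothing$; upward-directedness of $\mathcal{O}$ then makes $\mathcal{O}'$ upward-directed and cofinal in $\mathcal{O}$, whence $\bigvee\mathcal{O}' = \bigvee\mathcal{O}$. For each $o \in \mathcal{O}'$ we have $f(\bot) \leq o$, so Lemma~\ref{lem:idealtoidealmaps}.1 yields $\blacksquare_f o \in O(A^\delta)$. Applying the core identity once to $\mathcal{O}$ and once to $\mathcal{O}'$ (both legitimate, as $f(\bot) \leq \bigvee\mathcal{O}' = \bigvee\mathcal{O}$) gives $\bigvee\{\blacksquare_f o \mid o \in \mathcal{O}'\} = \blacksquare_f(\bigvee\mathcal{O}') = \blacksquare_f(\bigvee\mathcal{O}) = \bigvee\{\blacksquare_f o \mid o \in \mathcal{O}\}$, completing part~1.

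The step I expect to be the main obstacle is the directed compactness invocation: the basic compactness of Definition~\ref{def:can:ext2.3} concerns meets and joins of elements of $A$ (resp.\ $B$), whereas here I must descend from a down-directed meet of \emph{closed} elements below an up-directed join of \emph{open} elements to one containment $f(a_0) \leq o_0$. I would secure this by writing each $f(a)$ as $\bigwedge S_a$ and each $o$ as $\bigvee T_o$ with $S_a, T_o \subseteq B$, applying Definition~\ref{def:can:ext2.3} to $\bigwedge\bigcup_a S_a \leq \bigvee\bigcup_o T_o$ to extract finite $F, G$ with $\bigwedge F \leq \bigvee G$, then collapsing the finitely many $a$'s occurring in $F$ to $a_0 := a_1 \wedge \cdots \wedge a_k \in A$ (using monotonicity of $f$, so $f(a_0) \leq \bigwedge F$) and the finitely many $o$'s occurring in $G$ to a single $o_0 \in \mathcal{O}$ above them (using upward-directedness, so $\bigvee G \leq o_0$). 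It is worth emphasising that the hypothesis $f(\bot) \leq \bigvee\mathcal{O}$ is not actually needed for the core identity, but is precisely what powers the refinement via Lemma~\ref{lem:idealtoidealmaps}; this is the sense in which the result is genuinely \emph{conditional}.
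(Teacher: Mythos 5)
Your proof is correct, but it takes a genuinely different route from the paper's at the key step. The paper establishes $\blacksquare_f(\bigvee\mathcal{O}) = \bigvee\{\blacksquare_f o \mid o \in \mathcal{O}\}$ by taking an arbitrary closed $k \leq \blacksquare_f(\bigvee\mathcal{O})$, rewriting this by adjunction as $\diam_f k \leq \bigvee\mathcal{O}$, and then \emph{using the hypothesis} $f(\bot)\leq\bigvee\mathcal{O}$ together with Proposition \ref{prop:algholdsmaps} to upgrade it to $f(k)\leq\bigvee\mathcal{O}$; since $f(k)\in K(B^\delta)$ (closed Esakia plus $f(A)\subseteq K(B^\delta)$), compactness and directedness then produce a single $o_k\in\mathcal{O}$ with $k\leq\blacksquare_f(o_k)$. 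You instead work at completely join-irreducible elements, where $\diam_f(j)=f(j)$ (Lemma \ref{lem:mapprops}.1), descend into $A$ via closed Esakia applied to $j=\bigwedge\{a\in A\mid j\leq a\}$, and run compactness from Definition \ref{def:can:ext2.3} directly; this bypasses Proposition \ref{prop:algholdsmaps} altogether and, as you correctly observe, shows the first equality holds \emph{unconditionally} --- indeed it uses only monotonicity, closed Esakia and $f(A)\subseteq K(B^\delta)$, not even the subadditivity of $f$ on $A$ --- so that the hypothesis $f(\bot)\leq\bigvee\mathcal{O}$ is consumed solely by the refinement, exactly as you say. Your refinement set $\mathcal{O}'=\{o\in\mathcal{O}\mid f(\bot)\leq o\}$ is also simpler and more canonical than the paper's $\mathcal{O}'=\{o\in\mathcal{O}\mid o\geq o_k \mbox{ for some closed } k\leq\blacksquare_f(\bigvee\mathcal{O})\}$, though both are cofinality arguments feeding Lemma \ref{lem:idealtoidealmaps}. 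What the paper's route buys is economy within its own development: Proposition \ref{prop:algholdsmaps} is already in hand, and its witnesses $o_k$ automatically satisfy $f(\bot)\leq o_k$, which is precisely what the paper's $\mathcal{O}'$ requires. What your route buys is weaker hypotheses for the core identity and a sharper localization of the conditionality, which is conceptually clarifying given that the paper advertises this as a \emph{conditional} Esakia lemma. One shared caveat: both arguments tacitly take directed families to be nonempty (as is standard), and this is genuinely needed --- for $\mathcal{O}=\varnothing$ and, e.g., the constant map $f\equiv\bot$ (which satisfies all hypotheses, including $f(\bot)\leq\bigvee\mathcal{O}=\bot$), one has $\blacksquare_f(\bot)=\top\neq\bot=\bigvee\varnothing$.
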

\begin{proof}
1. It is enough to show that, if $k\in K(\ca)$ and $k\leq \blacksquare_f (\bigvee \mathcal{O}) $, then $k\leq \blacksquare_f (o_k) $ for some $o_k\in \mathcal{O} $. The assumption $k\leq \blacksquare_f (\bigvee \mathcal{O}) $ can be rewritten as $\diam_f k\leq \bigvee \mathcal{O} $, which together with $f(\bot)\leq \bigvee \mathcal{O}$ yields $f(\bot)\vee \diam_f k\leq  \bigvee \mathcal{O}$. By Proposition  \ref{prop:algholdsmaps}, this inequality can be rewritten as $f( k)\leq \bigvee \mathcal{O} $. Since $f$ is closed Esakia and $f(a)\in K(B^\delta)$ for every $a\in A$, the element $f(k)\in K(B^\delta)$. By compactness, $f( k)\leq \bigvee_{i = 1}^n o_i $ for some $o_1,\ldots,o_n\in \mathcal{O}$. Since $\mathcal{O}$ is upward-directed,  $o_k \geq \bigvee_{i = 1}^n o_i$ for some $o_k\in \mathcal{O}$. Then  $f(\bot)\vee \diam_f k = f(k)\leq o_k$, which yields $\diam_f k\leq o_k$, which by adjunction can be rewritten as $k\leq \blacksquare_f (o_k) $ as required.

As to the second part of the statement, notice that the assumption  $f(\bot)\leq \bigvee \mathcal{O}$ is too weak to imply that  $f(\bot)\leq o$ for each $o\in \mathcal{O}$, and hence we cannot conclude, by way of   Lemma \ref{lem:idealtoidealmaps}, that $\blacksquare_f o \in O(\ca)$ for every $o\in \mathcal{O}$. However, let \[\mathcal{O}': = \{o\in \mathcal{O}\mid o\geq o_k \mbox{ for some } k\in K(\ca) \mbox{ s.t. } k\leq  \blacksquare_f (\bigvee \mathcal{O})\}.\]
Clearly, by construction $\mathcal{O}'$ is upward-directed, it holds that $\bigvee\mathcal{O}' = \bigvee \mathcal{O}$, and for each $o\in \mathcal{O}'$, we have that $f(\bot)\leq o_k\leq o$ for some  $k\in K(\ca)$ s.t.\ $k\leq  \blacksquare_f (\bigvee \mathcal{O})$, hence, by Lemma \ref{lem:idealtoidealmaps}, $\blacksquare_f o\in O(\ca)$ for each $o\in \mathcal{O}'$. Moreover, the monotonicity of $\blacksquare_f$ and the previous part of the statement imply that  $\bigvee\{ \blacksquare_f o \mid o \in \mathcal{O}' \} = \blacksquare_f (\bigvee \mathcal{O}) = \bigvee\{ \blacksquare_f o \mid o \in \mathcal{O} \}$.

2. is order-dual.

\end{proof}

\subsection{Canonicity of the additivity of  DLE-term functions}\label{ssec:canonicity additivity pseudo jonsson style}

In the present subsection, the canonicity of the additivity  for $\varepsilon$-positive term functions in any given DLE-language is obtained as a consequence of Theorem \ref{thm:mainmaps}.  We recall that for any $n\in \mathbb{N}$, an {\em order-type on} $n$ is an element $\varepsilon\in \{1, \partial\}^n$.

\begin{lem}
\label{lem:Esakia for original maps}
The map $\pi^{\ca}:(A^{\varepsilon})^\delta\rightarrow \ca$ preserves meets of down-directed collections $\mathcal{C}\subset K((A^{\varepsilon})^\delta)$ and joins of up-directed collections $\mathcal{O}\subset O((A^{\varepsilon})^\delta)$.
\end{lem}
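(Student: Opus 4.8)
The plan is to prove both clauses at once by structural induction on the $\varepsilon$-positive term $\pi$, exploiting the fact that the two assertions together say precisely that $\pi^{\ca}$, regarded as a monotone map $(A^\varepsilon)^\delta\to\ca$ via the $\varepsilon$-adjusted order, is both \emph{closed Esakia} and \emph{open Esakia}. The naive induction does not close up, because a composite of closed-Esakia maps need not be closed Esakia: after applying the inner map, the image of a down-directed family of closed elements is again down-directed by monotonicity, but it need not consist of \emph{closed} elements, so the outer map's Esakia property cannot be invoked. I would therefore strengthen the induction hypothesis to the conjunction of four properties for each subterm (with the order-type induced by its sign inside $\pi$): it is closed Esakia, it is open Esakia, it maps closed elements to closed elements, and it maps open elements to open elements; and, since subterms may occur with either sign, I would run the induction simultaneously over $\varepsilon$-positive and $\varepsilon$-negative term functions, the latter treated as the former into the order-dual (so that closed and open elements swap roles). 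This bookkeeping of order-types is routine and I would suppress it.

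For the base cases, projections and the constants $\top,\bot$ preserve all meets and all joins and clearly map closed elements to closed elements and open to open, so all four properties hold trivially. For the inductive step the outermost symbol is $\wedge$, $\vee$, some $f\in\mathcal{F}$, or some $g\in\mathcal{G}$, the last two interpreted on $\ca$ by $f^\sigma$ and $g^\pi$ respectively. For $\wedge$ and $\vee$ the four properties follow from the complete distributivity of $\ca$ (which yields preservation of up-directed joins and of down-directed meets) together with the fact that $K(\ca)$ is closed under finite joins and $O(\ca)$ under finite meets. For $f\in\mathcal{F}$, since $f$ is an operator, $f^\sigma$ is completely additive (cf.\ \cite{GNV}); this delivers open Esakia and, via $f^\sigma(o)=\bigvee\{f(a)\mid a\in A,\ a\leq o\}$, preservation of open elements. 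For the remaining two properties I would use the Esakia lemma for $\sigma$-extensions of monotone maps: on a closed $k$ one has $f^\sigma(k)=\bigwedge\{f(a)\mid a\in A,\ k\leq a\}$, which is closed and, by a compactness argument identical in spirit to the one in the proof of Proposition \ref{prop:algholdsmaps}, yields $f^\sigma(\bigwedge\mathcal{C})=\bigwedge f^\sigma[\mathcal{C}]$ for down-directed $\mathcal{C}\subseteq K(\ca)$. The case $g\in\mathcal{G}$ is order-dual, using that $g^\pi$ is completely multiplicative.

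With the strengthened hypothesis in hand the composition goes through: given a down-directed $\mathcal{C}\subseteq K((A^\varepsilon)^\delta)$, each immediate subterm sends $\mathcal{C}$ to a down-directed family of closed elements whose meet it preserves (meets in the product being computed coordinatewise), and the outer connective, being closed Esakia and mapping closed elements to closed elements, then preserves the resulting meet; the up-directed open case is symmetric. I expect the main obstacle to be exactly this closure-under-composition point, that is, isolating and verifying the two auxiliary preservation properties (closed to closed and open to open) for the primitive operators, and observing that the two Esakia directions for each operator issue from genuinely different sources: complete additivity (resp.\ multiplicativity) for one direction, and the compactness-based Esakia lemma for the other.
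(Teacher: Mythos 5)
Your proposal is correct and follows essentially the same route as the paper, whose proof is given by reference: an induction on $\pi$ using the preservation properties of the single connectives as in the Esakia Lemma of \cite{CoPa12} (and \cite{Zhao13}), where the induction hypothesis is strengthened in exactly the way you describe --- carrying along, besides the two Esakia properties, the facts that closed elements are mapped to closed elements and open to open, which is also precisely the scheme the paper itself implements in the appendix for the extended language (Lemmas \ref{lemma:synct closed is sem closed} and \ref{MJ:Pres}). Your identification of the two distinct sources for the two directions at $f^\sigma$ (complete additivity versus the compactness-based Esakia argument) matches the standard proofs being cited.
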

\begin{proof}
The proof is analogous to the proof of  \cite[Lemmas 4.12, 6.10]{Zhao13} and is done by induction on $\pi$, using the preservation properties of the single connectives, as in e.g.\ in \cite[Esakia Lemma 11.5]{CoPa12}.
\end{proof}

\begin{thm}\label{thm:main}
Let $\varepsilon$ be an order-type on $n\in \mathbb{N}$, let $\vp,\vq$ be $n$-tuples of proposition letters, and let $\pi(\vp)$ be an $n$-ary term function which is positive as a map $A^\varepsilon\rightarrow A$.  
\[ \mbox{If}\  A \models \pi(\vp \lor \vq) \leq \pi(\vp) \lor \pi(\vq),\ \mbox{ then }\ A^\delta \models \pi(\vp \lor \vq) \leq \pi(\vp) \lor \pi(\vq).\]
\end{thm}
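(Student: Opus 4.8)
The plan is to reduce Theorem~\ref{thm:main} to Theorem~\ref{thm:mainmaps} by treating the multivariable term function $\pi(\vp)$ as a monotone map of a single variable into $\ca$. The key device is to bundle the $n$ coordinates together using the order-type $\varepsilon$: I would set $C := A^\varepsilon$, so that $\pi$ becomes a map $\pi : C^\delta \to \ca$ which is \emph{monotone} (i.e.\ order-preserving with respect to the $\varepsilon$-twisted order on the domain), since $\pi$ is $\varepsilon$-positive. The hypothesis $A \models \pi(\vp \lor \vq) \leq \pi(\vp) \lor \pi(\vq)$ then says precisely that $\pi$, as a map on $C = A^\varepsilon$, sends binary joins (in the $\varepsilon$-twisted order) below the join of the images, i.e.\ $\pi(x \lor_\varepsilon y) \leq \pi(x) \lor \pi(y)$ for all $x,y \in C$. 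The goal inequality $A^\delta \models \pi(\vp \lor \vq) \leq \pi(\vp) \lor \pi(\vq)$ is exactly the assertion $\pi(u \lor_\varepsilon v) \leq \pi(u) \lor \pi(v)$ for all $u,v \in C^\delta$, which is the conclusion of item~1 of Theorem~\ref{thm:mainmaps}.

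To invoke Theorem~\ref{thm:mainmaps}.1 I must verify its three hypotheses for $f := \pi : C^\delta \to \ca$. First, $f$ must be monotone, which holds because $\pi$ is $\varepsilon$-positive. Second, I need $f(a) \in K(\ca)$ for all $a \in C$; this holds because $\pi$ is a term function built from the primitive connectives, whose interpretations send elements of the original algebra to clopen (hence closed) elements of the canonical extension, and finite joins/meets of closed elements are closed. Third, $f$ must be both closed Esakia and open Esakia; this is supplied verbatim by Lemma~\ref{lem:Esakia for original maps}, which states that $\pi^\ca$ preserves down-directed meets of closed elements and up-directed joins of open elements of $(A^\varepsilon)^\delta$. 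A small subtlety to handle carefully is the identification $(A^\varepsilon)^\delta = (A^\delta)^\varepsilon$: the canonical extension commutes with forming the $\varepsilon$-dual, so that the closed and open elements of $C^\delta$ are as required, and the twisted order behaves correctly under passage to the canonical extension. With these checks in place, Theorem~\ref{thm:mainmaps}.1 yields $\pi(u \lor_\varepsilon v) \leq \pi(u) \lor \pi(v)$ for all $u,v \in C^\delta$, which unpacks into the desired conclusion.

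The main obstacle I anticipate is the bookkeeping around the order-type $\varepsilon$: one must be precise that a join in the $\varepsilon$-twisted product order on $A^\varepsilon$ corresponds coordinatewise to $\lor$ in coordinates $i$ with $\varepsilon_i = 1$ and to $\land$ in coordinates with $\varepsilon_i = \partial$, and that monotonicity of $\pi : A^\varepsilon \to A$ is genuinely what $\varepsilon$-positivity delivers. Once the translation $\vp \lor \vq \leftrightarrow x \lor_\varepsilon y$ is set up correctly, the remainder is a direct application of the abstract theorem, so no further hard analytic work is needed. In particular, note that Theorem~\ref{thm:mainmaps} is explicitly designed to apply to maps that are \emph{not} assumed to be $\sigma$- or $\pi$-extensions, which is exactly the situation here since $\pi$ on $C^\delta$ is the term function computed in $\ca$ rather than a canonical extension of a single primitive map; this is why the Esakia hypotheses must be obtained from Lemma~\ref{lem:Esakia for original maps} rather than from standard preservation results for extensions.
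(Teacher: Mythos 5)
Your proposal is correct and takes essentially the same route as the paper: the paper's proof likewise reduces Theorem \ref{thm:main} to Theorem \ref{thm:mainmaps}, observing that $\pi(\va)\in K(\ca)$ simply because $\pi$ maps $A^\varepsilon$ into $A\subseteq K(\ca)$, and obtaining the closed and open Esakia conditions from Lemma \ref{lem:Esakia for original maps}. Your additional care about the $\varepsilon$-twisted join and the identification $(A^{\varepsilon})^{\delta}\cong(\ca)^{\varepsilon}$ merely makes explicit bookkeeping that the paper leaves implicit.
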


\begin{proof}
By Theorem \ref{thm:mainmaps}, it is enough to show that $\pi$ is both closed and open Esakia, and that $\pi(\va) \in K(\ca)$ for any $\va\in A^\varepsilon$. Clearly, the second requirement follows from $\pi: A^\varepsilon\rightarrow A\subseteq  K(\ca)$. The first requirement is the content of the  Lemma \ref{lem:Esakia for original maps} above.
\end{proof}

\begin{exa}
Let $\vp=(p_1,p_2)$ and $\vq=(q_1,q_2)$ be tuples of propositional letters, and $\pi(\vp)=\Box p_1 \circ \Box p_2$. Since $\pi(\vp):A^2\rightarrow A$ is a positive map, by Theorem \ref{thm:main}, the inequality \[\Box( p_1\vee q_1)\circ \Box( p_2\vee q_2)\leq (\Box p_1\circ \Box p_2)\vee (\Box q_1\circ \Box q_2)\] is canonical.

The term function $\pi(p)={\rhd}{\rhd}{\rhd} p$ is positive as a map $A^\partial\rightarrow A$. Hence, by Theorem \ref{thm:main}, the following inequality is canonical: \[{\rhd}{\rhd}{\rhd}( p\wedge q)\leq {\rhd}{\rhd}{\rhd}p\vee {\rhd}{\rhd}{\rhd}q. \]

\end{exa}

\section{Towards extended canonicity results: enhancing ALBA}\label{sec:enhancedALBA}
In the present section, we define an enhanced version of ALBA, which we refer to as ALBA$^e$, which manipulates quasi-inequalities in the expanded language DLE$^{++}$, and which will be used in Section 7 to prove our main result. The results obtained in the previous two sections will be applied to show that the additional rules are sound in the presence of certain additional conditions.

Throughout the present section, we fix unary DLE-terms $\pi(p), \sigma(p), \lambda(p)$ and $\rho(p)$, and we assume that $\pi$ and $\sigma$ are positive in $p$, and that $\lambda$ and $\rho$ are negative in $p$.


\medskip

The quasi-inequalities in $\mathrm{DLE}^{++}$ manipulated by the rules of ALBA$^{e}$ will have the  usual form $\forall\overline p\forall \overline\nomi\forall \overline\cnomm(\bigamp S\ \Rightarrow \nomi\leq \cnomm)$, with $S$ being a finite set of $\mathrm{DLE}^{++}$-inequalities, which we will often refer to as a system, and $\overline p, \overline\nomi $ and $\overline\cnomm$ being the arrays of propositional variables, nominals and conominals occurring in $S\cup\{\nomi\leq\cnomm\}$. In practice, we will simplify our setting and focus mainly on  the system $S$.

The interpretation of the new connectives is motivated by the specialization of the facts in Section 3 and 4 to the term functions associated with $\pi(p), \sigma(p), \lambda(p)$ and $\rho(p)$.
\begin{dfn}\label{def:definedmaps}
For any term function $\pi^{A^\delta},\sigma^{A^\delta}, \lambda^{A^\delta}, \rho^{A^\delta}:A^\delta\rightarrow A^\delta$ as above, let  $\diam_\pi, \Box_\sigma, \lhd_\lambda, \rhd_\rho : A^\delta \to A^\delta$ be defined as follows. For any $u\in A^\delta$,

\begin{enumerate}

\item $\diam_\pi (u) :=\bigvee\{ \pi^{A^\delta}(j) \ | \ j\in J^\infty(A^\delta) \mbox{ and } j\leq u \}$. \item $\Box_\sigma(u):=\bigwedge\{\sigma^{A^\delta}(m)\mid m\in M^\infty(A^\delta) \mbox{ and } m\geq u\}$.
\item $\lhd_\lambda(u):=\bigvee\{\lambda^{A^\delta}(m)\mid m\in M^\infty(A^\delta) \mbox{ and } m\geq u\}$.
\item $\rhd_\rho(u):=\bigwedge\{\,\rho^{A^\delta}(j)\mid j\in J^\infty(A^\delta) \mbox{ and } j\leq u\}$.

\end{enumerate}
\end{dfn}
 \noindent Each of the functions above has an adjoint (cf.\ Lemma \ref{lem:mapprops}). Let $\blacksquare_\pi, \Diamondblack_\sigma, \blhd, \brhd: A^\delta \to A^\delta$ respectively denote the adjoints of  the maps $\diam_\pi, \Box_\sigma, \lhd_\lambda, \rhd_\rho$. These maps provide a natural interpretation for the new connectives associated with the terms $\pi(p), \sigma(p), \lambda(p)$ and $\rho(p)$.

The algorithm ALBA$^e$ works in an entirely analogous way as ALBA, through the stages of preprocessing, first approximation, reduction/elimination cycle, success, failure and output. Below, we will limit ourselves to mention rules that are additional w.r.t.\ ALBA on DLE.

\paragraph{Distribution rules. } During the preprocessing stage, along with  the DLE-distribution rules, the following rules are applicable:

\begin{center}
\begin{tabular}{cccc}
\AxiomC{$\pi(\phi\vee \psi)\leq\chi$}
\UnaryInfC{$\pi(\phi)\vee \pi(\psi)\leq\chi$}
\DisplayProof

&
\AxiomC{$\lambda(\phi\wedge \psi)\leq\chi$}
\UnaryInfC{$\lambda(\phi)\vee \lambda(\psi)\leq\chi$}
\DisplayProof

&
\AxiomC{$\chi\leq  \sigma(\phi\wedge \psi)$}
\UnaryInfC{$\chi \leq \sigma(\phi)\wedge \sigma(\psi)$}
\DisplayProof

& \AxiomC{$\chi\leq  \rho(\phi\vee \psi)$}
\UnaryInfC{$\chi \leq \rho(\phi)\wedge \rho(\psi)$}
\DisplayProof \\

\end{tabular}
\end{center}
Each of these rules replaces  an instance of the upper inequality  with the corresponding instance of the lower inequality.

\paragraph{Adjunction rules. } During the reduction-elimination stage, the following rules are also applicable:

\begin{center}
\begin{tabular}{cccc}
\AxiomC{$\pi(\phi)\leq\psi$}
\UnaryInfC{$\pi(\bot)\leq \psi\;\;\;\phi\leq\blacksquare_{\pi}\psi$}
\DisplayProof

&
\AxiomC{$ \phi\leq\sigma(\psi)$}
\UnaryInfC{$\phi\leq\sigma(\top)\;\;\; \Diamondblack_\sigma\phi\leq\psi$}
\DisplayProof

&
\AxiomC{$\lambda(\phi)\leq \psi$}
\UnaryInfC{$\lambda(\top)\leq\psi\;\;\; {\blhd} \psi\leq\phi$}
\DisplayProof

& \AxiomC{$\phi\leq\rho(\psi)$}
\UnaryInfC{$\phi\leq\rho(\bot)\;\;\;\psi\leq {\brhd} \phi$}
\DisplayProof \\

\end{tabular}
\end{center}
In a given system, each of these rules replaces  an instance of the upper inequality  with the corresponding instances of the two lower inequalities.

The leftmost inequalities in each rule above will be referred to as the \emph{side condition}\label{def:sidecondition}.

\paragraph{Approximation rules. } During the reduction-elimination stage, the following rules are also applicable:

\begin{center}
\begin{tabular}{cccc}

\AxiomC{$\nomi\leq\pi(\psi)$}
\UnaryInfC{$[\nomi\leq\pi(\bot)]\;\;\parr\;\; [\nomj\leq\psi\;\;\;\;\nomi\leq\Diamond_\pi(\nomj)]$}
\DisplayProof

&
\AxiomC{$ \sigma(\phi)\leq\cnomm$}
\UnaryInfC{$[\sigma(\top)\leq \cnomm]\;\;\parr\;\; [\phi\leq\cnomn\;\;\;\;\Box_\sigma(\cnomn)\leq\cnomm]$}
\DisplayProof\\

\end{tabular}
\end{center}

\begin{center}
\begin{tabular}{cccc}

\AxiomC{$\nomi\leq\lambda(\psi)$}
\UnaryInfC{$[\nomi\leq\lambda(\top)]\;\;\parr\;\; [\psi\leq\cnomm\;\;\;\; \nomi\leq{\lhd}_\lambda(\cnomm)]$}
\DisplayProof

& \AxiomC{$\rho(\phi)\leq\cnomm$}
\UnaryInfC{$[\rho(\bot)\leq\cnomm]\;\;\parr\;\; [\nomi\leq\phi\;\;\;\;{\rhd}_\rho(\nomi)\leq\cnomm]$}
\DisplayProof \\

\end{tabular}
\end{center}

The leftmost inequalities in each rule above will be referred to as the \emph{side condition}.

Each approximation rule transforms a given system $S\cup\{s\leq t\}$ into the two systems (which respectively correspond to a quasi-inequality) $S\cup\{s_1\leq t_1\}$ and $S\cup\{s_2\leq t_2, s_3\leq t_3\}$, the first of which containing only the side condition (in which no propositional variable occurs), and the second one containing the instances of the two remaining lower inequalities.

\begin{prop}
The rules given above are sound on any perfect  DLE $\ca$ such that \[A \models \pi(p \lor q) \leq \pi(p) \lor \pi(q)\quad\quad A \models \sigma(p) \land \sigma(q)\leq \sigma(p \land q) \]
\[A \models \lambda(p \land q) \leq \lambda(p) \lor \lambda(q)\quad\quad A \models \rho(p) \land \rho(q)\leq \rho(p \lor q) \]
\end{prop}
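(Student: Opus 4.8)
The strategy is to reduce the soundness (indeed the invertibility) of all three families of new rules to the four decomposition identities
\[
\pi(u)=\pi(\bot)\lor\diam_\pi(u),\quad
\sigma(u)=\sigma(\top)\land\Box_\sigma(u),\quad
\lambda(u)=\lambda(\top)\lor\lhd_\lambda(u),\quad
\rho(u)=\rho(\bot)\land\rhd_\rho(u),
\]
holding for every $u\in A^\delta$. First I would obtain these identities from Proposition \ref{prop:main}. For the positive maps $\pi$ and $\sigma$ this is direct: as maps $A^\delta\to A^\delta$ they are monotone, they are closed and open Esakia by Lemma \ref{lem:Esakia for original maps}, and they send $A$ into $K(A^\delta)$ (resp.\ $O(A^\delta)$) because $A\subseteq K(A^\delta)\cap O(A^\delta)$; together with the hypotheses $A\models\pi(p\lor q)\leq\pi(p)\lor\pi(q)$ and $A\models\sigma(p)\land\sigma(q)\leq\sigma(p\land q)$ this places $\pi,\sigma$ under items 1 and 2 of Theorem \ref{thm:mainmaps}, so Proposition \ref{prop:main} yields the first two identities. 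For the negative maps $\lambda,\rho$ I would regard them as monotone maps $A^\partial\to A^\delta$; under this reading the hypotheses $A\models\lambda(p\land q)\leq\lambda(p)\lor\lambda(q)$ and $A\models\rho(p)\land\rho(q)\leq\rho(p\lor q)$ become exactly the additivity and multiplicativity hypotheses of Theorem \ref{thm:mainmaps}, while the identification $J^\infty((A^\partial)^\delta)=M^\infty(A^\delta)$ matches $\diam_\lambda$ with $\lhd_\lambda$ and $\Box_\rho$ with $\rhd_\rho$ and sends $\bot$ to $\top$; Proposition \ref{prop:main} then delivers the remaining two identities.

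Once the identities are available, each rule follows as a short chain of equivalences. The distribution rules combine an identity with the complete distributivity of the defined connectives guaranteed by Lemma \ref{lem:mapprops}: for instance $\pi(\phi\lor\psi)=\pi(\bot)\lor\diam_\pi(\phi\lor\psi)=\pi(\bot)\lor\diam_\pi(\phi)\lor\diam_\pi(\psi)=\pi(\phi)\lor\pi(\psi)$, since $\diam_\pi$ is completely join-preserving, so that $\pi(\phi\lor\psi)\leq\chi$ and $\pi(\phi)\lor\pi(\psi)\leq\chi$ are equivalent; the $\sigma,\lambda,\rho$ cases are identical, using the complete meet-preservation of $\Box_\sigma,\rhd_\rho$ and the complete join-preservation of $\lhd_\lambda$. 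The adjunction rules split the premise via the identity and then apply the adjunction supplied by Lemma \ref{lem:mapprops}: e.g.\ $\pi(\phi)\leq\psi$ iff $\pi(\bot)\lor\diam_\pi(\phi)\leq\psi$ iff $\pi(\bot)\leq\psi$ and $\diam_\pi(\phi)\leq\psi$ iff $\pi(\bot)\leq\psi$ and $\phi\leq\blacksquare_\pi\psi$, which is exactly the rule.

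The approximation rules require, in addition, the fact that in the perfect DLE $A^\delta$ nominals are completely join-prime and conominals are completely meet-prime. For example, $\nomi\leq\pi(\psi)$ is equivalent by the first identity to $\nomi\leq\pi(\bot)\lor\diam_\pi(\psi)$; join-primeness of $\nomi$ rewrites this as the disjunction $\nomi\leq\pi(\bot)$ or $\nomi\leq\diam_\pi(\psi)$; and, since $\diam_\pi$ is completely join-preserving and $\nomi$ is join-prime, $\nomi\leq\diam_\pi(\psi)$ is equivalent to $\exists\nomj(\nomj\leq\psi\ \&\ \nomi\leq\diam_\pi\nomj)$. These are precisely the side-condition branch and the remaining branch of the rule, with $\nomj$ fresh witnessing the existential. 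The $\sigma,\lambda,\rho$ approximation rules are obtained order-dually, using meet-primeness of conominals together with the complete meet-preservation of the relevant boxes.

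The only genuinely non-routine step is the first one: checking that all four maps fall under Theorem \ref{thm:mainmaps}, and in particular handling the order-dualization that brings the antitone maps $\lambda,\rho$ into its scope with $\lhd_\lambda,\rhd_\rho$ and their adjoints $\blhd,\brhd$ receiving their intended interpretation from Definition \ref{def:definedmaps}. After that, every rule reduces to a one-line equivalence built only from adjunction, complete distributivity of the defined operators, and the (co)primeness of (co)nominals.
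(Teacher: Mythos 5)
Your proposal is correct and follows essentially the same route as the paper: the paper likewise reduces every compound rule to three batches of elementary facts---the decomposition identities $\pi(u)=\pi(\bot)\lor\diam_\pi(u)$ etc.\ (its formula-rewriting rules, justified via Proposition \ref{prop:main} applied to term functions, with Lemma \ref{lem:Esakia for original maps} supplying the Esakia conditions and order-dualization handling $\lambda,\rho$), the adjunctions $\diam_\pi\dashv\blacksquare_\pi$ etc.\ from Lemma \ref{lem:mapprops}, and the (co)primeness of (co)nominals for the approximation rules---and its two sample derivations coincide with your one-line equivalences. The only cosmetic difference is that the paper disposes of the distribution rules by citing Lemma \ref{thm:main:unary} directly rather than re-deriving additivity on $A^\delta$ from the identities and the complete join-preservation of $\diam_\pi$, but both arguments bottom out in the same facts.
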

\begin{proof}
The soundness of the distribution rules immediately follows from Lemma \ref{thm:main:unary}.
Each of the remaining rules can be derived from standard  ALBA rules, plus the following set of rules:

\begin{center}
\begin{tabular}{cccc}

\AxiomC{$\nomi\leq\Diamond_\pi(\psi)$}
\UnaryInfC{$\nomj\leq\psi\;\;\;\;\nomi\leq\Diamond_\pi(\nomj)$}
\DisplayProof

&
\AxiomC{$ \Box_\sigma(\phi)\leq\cnomm$}
\UnaryInfC{$\phi\leq\cnomn\;\;\;\;\Box_\sigma(\cnomn)\leq\cnomm$}
\DisplayProof
&

\AxiomC{$\nomi\leq{\lhd}_\lambda(\psi)$}
\UnaryInfC{$\psi\leq\cnomm\;\;\;\; \nomi\leq{\lhd}_\lambda(\cnomm)$}
\DisplayProof

& \AxiomC{${\rhd}_\rho(\phi)\leq\cnomm$}
\UnaryInfC{$\nomi\leq\phi\;\;\;\;{\rhd}_\rho(\nomi)\leq\cnomm$}
\DisplayProof \\

\end{tabular}
\end{center}

\begin{center}
\begin{tabular}{cccc}
\AxiomC{$\diam_\pi \phi\leq \psi$}
\UnaryInfC{$\phi\leq \blacksquare_\pi \psi$}
\DisplayProof

&
\AxiomC{$ \phi\leq \Box_\sigma\psi$}
\UnaryInfC{$\Diamondblack_\sigma\phi\leq \psi$}
\DisplayProof

&
\AxiomC{${\lhd}_\lambda \phi\leq \psi$}
\UnaryInfC{${\blhd} \psi\leq \phi$}
\DisplayProof
& \AxiomC{$\phi \leq {\rhd}_\rho \psi$}
\UnaryInfC{$\psi\leq {\brhd} \phi$}
\DisplayProof \\
\end{tabular}\end{center}
\noindent For the sake of conciseness, we give the following rules as formula-rewriting rules:

\begin{center}
\begin{tabular}{cccc}

\AxiomC{$\pi(p)$}
\UnaryInfC{$\pi(\bot)\vee \diam_\pi(p)$}
\DisplayProof

&
\AxiomC{$ \sigma(p)$}
\UnaryInfC{$\sigma(\top)\wedge \Box_\sigma(p)$}
\DisplayProof

&
\AxiomC{$\lambda(p)$}
\UnaryInfC{$\lambda(\top)\vee {\lhd}_\lambda(p)$}
\DisplayProof
& \AxiomC{$\rho(p)$}
\UnaryInfC{$\rho(\bot)\wedge  {\rhd}_\rho (p)$}
\DisplayProof \\
\end{tabular}\end{center}

Indeed, let us give two derivations as  examples:

\begin{center}
\begin{tabular}{cc}
\AxiomC{$\lambda(\phi)\leq \psi$}
\UnaryInfC{$\lambda(\top)\vee{\lhd}_\lambda(\phi)\leq \psi$}
\UnaryInfC{$\lambda(\top)\leq \psi\;\;\; {\lhd}_\lambda(\phi)\leq \psi$}
\UnaryInfC{$\lambda(\top)\leq\psi\;\;\; {\blhd} \psi\leq\phi$}
\DisplayProof
&
\AxiomC{$\nomi\leq\pi(\psi)$}
\UnaryInfC{$\nomi\leq\pi(\bot)\vee \diam_\pi(\psi)$}
\UnaryInfC{$[\nomi\leq\pi(\bot)]\;\;\parr\;\; [\nomi\leq\diam_\pi(\psi)]$}
\UnaryInfC{$[\nomi\leq\pi(\bot)]\;\;\parr\;\; [\nomj\leq\psi\;\;\;\;\nomi\leq\Diamond_\pi(\nomj)]$}
\DisplayProof
\\
\end{tabular}
\end{center}
So to finish the proof, it is enough to show that the rules given above are sound. The soundness of the first batch of rules follows from the fact that nominals and conominals are respectively interpreted as completely join prime and meet-prime elements of $\ca$, together with Lemma \ref{lem:mapprops} applied to the term functions $\pi^{\ca}, \sigma^{\ca}, \lambda^{\ca}$ and $\rho^{\ca}$.

The soundness of the second batch of rules follows again from Lemma \ref{lem:mapprops} applied to the term functions $\pi^{\ca}, \sigma^{\ca}, \lambda^{\ca}$ and $\rho^{\ca}$, since it predicates the existence of the adjoints of the maps $\diam_\pi, \Box_\sigma, {\lhd}_\lambda$ and ${\rhd}_\rho$.

Finally, the soundness of the third batch of rules directly follows from Lemma   \ref{lem:sound formula:rewrite rules}.
\end{proof}

\begin{remark}\label{rem:can:ext}
The proposition above is crucially set on the canonical extension of a given DLE. This implies that the soundness of the  approximation rules introducing the additional connectives $\Diamond_\pi, \Box_\sigma, {\lhd}_\lambda, {\rhd}_\rho$ has been proved only relative to perfect DLEs which are canonical extensions of some given DLE, and not relative to any perfect DLEs, as is the case of the other rules. The further consequences of  this limitation will be discussed in the conclusions.
\end{remark}

The following lemma is an immediate application of Theorem \ref{thm:main}:

\begin{lem}\label{thm:main:unary}
Let $\pi(p), \sigma(p)$ be positive unary terms and $\lambda(p)$ and $\rho(p)$ be negative unary terms in a given DLE-language.
\begin{enumerate}
\item if $A \models \pi(p \lor q) \leq \pi(p) \lor \pi(q)$, then $A^\delta \models \pi(p \lor q) \leq \pi(p) \lor \pi(q)$;
\item if $A \models \sigma(p\land q)\geq \sigma(p)\land \sigma(q)$, then $A^\delta \models \sigma(p\land q)\geq \sigma(p)\land \sigma(q)$;
\item if $A \models \lambda(p\land q)\leq \lambda(p)\lor \lambda(q)$, then $A^\delta \models \lambda(p\land q)\leq \lambda(p)\lor \lambda(q)$;
\item if $A \models \rho(p\lor q)\geq \rho(p)\land \rho(q)$, then $A^\delta \models \rho(p\lor q)\geq \rho(p)\land \rho(q)$.
\end{enumerate}
\end{lem}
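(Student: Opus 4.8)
The plan is to recognise all four items as specialisations of Theorem~\ref{thm:main} to the case $n=1$, applying the theorem either to $A$ itself or to its order-dual $A^\partial$, and to move between the two via the two standard facts I would invoke at the outset: first, that the canonical extension commutes with order-duality, i.e. $(A^\partial)^\delta = (A^\delta)^\partial$; and second, that an inequality holds in a bounded distributive lattice if and only if its order-dual (obtained by interchanging $\wedge$ and $\vee$ and reversing $\leq$) holds in the order-dual lattice. With these in hand each item becomes pure bookkeeping.

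For item~1 I would note there is literally nothing to prove: a unary term $\pi$ positive in $p$ is exactly a term function that is positive (monotone) as a map $A\to A$, so item~1 is the instance $n=1$, $\varepsilon=(1)$ of Theorem~\ref{thm:main}. For item~3, the key observation is that $\lambda$ being negative in $p$ means precisely that $\lambda$ is positive as a map $A^\partial\to A$, so it fits Theorem~\ref{thm:main} with $n=1$ and $\varepsilon=(\partial)$. For this order-type the join $\vp\vee\vq$ in the domain $A^\varepsilon$ is computed as a meet in $A$ (exactly as in the second example following Theorem~\ref{thm:main}), so the additivity hypothesis $\pi(\vp\vee\vq)\leq\pi(\vp)\vee\pi(\vq)$ of the theorem reads here as $\lambda(p\wedge q)\leq\lambda(p)\vee\lambda(q)$, and its conclusion reads as the desired inequality on $A^\delta$; hence item~3 is again a direct instance.

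Items~2 and~4 are the order-duals of items~1 and~3, and I would derive them by applying the latter in $A^\partial$. For item~2, the hypothesis $A\models\sigma(p\wedge q)\geq\sigma(p)\wedge\sigma(q)$, read in $A^\partial$, is the additivity statement $A^\partial\models\sigma(p\vee q)\leq\sigma(p)\vee\sigma(q)$, and $\sigma$ is monotone as a map $A^\partial\to A^\partial$; so item~1 applied to $A^\partial$ in place of $A$ yields $(A^\partial)^\delta\models\sigma(p\vee q)\leq\sigma(p)\vee\sigma(q)$, which via $(A^\partial)^\delta=(A^\delta)^\partial$ translates back to $A^\delta\models\sigma(p\wedge q)\geq\sigma(p)\wedge\sigma(q)$. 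For item~4 the argument is identical but invokes item~3 in $A^\partial$: the hypothesis $A\models\rho(p\vee q)\geq\rho(p)\wedge\rho(q)$ becomes $A^\partial\models\rho(p\wedge q)\leq\rho(p)\vee\rho(q)$, $\rho$ is antitone (hence negative in $p$) on $A^\partial$, and the lift to $(A^\partial)^\delta=(A^\delta)^\partial$ gives the claimed inequality on $A^\delta$.

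I do not expect a genuine obstacle here, since the lemma is an immediate corollary; the only point requiring care is to match the order-types correctly and to use the commutation $(A^\partial)^\delta=(A^\delta)^\partial$ faithfully, so that the meet-versions (items~2 and~4) and the negative-term versions (items~3 and~4) are reflected exactly as the positive additivity statement handled by Theorem~\ref{thm:main}.
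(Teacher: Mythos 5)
Your proof is correct and matches the paper's intent: the paper offers no separate argument, stating only that the lemma is an immediate application of Theorem \ref{thm:main}, and your reading---items 1 and 3 as the instances $\varepsilon=(1)$ and $\varepsilon=(\partial)$ of that theorem (the latter exactly parallel to the paper's own ${\rhd}{\rhd}{\rhd}$ example), items 2 and 4 by order-dualization via $(A^\partial)^\delta=(A^\delta)^\partial$---is precisely the intended specialization, consistent with the paper's repeated ``is an order-variant of'' treatment of the meet-side cases. The only point you leave tacit, namely that under this dualization the $\sigma$- and $\pi$-extensions of the primitive operations interchange so that the term function of $\sigma$ (resp.\ $\rho$) on $(A^\partial)^\delta$ agrees with that on $(A^\delta)^\partial$, is standard and does not affect correctness.
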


The following lemma is an immediate  application of Lemma \ref{lem:mapprops} to term functions.

\begin{lem}\label{lem:sound formula:rewrite rules}
For any term function $\pi^{A^\delta},\sigma^{A^\delta}, \lambda^{A^\delta}, \rho^{A^\delta}:A^\delta\rightarrow B^\delta$ as above,

\begin{enumerate}

\item if $A \models \pi(p \lor q) \leq \pi(p) \lor \pi(q)$, then for all $u \in A^\delta$,
$\pi^{A^\delta}(u) = \pi^A(\bot) \lor \diam_\pi(u)$.
\item If $A \models\sigma(p\land q)\geq \sigma(p)\land \sigma(q)$, then for all $u \in A^\delta$,
$\sigma^{\ca}(u)= \sigma^A(\top)\land\Box_\sigma(u)$.
\item If $A \models\lambda(p\land q)\leq \lambda(p)\lor \lambda(q)$, then for all $u \in A^\delta$,
$\lambda^{\ca}(u)= \lambda^A(\top)\lor\lhd_\lambda(u)$.
\item If $A \models\rho(p\lor q)\geq \rho(p)\land \rho(q)$, then for all $u \in A^\delta$,
$\rho^{\ca}(u)= \rho^A(\bot)\land\rhd_\rho(u)$.
\end{enumerate}

\end{lem}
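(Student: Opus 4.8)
The plan is to derive each of the four identities directly from Proposition \ref{prop:main}, which—once its hypotheses are checked for the relevant term function—delivers exactly the required factorization, so that the lemma becomes a matter of substitution together with the order-duality bookkeeping for the antitone cases.

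For item (1), I would set $f := \pi^{\ca}$ and apply identity \eqref{eq:aimmap1} of Proposition \ref{prop:main}. This requires verifying the hypotheses of Theorem \ref{thm:mainmaps}.1 for $\pi^{\ca}$: monotonicity is immediate since $\pi$ is positive in $p$; the closed and open Esakia conditions are precisely the content of Lemma \ref{lem:Esakia for original maps}; the condition $\pi(a)\in K(\ca)$ for every $a\in A$ holds because $\pi$ restricts to a map $A\to A$ and every element of $A$ is clopen, hence closed, in $\ca$; and $\pi(a\lor b)\leq \pi(a)\lor\pi(b)$ for all $a,b\in A$ is exactly the hypothesis $A\models \pi(p\lor q)\leq\pi(p)\lor\pi(q)$. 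Noting that the $\diam_\pi$ of Definition \ref{def:definedmaps} is literally $\diam_f$ for $f=\pi^{\ca}$, Proposition \ref{prop:main} yields $\pi^{\ca}(u)=\pi^A(\bot)\lor\diam_\pi(u)$ for all $u\in\ca$. Item (2) is obtained identically by taking $g:=\sigma^{\ca}$ and applying \eqref{eq:aimmap2} instead; here the hypothesis $A\models\sigma(p\land q)\geq\sigma(p)\land\sigma(q)$ supplies the meet-subadditivity required by Theorem \ref{thm:mainmaps}.2, and $\Box_\sigma=\Box_g$.

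For items (3) and (4), I would reduce to (1) and (2) by order-duality. Since $\lambda$ is negative in $p$, it is a monotone map $A^\partial\to A$, and I would apply \eqref{eq:aimmap1} in the canonical extension $(A^\partial)^\delta=(\ca)^\partial$. Passing to $A^\partial$ swaps the roles of $J^\infty(\ca)$ and $M^\infty(\ca)$, interchanges closed and open elements, and replaces $\bot$ by $\top$; under this translation the diamond of $\lambda$ computed in $A^\partial$ is exactly the operator $\lhd_\lambda$ of Definition \ref{def:definedmaps}, which is defined using $M^\infty(\ca)$ and the reversed constraint $m\geq u$. The hypothesis $A\models\lambda(p\land q)\leq\lambda(p)\lor\lambda(q)$ is precisely join-subadditivity of $\lambda$ viewed out of $A^\partial$, and the two Esakia conditions transfer because Lemma \ref{lem:Esakia for original maps} already asserts both preservation properties, which are simply interchanged by the dualization. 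Proposition \ref{prop:main} then gives $\lambda^{\ca}(u)=\lambda^A(\top)\lor\lhd_\lambda(u)$, and item (4) is the $g$-version of the same argument applied to $\rho$, yielding $\rho^{\ca}(u)=\rho^A(\bot)\land\rhd_\rho(u)$.

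The only delicate point, and where I would be most careful, is the bookkeeping in the order-dual cases (3) and (4): one must check that $\lhd_\lambda$ and $\rhd_\rho$, which are defined with the \emph{opposite} choice of irreducibles ($M^\infty$ for $\lhd_\lambda$, $J^\infty$ for $\rhd_\rho$) relative to $\diam_\pi$ and $\Box_\sigma$, genuinely coincide with the diamond and box of Proposition \ref{prop:main} once the domain is read as $A^\partial$, and that the hypotheses on $f(\bot)$ and $g(\top)$ become the conditions on $\lambda(\top)$ and $\rho(\bot)$ after dualization. Everything else is a direct instantiation of Proposition \ref{prop:main}.
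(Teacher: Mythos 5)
Your proof is correct and follows essentially the route the paper intends: the paper gives no explicit proof, asserting the lemma to be an immediate application of earlier results (its citation of Lemma \ref{lem:mapprops} is evidently a slip, since that lemma only yields the easy inequalities $\diam_\pi(u)\leq \pi^{\ca}(u)$ etc., and the equalities require Proposition \ref{prop:main}), and your argument supplies exactly the intended instantiation --- verifying the hypotheses of Theorem \ref{thm:mainmaps} for the term functions via Lemma \ref{lem:Esakia for original maps} and the clopenness of elements of $A$, and handling $\lambda,\rho$ by the order-type $\partial$ dualization already implicit in the statement of Theorem \ref{thm:main}. Your explicit check that $\lhd_\lambda$ and $\rhd_\rho$ coincide with $\diam_f$ and $\Box_g$ computed over $(A^\partial)^\delta=(\ca)^\partial$ (with the codomain left undualized, so that $\lambda(a),\rho(a)\in A$ are still closed, respectively open, in $\ca$) is precisely the bookkeeping the paper leaves tacit.
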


\section{Meta-inductive inequalities and success of ALBA$^{e}$}\label{sec:meta:albae}
 Recall that $\mathrm{DLE}^*$ is an expansion of $\mathrm{DLE}$, obtained by closing the set of formulas under the following set of additional connectives $\{\diamdot, \boxdot, \lhddot, \rhddot\}$.
Let $\Phi: \{\diamdot, \boxdot, \lhddot, \rhddot\}\to \{\pi, \sigma, \lambda, \rho\}$ such that $\Phi(\diamdot) = \pi$,  $\Phi(\boxdot) = \sigma$, $\Phi(\lhddot) = \lambda$ and $\Phi(\rhddot) = \rho$.

\begin{dfn}[Meta-inductive inequalities]

A DLE-inequality $\phi\leq\psi$ is meta-inductive if there exists some inductive $\mathrm{DML}^*$-inequality $s\leq t$ such that $\phi\leq \psi$ can be obtained from $s\leq t$ by replacing each $\odot \in \{\diamdot, \boxdot, \lhddot, \rhddot\}$ with $\Phi(\odot)$.

\end{dfn}

\begin{exa}\label{ex:meta:ind}
The class of meta-inductive inequalities significantly extends the inductive inequalities. To give an idea, the McKinsey-type inequality $\Diamond\Box\Diamond\Box p\leq\Box\Diamond\Box\Diamond p$ is meta-inductive w.r.t.\ $\pi(p) =\Diamond\Box\Diamond(p) $. Indeed, it is obtained as a $\Phi$-substitution instance of the Sahlqvist DML$^*$-inequality $\diamdot$$\boxdot p\leq\boxdot$$\diamdot$$ p$, where $\Phi(\diamdot)=\Diamond\Box\Diamond$, $\Phi(\boxdot)=\Box$. 
\end{exa}

\begin{dfn}

An execution of ALBA$^e$ is \emph{safe} if no side conditions (cf.\ Page \pageref{def:sidecondition}) introduced by applications of adjunction rules for the new connectives are further modified, except for receiving Ackermann substitutions.

\end{dfn}

\begin{thm}\label{thm:ind:safe}

If $\phi\leq\psi$ is a meta-inductive inequality, then there exists some safe and successful execution of ALBA$^{e}$ on it.

\end{thm}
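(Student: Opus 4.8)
The plan is to simulate a successful run of the $\mathrm{DLE}^*$-version of ALBA on the inductive witness $s\leq t$ by a safe run of ALBA$^e$ on $\phi\leq\psi=\Phi(s)\leq\Phi(t)$. Since $s\leq t$ is inductive, Theorem \ref{Thm:ALBA:Success:Inductive} provides a successful ALBA-execution $\mathcal{E}$ on it; fix one. The guiding observation is that the ALBA$^e$-rules governing $\pi,\sigma,\lambda,\rho$ are designed so that, modulo the side conditions they introduce, their main outputs reproduce verbatim the ordinary ALBA-rules governing $\diamdot,\boxdot,\lhddot,\rhddot$, once the defined operators $\Diamond_\pi,\Box_\sigma,\lhd_\lambda,\rhd_\rho$ and their adjoints $\blacksquare_\pi,\Diamondblack_\sigma,\blhd,\brhd$ are read as playing the roles of $\diamdot,\boxdot,\lhddot,\rhddot$ and of $\boxdotb,\diamdotb,\lhddotb,\rhddotb$ respectively. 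The decomposition $\pi(u)=\pi(\bot)\lor\Diamond_\pi(u)$ and its companions (Lemma \ref{lem:sound formula:rewrite rules}) make precise the sense in which $\Diamond_\pi$ is the purely modal part of $\pi$ that mirrors $\diamdot$.

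Concretely, I would execute ALBA$^e$ on $\phi\leq\psi$ in lock-step with $\mathcal{E}$, maintaining by induction on the length of $\mathcal{E}$ the following invariant: after deleting every inequality that has been frozen as a side condition, the ALBA$^e$-state coincides, occurrence by occurrence, with the current state of $\mathcal{E}$ under the dictionary $\diamdot\leftrightarrow\Diamond_\pi$, $\boxdot\leftrightarrow\Box_\sigma$, $\lhddot\leftrightarrow\lhd_\lambda$, $\rhddot\leftrightarrow\rhd_\rho$ (and the matching dictionary on adjoints), with the understanding that an occurrence of $\diamdot$ etc.\ not yet touched by an approximation or adjunction rule corresponds to an as-yet-unprocessed occurrence of the term $\pi$ etc. The preprocessing distributions transfer directly, since the ALBA$^e$-distribution rules for $\pi,\sigma,\lambda,\rho$ match the $\mathrm{DLE}^*$-distributions of $\diamdot,\boxdot,\lhddot,\rhddot$ over $\lor$ and $\land$; and each application in $\mathcal{E}$ of an approximation or adjunction rule to one of $\diamdot,\boxdot,\lhddot,\rhddot$ is answered by the corresponding ALBA$^e$-rule for its $\Phi$-image, whose main inequalities reproduce the ALBA output and whose side condition is set aside. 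The execution is thereby safe by construction: side conditions generated by adjunction rules are never rewritten, only acted upon by Ackermann substitutions.

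It remains to prove success. Because the frozen side conditions form a part of the system separate from the main inequalities, and the main part is isomorphic to the state of $\mathcal{E}$ via the dictionary above, the very sequence of splitting, residuation, approximation and Ackermann steps that eliminates all propositional variables in $\mathcal{E}$ does so in the main part of the ALBA$^e$-run. The side conditions coming from approximation rules are already pure, having one of the shapes $\nomi\leq\pi(\bot)$, $\sigma(\top)\leq\cnomm$, $\nomi\leq\lambda(\top)$, $\rho(\bot)\leq\cnomm$. Those coming from adjunction rules have one of the shapes $\pi(\bot)\leq\psi$, $\phi\leq\sigma(\top)$, $\lambda(\top)\leq\psi$, $\phi\leq\rho(\bot)$, and may still carry propositional variables inside $\phi$ or $\psi$; these variables are precisely the ones removed by the Ackermann substitutions performed in the main part, so that at termination every side condition is variable-free. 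Hence no propositional variable survives, and the resulting execution is both safe and successful.

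The crux, and the step I expect to demand the most care, is to verify that freezing the adjunction side conditions never blocks an Ackermann step in the main part: each such side condition must, at the moment a variable $p$ is eliminated, already sit in the $\beta(p)\leq\gamma(p)$ form demanded by (RAR) or (LAR), so that it merely receives the substitution rather than needing to be reshaped. This is where the (anti)tonicity of the adjoint operators enters. For example, the side condition $\pi(\bot)\leq\psi$ produced by $\pi(\phi)\leq\psi\rightsquigarrow(\pi(\bot)\leq\psi,\ \phi\leq\blacksquare_\pi\psi)$ shares its right-hand side $\psi$ with the main inequality $\phi\leq\blacksquare_\pi\psi$; since $\blacksquare_\pi$ is monotone and the polarity of $p$ is invariant under all ALBA$^e$-rules, $p$ occurs in $\psi$ with exactly the (negative, right-hand) polarity it must have for the main run to reach Ackermann shape, so $\pi(\bot)\leq\psi$ is automatically a legitimate $\gamma$-inequality for (RAR), never a minimal condition $\alpha\leq p$. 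The analogous checks for $\sigma$, and for $\lambda$ and $\rho$ — where the adjoints $\blhd,\brhd$ are order-reversing, so that polarities flip consistently in both the main and the side inequality — complete the argument, each resting only on the monotonicity or antitonicity of the adjoint operators together with polarity-invariance.
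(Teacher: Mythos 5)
Your proposal is correct and is essentially the paper's own proof: both simulate the successful $\mathrm{DLE}^*$-ALBA run on the inductive witness rule-by-rule via the dictionary $\diamdot\mapsto\pi$, etc., obtain safety by construction since the simulated run never ``sees'' the frozen side conditions, and then verify that the side conditions cannot obstruct the Ackermann steps. The only minor divergence is in that last verification: the paper argues by induction on $\Omega$ that side conditions contain no critical variable occurrences, because the material $\blacksquare_{\pi}\psi$ entering them is a subformula of a minimal valuation (which is pure for $\Omega$-minimal variables), whereas you deduce the required non-critical polarity from the fact that $\psi$ is shared with the main inequality, together with the (anti)tonicity of the black connectives and polarity-invariance of the rules---both arguments deliver the same conclusion.
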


\begin{proof}

Since $\phi\leq\psi$ is a meta-inductive inequality, there exists some $(\Omega, \varepsilon)$-inductive DLE$^*$-inequality $s\leq t$ s.t.\   $\phi\leq \psi$ can be obtained from $s\leq t$ by replacing each $\odot \in \{\diamdot, \boxdot, \lhddot, \rhddot\}$ with $\Phi(\odot)$. Then the version of ALBA on the language DLE$^*$ can be successfully executed on $s\leq t$, following the appropriate $\Omega$ solving according to $\varepsilon$. For each rule applied in this execution, the corresponding rule can be applied by ALBA$^e$ on the reduction of $\phi\leq\psi$. In particular, for each rule applied to $\odot\in  \{\diamdot, \boxdot, \lhddot, \rhddot\}$, the corresponding rule will be applied by ALBA$^e$ to $\chi\in\{\pi, \sigma, \lambda, \rho\}$. It is immediate to see that, since the execution of ALBA$^e$ on $\phi\leq\psi$ simulates the execution of ALBA on $s\leq t$, and since the latter execution does not ``see'' the side conditions introduced by ALBA$^e$, the execution of ALBA$^e$ so defined is safe. Let us show that if the system generated by ALBA from $s\leq t$ is in Ackermann shape, then so is the corresponding ALBA$^e$ system. Firstly, we observe that if the first system is in Ackermann shape for a given $p$, then all the strictly $\Omega$-smaller variables have already been solved for. Moreover, all the occurrences of $p$ which agree with $\varepsilon$ are in display. Consequently, on the ALBA$^e$ side, all the corresponding occurrences are in display. Moreover, there cannot be more critical occurrences of $p$ in the system generated by $\phi\leq\psi$. Indeed, such occurrences could only pertain to side conditions. However, we can show by induction on $\Omega$ that no critical variable occurrences can belong to side conditions. Indeed, if $p$ is $\Omega$-minimal, then when displaying for critical occurrences of $p$, the minimal valuation cannot contain any variable.

If an adjunction rule such as

\begin{center}
\AxiomC{$\pi(\phi)\leq\psi$}
\UnaryInfC{$\pi(\bot)\leq \psi\;\;\;\phi\leq\blacksquare_{\pi}\psi$}
\DisplayProof
\end{center}

has been applied in the process of displaying such a critical occurrence, then $p$ occurs in $\phi$, and $\blacksquare_{\pi}\psi$ is a subformula of the minimal valuation, and hence is pure. The other adjunction rules can be treated similarly. And so the side condition $\pi(\bot)\leq \psi$ cannot contain any proposition variables. The induction case is similar.

If an approximation rule such as

\begin{center}
\AxiomC{$\nomi\leq\pi(\psi)$}
\UnaryInfC{$[\nomi\leq\pi(\bot)]\;\;\parr\;\; [\nomj\leq\psi\;\;\;\;\nomi\leq\Diamond_\pi(\nomj)]$}
\DisplayProof
\end{center}

has been applied in the process of displaying such a critical occurrence, then $p$ occurs in $\psi$, and the generated side condition is pure altogether. The other approximation rules can be treated similarly. The induction case is similar.

\end{proof}

\section{Relativized canonicity via ALBA$^e$}\label{Sec:Rel:Canon:Albae}
In the present section, we use ALBA$^e$ to obtain the relativized canonicity of meta-inductive DLE-inequalities.

\begin{dfn}\label{def:rela:canonicity}
Let $K$ be a class of DLEs which is closed under taking canonical extensions, and let $\varphi\leq\psi$ be a DLE-inequality. We say that $\varphi\leq\psi$ is {\em canonical relative to $K$} if the intersection of $K$ and the class of DLEs defined by $\varphi\leq\psi$ is closed under taking canonical extensions.
\end{dfn}
Specifically, we aim at proving the following theorem:
\begin{thm}
\label{thm:rel:canonicity}
Let $A$ be a DLE such that \[A\models \pi(p\vee q)\leq \pi(p)\vee\pi(q)\;\;\; A\models  \sigma(p)\wedge\sigma(q)\leq \sigma(p\wedge q)\]
\[A\models \rho(p\vee q)\leq \rho(p)\wedge\rho(q)\;\;\; A\models  \lambda(p)\vee\lambda(q)\leq \lambda(p\wedge q).\]
Let $\phi\leq \psi$ be a meta-inductive DLE-inequality. Then \[A\models \phi\leq \psi\ \Rightarrow \ \ca\models \phi\leq \psi.\]
\end{thm}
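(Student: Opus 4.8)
The plan is to run the by-now standard ALBA-style ``U-shaped'' canonicity argument, relativized to the class of DLEs validating the four displayed additivity conditions and carried out with ALBA$^{e}$ in place of ALBA. Two facts make this possible, and both are already in hand. First, by Lemma \ref{thm:main:unary} these conditions transfer from $A$ to $A^\delta$, so that $A^\delta$ itself belongs to the relativizing class; in the language of Definition \ref{def:rela:canonicity}, the class cut out by the additivity axioms is closed under canonical extensions. Second, by the soundness Proposition of Section \ref{sec:enhancedALBA} (read together with Remark \ref{rem:can:ext}), every ALBA$^{e}$ rule---including the distribution, adjunction and approximation rules for the new connectives $\Diamond_\pi,\Box_\sigma,\lhd_\lambda,\rhd_\rho$---is sound on the canonical extension $A^\delta$ of such an $A$, the essential point being that $A^\delta$ is a canonical extension, which is exactly the semantic setting in which those rules were justified.

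First I would fix a meta-inductive inequality $\phi\leq\psi$ with $A\models\phi\leq\psi$ and apply Theorem \ref{thm:ind:safe} to obtain a safe and successful run of ALBA$^{e}$ on $\phi\leq\psi$, yielding a finite set $\mathrm{ALBA}^{e}(\phi\leq\psi)$ of \emph{pure} quasi-inequalities. Writing $\models_{A}$ for validity restricted to assignments sending proposition variables into $A$, the goal is the chain
\begin{align*}
A\models\phi\leq\psi
&\iff A^\delta\models_{A}\phi\leq\psi \\
&\iff A^\delta\models_{A}\mathrm{ALBA}^{e}(\phi\leq\psi) \\
&\iff A^\delta\models\mathrm{ALBA}^{e}(\phi\leq\psi) \\
&\iff A^\delta\models\phi\leq\psi,
\end{align*}
of which only the top-to-bottom implications are needed. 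The first biconditional holds because $A$ embeds into $A^\delta$ as a sublattice on which all operations agree, so $\phi$ and $\psi$ take the same value under an $A$-assignment whether computed in $A$ or in $A^\delta$; the third is immediate from purity of the output, for which admissibility of the valuation is irrelevant. The second and fourth biconditionals are, respectively, soundness-and-invertibility of the ALBA$^{e}$ rules with respect to admissible valuations and with respect to arbitrary valuations on $A^\delta$.

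The key step is thus to verify that each ALBA$^{e}$ rule preserves validity in \emph{both} directions and under \emph{both} valuation regimes on $A^\delta$. For the standard DLE rules this is the content of \cite[Theorem 10.11]{CoPa12}. For the new rules I would use the derivations given in the proof of the soundness Proposition, which present each of them as a composition of (i) the rewrite equalities of Lemma \ref{lem:sound formula:rewrite rules}, e.g.\ $\pi^{A^\delta}(u)=\pi^{A}(\bot)\lor\Diamond_\pi(u)$ for all $u\in A^\delta$, (ii) the adjunctions $\Diamond_\pi\dashv\blacksquare_\pi$, $\Diamondblack_\sigma\dashv\Box_\sigma$, and their analogues from Lemma \ref{lem:mapprops}, and (iii) the completely join- and meet-prime character of nominals and conominals in the perfect DLE $A^\delta$. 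Each of these three ingredients is an \emph{equivalence} valid for all elements of $A^\delta$---the first two precisely because $A^\delta$ is the canonical extension of a member of the relativizing class---so every new rule is not merely sound but invertible; and since the admissibility restriction constrains only the proposition variables, leaving nominals, conominals and the defined connectives untouched, invertibility holds verbatim under admissible valuations as well.

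The point deserving the most care---and the main obstacle---is the interface with Theorem \ref{thm:ind:safe}: one must insist on a \emph{safe} run, so that the side conditions $\pi(\bot)\leq\psi$, $\phi\leq\sigma(\top)$, and the like generated by the adjunction and approximation rules are never reopened (except to receive Ackermann substitutions) and are free of proposition variables once elimination is complete. Safety is what guarantees that the ALBA$^{e}$ run faithfully mirrors an ordinary ALBA run on the inductive $\mathrm{DLE}^{*}$-template $s\leq t$ underlying $\phi\leq\psi$, so that success is inherited and, as established in the proof of Theorem \ref{thm:ind:safe}, no critical variable occurrence can hide inside a side condition. Granting safety together with the invertibility of all rules, the displayed chain closes and the forward implication $A\models\phi\leq\psi\Rightarrow A^\delta\models\phi\leq\psi$ follows, which is exactly the assertion of the theorem.
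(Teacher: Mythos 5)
Your overall architecture matches the paper's proof---the U-shaped argument, the appeal to Theorem \ref{thm:ind:safe} for a safe and successful run, and the soundness of the new ALBA$^e$ rules on $\ca$ via Lemmas \ref{thm:main:unary}, \ref{lem:sound formula:rewrite rules} and \ref{lem:mapprops}---but there is a genuine gap at precisely the step you dismiss as automatic. You claim that ``since the admissibility restriction constrains only the proposition variables, \ldots\ invertibility holds verbatim under admissible valuations as well.'' This is false for the Ackermann rule, and the Ackermann rule is exactly the one rule the paper singles out as needing a separate argument. Going from the substituted system back to the existential form requires producing a $p$-variant of the assignment, and under the admissible regime that variant must send $p$ into $A$; the minimal valuation $\bigvee\{\alpha_i\}$ is in general only an element of $\ca$, not of $A$. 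The standard repair is the \emph{topological} Ackermann lemma: one shows the minimal valuation is closed (a meet of elements of $A$), that the relevant left-hand sides evaluate to closed elements and right-hand sides to open elements, and then compactness yields an honest witness $a_0\in A$. Your appeal to \cite[Theorem 10.11]{CoPa12} does not cover this, because the systems here live in $\mathrm{DLE}^{++}$, and the new black connectives $\blacksquare_\pi,\Diamondblack_\sigma,\blhd,\brhd$ preserve closedness/openness only \emph{conditionally}: by Lemma \ref{lem:idealtoideal}, $\blacksquare_\pi(o)$ is open only when the side condition $\pi(\bot)\leq o$ is available, and similarly for the others.

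This is also where safety genuinely enters, and your reading of its role is off. You invoke safety only to guarantee that the run mirrors the ALBA run on the $\mathrm{DLE}^*$-template and that side conditions stay pure; but in the paper safety is needed because it makes \emph{topological adequacy} (the presence in the system of the side conditions matching every occurrence of a black connective) an invariant of the execution (Proposition \ref{prop:top:adequacy: invariant}). Topological adequacy, together with the invariance of compact-appropriateness (Lemma \ref{lemma:good:shape: invariant}) and the conditional intersection/Esakia machinery (Lemmas \ref{lemma:synct closed is sem closed}, \ref{MJ:Pres}, resting on Proposition \ref{prop:Esa}), is what licenses the topological Ackermann lemmas (Propositions \ref{Right:Ack}, \ref{Left:Ack}) and hence the soundness of the Ackermann rule under admissible assignments. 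Without this chain your second biconditional, $\ca\models_{A}\phi\leq\psi \Leftrightarrow \ca\models_{A}\mathsf{ALBA}^e(\phi\leq\psi)$, is unjustified in the right-to-left direction at each Ackermann step, and the proof does not close.
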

\begin{proof}
The strategy follows the usual U-shaped argument\label{U-shaped} illustrated below:

\begin{center}
\begin{tabular}{l c l}
$A \models \phi \leq \psi$ & &$\ca \models \phi \leq \psi$\\
$\ \ \ \ \ \ \ \ \ \ \ \Updownarrow$ & &\\
$\ca\models_{A} \phi \leq \psi$ & &$\ \ \ \ \ \ \ \ \ \Updownarrow \ \ \ \ \ \ \ \ \ \ \ $ \\
%
%
$\ \ \ \ \ \ \ \ \ \ \ \Updownarrow$ & &\\
$\ca\models_{A}  \mathsf{ALBA}^e(\phi \leq \psi)$,
&\ \ \ $\Leftrightarrow$ \ \ \ &$\ca\models \mathsf{ALBA}^e(\phi \leq \psi)$,\\
%
%
\end{tabular}
\end{center}
Since $\phi\leq\psi$ is meta-inductive, by Theorem \ref{thm:ind:safe}, we can assume w.l.o.g.\ that there exists a safe and successful execution of ALBA$^e$.
In order to complete the proof, we need to argue that under the assumption of the theorem, all the rules of ALBA$^e$ are sound on $\ca$, both under arbitrary and under admissible assignments.
The soundness of the approximation and adjunction rules for the new connectives has been discussed in Section \ref{sec:enhancedALBA}, and the argument is entirely similar for arbitrary and for admissible valuations.
The only rule which needs to be expanded on is the Ackermann rule under admissible assignments. This soundness follows from Propositions \ref{Right:Ack}, together with Proposition \ref{prop:top:adequacy: invariant} and Lemma \ref{lemma:good:shape: invariant}.
\end{proof}

\begin{exa}
 By the theorem  above, the inequality $\Diamond\Box\Diamond\Box p\leq\Box\Diamond\Box\Diamond p$, which  is meta-inductive w.r.t.\ $\pi(p) =\Diamond\Box\Diamond(p)$ (cf.\ Example \ref{ex:meta:ind}),  is canonical relative to the class of DLEs defined by the inequality $\Diamond\Box\Diamond(p\lor q)\leq\Diamond\Box\Diamond p\lor\Diamond\Box\Diamond q$.
\end{exa}

\begin{dfn}
A system $S$ of $\mathrm{DLE}^{++}$ inequalities  is {\em topologically adequate} when the following conditions hold:
\begin{enumerate}
\item if $\phi\leq \blacksquare_\pi \psi$ is in $S$, then $\pi(\bot)\leq \psi$ is in $S$, and
\item if $\Diamondblack_\sigma\phi\leq  \psi$ is in $S$, then $\sigma(\top)\geq \phi$ is in $S$, and
\item if $\phi\leq \blacktriangleright_\rho \psi$ is in $S$, then $\rho(\bot)\geq \psi$ is in $S$, and
\item if $\blacktriangleleft_\lambda\phi\leq \psi$ is in $S$, then $\lambda(\top)\leq \phi$ is in $S$.
\end{enumerate}
\end{dfn}

\begin{dfn}
A system $S$ of $\mathrm{DLE}^{++}$ inequalities  is {\em compact-appropriate} if the left-hand side of each inequality in $S$ is syntactically closed and the  right-hand side of each inequality in $S$ is syntactically open (cf.\ Definition \ref{def:syn:closed:and:open}).
\end{dfn}

\begin{prop}
\label{prop:top:adequacy: invariant}
Topological adequacy is an invariant of safe executions of ALBA$^e$.
\end{prop}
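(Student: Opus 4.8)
The plan is to prove the statement by induction on the length of a safe execution of ALBA$^e$, showing that a single rule application always transforms a topologically adequate system into another topologically adequate one. For the base case, I would observe that the system produced by preprocessing and first approximation of the input DLE-inequality $\phi\leq\psi$ contains no occurrences of the adjoint connectives $\blacksquare_\pi,\Diamondblack_\sigma,\blacktriangleright_\rho,\blacktriangleleft_\lambda$: among the rules of ALBA$^e$ (cf.\ Section \ref{sec:enhancedALBA}), these are introduced exclusively by the four adjunction rules for the new connectives. Hence all four clauses in the definition of topological adequacy hold vacuously, and the base case is immediate.

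The crucial case of the inductive step is that of the four adjunction rules for the new connectives. Here I would verify by direct inspection that each such rule introduces an adjoint-containing inequality together with \emph{exactly} its matching side condition. For instance, the rule sending $\pi(\phi)\leq\psi$ to the pair $\pi(\bot)\leq\psi$ and $\phi\leq\blacksquare_\pi\psi$ produces precisely the inequality $\phi\leq\blacksquare_\pi\psi$ demanded by clause (1) together with its witness $\pi(\bot)\leq\psi$; the remaining three rules (for $\sigma$, $\rho$, $\lambda$) match clauses (2)--(4) after reading off the appropriate identification of $\phi$ and $\psi$ in the conclusion. Since such a rule only deletes the (non-side-condition) upper inequality and adds these two lower inequalities, every adjoint-containing inequality already present in $S$ keeps its witnessing side condition in $S'$, while the newly introduced one acquires its own; thus $S'$ remains topologically adequate.

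For every remaining rule I would argue that it cannot spoil adequacy. No other rule of ALBA$^e$ introduces any of the four adjoint connectives: the new approximation rules introduce only the forward connectives $\diam_\pi,\Box_\sigma,\lhd_\lambda,\rhd_\rho$, the distribution rules introduce no new connectives, and the standard ALBA rules act purely on $\mathrm{DLE}^+$-connectives. Moreover, by the definition of a \emph{safe} execution, no side condition introduced by a new-connective adjunction rule is ever modified except through an Ackermann substitution. Consequently, for every rule other than an Ackermann rule each such side condition persists verbatim in $S'$, so no existing adjoint-containing inequality loses its witness. For the Ackermann rule itself, the substitution of a propositional variable is applied uniformly across the whole system; since a side condition $\pi(\bot)\leq\psi$ (and analogously for $\sigma,\rho,\lambda$) shares its argument $\psi$ with the corresponding inequality $\phi\leq\blacksquare_\pi\psi$, both receive the same substitution, and because $\pi(\bot)$ (resp.\ $\sigma(\top),\rho(\bot),\lambda(\top)$) is variable-free, the matching of arguments is preserved.

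The main obstacle I anticipate is precisely the bookkeeping of the last point: ruling out that the argument of an already-introduced adjoint connective could be silently desynchronised from its side condition by an intervening reduction. This is exactly where the safety hypothesis is indispensable. Once introduced, an adjoint $\blacksquare_\pi\psi$ occurs inside a minimal valuation (as analysed in the proof of Theorem \ref{thm:ind:safe}) in a non-residuated position, so no reduction rule of a safe run acts on $\psi$ other than a system-wide Ackermann substitution, which — as above — affects the side condition identically. I would therefore isolate, as a short preliminary observation before running the induction, the fact that in a safe execution the argument of any adjoint connective can change only by a uniform Ackermann substitution; with this in hand the four cases above combine to give the invariance, completing the proof.
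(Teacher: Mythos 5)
Your proposal is correct and takes essentially the same route as the paper's own proof: adequacy holds vacuously until the first application of an adjunction rule for a new connective, each such rule installs the adjoint-containing inequality together with exactly its required side condition, and safety---combined with the fact that ALBA$^e$ has no rules manipulating $\blacksquare_\pi, \Diamondblack_\sigma, \blacktriangleleft_\lambda, \blacktriangleright_\rho$---ensures that an adjoint's argument $\psi$ and its side condition can only be altered by the \emph{same} uniform Ackermann substitution, keeping them synchronized. The paper's argument is just a less formally structured version of your induction.
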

\begin{proof}
Preprocessing vacuously preserves the topological adequacy of any input inequality. The topological adequacy is vacuously satisfied up to the first application of an  adjunction rule  introducing any of $\blacksquare_\pi, \Diamondblack_\sigma, \blhd, \brhd$. Each such application introduces two inequalities, one of which contains the new black connective, and the other one exactly is the side condition required by the definition of topological adequacy for the first inequality to be non-offending. Moreover, at any later stage, safe executions of ALBA do not modify the side conditions, unless for substituting minimal valuations. This, together with the fact that ALBA$^e$ does not contain any rules which allow to manipulate any of $\blacksquare_\pi, \Diamondblack_\sigma, \blhd, \brhd$, guarantees the preservation of topological adequacy. Indeed,  if e.g.\ $\pi(\bot)\leq \psi$ and $\phi\leq \blacksquare_\pi\psi$ are both in a topologically  adequate quasi-inequality, then the variables occurring in $\psi$ in both inequalities have the same polarity, and in a  safe execution, the only way in which they could be modified is if they  both receive the same minimal valuations under applications of Ackermann rules. Hence,  after such an application, they would respectively be transformed into $\pi(\bot)\leq \psi'$ and $\phi'\leq \blacksquare_\pi\psi'$ for the {\em same} $\psi'$. Thus, the topological adequacy of the quasi-inequality is preserved.
\end{proof}

\begin{lem}
\label{lemma:good:shape: invariant}
Compact-appropriateness is an invariant of  ALBA$^e$ executions.
\end{lem}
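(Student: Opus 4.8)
The plan is to argue by induction on the length of a safe execution of ALBA$^e$, showing that each rule preserves compact-appropriateness: whenever it is applied to a compact-appropriate system, the resulting system(s) are again compact-appropriate. Since the initialisation (preprocessing together with the first-approximation rule) and the rules inherited from ALBA on DLE are treated exactly as in \cite{CoPa12}, where this shape property is already shown to be established and preserved, the substance of the argument is a case distinction over the new rules of ALBA$^e$ introduced in Section \ref{sec:enhancedALBA}, namely the distribution, adjunction and approximation rules for $\pi, \sigma, \lambda, \rho$ together with their adjoints $\blacksquare_\pi, \Diamondblack_\sigma, \blhd, \brhd$.

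The first point is that the new connectives $\diam_\pi, \Box_\sigma, \lhd_\lambda, \rhd_\rho$ and $\blacksquare_\pi, \Diamondblack_\sigma, \blhd, \brhd$ are classified as syntactically closed/open formers in Definition \ref{def:syn:closed:and:open} in complete parallel with the primitive DLE$^*$-connectives $\diamdot, \boxdot, \lhddot, \rhddot$ and their adjoints. Hence the way the distribution and approximation rules for the new connectives propagate syntactic closedness and openness across the inequalities of a system is literally the same as for the corresponding DLE$^*$-rules, and the only thing to verify is that the principal (non-side-condition) conclusions keep their shape. For the adjunction rules this rests on the adjunctions $\diam_\pi\dashv\blacksquare_\pi$, $\Diamondblack_\sigma\dashv\Box_\sigma$, ${\lhd}_\lambda\dashv\blhd$ and $\rhd_\rho\dashv\brhd$ recorded in Lemma \ref{lem:mapprops}, which fix the closed/open behaviour of each black connective: for instance $\blacksquare_\pi$, being the upper adjoint of the completely join-preserving $\diam_\pi$, is completely meet-preserving and hence sends syntactically open formulas to syntactically open ones, so that $\phi\leq\blacksquare_\pi\psi$ inherits compact-appropriateness from $\pi(\phi)\leq\psi$; the remaining three adjoints are dealt with dually.

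The genuinely new ingredient is the behaviour of the side conditions. Here the decisive observation is that, since $\pi, \sigma, \lambda, \rho$ are DLE-term functions sending $A$ into $A$, the boundary values $\pi(\bot), \sigma(\top), \lambda(\top)$ and $\rho(\bot)$ all lie in $A$ and are therefore clopen, i.e.\ at once syntactically closed and syntactically open. Consequently each side condition is compact-appropriate as soon as its premise is: applying the $\pi$-adjunction rule to a compact-appropriate $\pi(\phi)\leq\psi$ yields the side condition $\pi(\bot)\leq\psi$, whose left-hand side is closed because $\pi(\bot)\in A$ and whose right-hand side $\psi$ is still open; the analogous side conditions $\sigma(\top)\geq\phi$, $\lambda(\top)\leq\psi$ and $\rho(\bot)\geq\phi$ coming from the $\sigma$-, $\lambda$- and $\rho$-rules are handled in the same way using the clopen-ness of $\sigma(\top), \lambda(\top), \rho(\bot)$, and the approximation-rule side conditions (of the form $\nomi\leq\pi(\bot)$, $\sigma(\top)\leq\cnomm$, and so on) likewise.

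The main obstacle I anticipate is not conceptual but one of consistent bookkeeping of polarities: one has to make sure that the closed/open classification assigned to $\blacksquare_\pi, \Diamondblack_\sigma, \blhd, \brhd$ in Definition \ref{def:syn:closed:and:open} is precisely the one forced by their adjunctions with $\diam_\pi, \Box_\sigma, {\lhd}_\lambda, \rhd_\rho$, and that this classification is respected uniformly across all the new rules, including the order-reversing ones attached to the antitone terms $\lambda$ and $\rho$. A second, more structural point to be invoked is that in a \emph{safe} execution (cf.\ Theorem \ref{thm:ind:safe}) the black connectives are never further decomposed and the side conditions are never modified except by receiving Ackermann substitutions; this guarantees that once a side condition has entered the system in compact-appropriate form it stays so, exactly as in the companion argument for Proposition \ref{prop:top:adequacy: invariant}. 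With these ingredients in place the inductive step closes for every rule and the lemma follows.
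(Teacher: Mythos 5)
Your proposal is correct and takes essentially the same route as the paper, which proves the lemma in one line by declaring it ``entirely analogous to the proof of \cite[Lemma 9.5]{CoPa12}'' --- i.e., precisely the rule-by-rule invariance check you spell out, with the new adjunction/approximation rules handled via the classification of $\blacksquare_\pi, \Diamondblack_\sigma, \blhd, \brhd$ in Definition \ref{def:syn:closed:and:open} and the side conditions via the observation that $\pi(\bot), \sigma(\top), \lambda(\top), \rho(\bot)$ are pure DLE-terms, hence vacuously both syntactically closed and syntactically open. One small remark: unlike Proposition \ref{prop:top:adequacy: invariant}, this lemma is stated for \emph{all} ALBA$^e$ executions, so your restriction to safe executions is superfluous (as is the semantic detour through complete meet-preservation of $\blacksquare_\pi$, since the purely syntactic polarity conditions of Definition \ref{def:syn:closed:and:open} already do the work), though neither affects correctness.
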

\begin{proof}
Entirely analogous to the proof of \cite[Lemma 9.5]{CoPa12}.
\end{proof}

\section{Examples}\label{sec:examples}

In this section we illustrate the execution of ALBA$^e$ on a few examples.

\begin{exs}

Consider the inequality $\pi(p\lor q)\leq \pi(p)\lor\pi(q)$. The first approximation rule now yields:
\[
\left \{ \begin{array}[l l]{l l}
\nomj_0 \leq \pi(p\lor q), &\pi(p)\lor\pi(q) \leq \cnomm_0
\end{array}
\right\};
\]
by applying the splitting rule, this system is rewritten into:
\[
\left \{ \begin{array}[l l l]{l l l}
\nomj_0 \leq \pi(p\lor q), & \pi(p) \leq \cnomm_0, & \pi(q) \leq \cnomm_0
\end{array}
\right\},
\]
by applying the adjunction rule for $\pi$, this system is rewritten into:
\[
\left \{ \begin{array}[l l l l]{l l l l}
\nomj_0 \leq \pi(p\lor q), & p\leq \blacksquare_\pi (\m_0) & \pi(\bot)\leq \cnomm_0, & q \leq \blacksquare_\pi (\m_0)
\end{array}
\right\},
\]
to which left Ackermann rule can be applied to eliminate $p$:
\[
\left \{ \begin{array}[l l l]{l l l}
\nomj_0 \leq \pi(\blacksquare_\pi (\m_0)\lor q), & \pi(\bot)\leq \cnomm_0, & q \leq \blacksquare_\pi (\m_0)
\end{array}
\right\},
\]
by applying left Ackermann rule, we can further eliminate $q$:
\[
\left \{ \begin{array}[l l]{l l}
\nomj_0 \leq \pi(\blacksquare_\pi (\m_0)\lor \blacksquare_\pi (\m_0)), & \pi(\bot)\leq \cnomm_0
\end{array}
\right\}.
\]

by applying the formula-rewriting rule, the system above can be equivalently reformulated as the following quasi-inequality:
\[\forall\nomj_0\forall\cnomm_0(\nomj_0 \leq \pi(\bot)\vee\diam_{\pi}(\blacksquare_\pi (\m_0))\ \ \&\ \ \pi(\bot)\leq \cnomm_0\ )\ \Rightarrow\ \nomj_0\leq\cnomm_0)\]
which in its turn can be rewritten as follows:
\[\forall\cnomm_0(\pi(\bot)\leq \cnomm_0\ \Rightarrow
\forall\nomj_0(\nomj_0 \leq \pi(\bot)\vee\diam_{\pi}(\blacksquare_\pi (\m_0))\ \Rightarrow\ \nomj_0\leq\cnomm_0))\]
which can be rewritten as follows:
\[\forall\cnomm_0(\pi(\bot)\leq \cnomm_0\ \Rightarrow\pi(\bot)\vee\diam_{\pi}(\blacksquare_\pi (\m_0))\leq\cnomm_0)\]
and hence as follows:
\[\forall\cnomm_0(\pi(\bot)\leq \cnomm_0\ \Rightarrow\pi(\bot)\leq\cnomm_0\ \ \&\ \ \diam_{\pi}(\blacksquare_\pi (\m_0))\leq\cnomm_0).\]

Since the adjunction between $\diam_\pi$ and $\blacksquare_\pi$ implies that $\diam_{\pi}(\blacksquare_\pi (\m_0))\leq\cnomm_0$ is a tautology, it is easy to see that the quasi-inequality above is equivalent to $\top$. This is of course unsurprising, given that the additional ALBA$^e$ rules rely on the validity of the inequality in input.
\end{exs}

\begin{landscape}
\begin{exs}
In this example we illustrate a safe execution of ALBA$^e$ on the meta-inductive formula $\pi(\sigma(p))\leq\sigma(\pi(p))$ corresponding to the Geach axiom $\diamdotland\boxdot p\leq\boxdot\diamdotland p$. In parallel to this execution, we show the execution of ALBA, to which the safe execution of ALBA$^e$ corresponds.

\begin{multicols}{2}

Consider the inequality $\pi(\sigma(p))\leq\sigma(\pi(p))$. The first approximation rule now yields:
\[
\left \{ \begin{array}[l l]{l l}
\nomj_0 \leq \pi(\sigma(p)), &\sigma(\pi(p)) \leq \cnomm_0
\end{array}
\right\};
\]
by the approximation rule for $\pi$, the system is written into:

\[
\left \{ \begin{array}[l l]{l l}
\nomj_0 \leq \pi(\bot), &\sigma(\pi(p)) \leq \cnomm_0
\end{array}
\right\},
\left \{ \begin{array}[l l]{l l}
\nomj_0 \leq \Diamond_{\pi}\nomj_1, &\sigma(\pi(p)) \leq \cnomm_0\\
\nomj_1\leq\sigma(p)
\end{array}
\right\}
\]
by applying the adjunction rule for $\sigma$, this system is rewritten into:
\[
\left \{ \begin{array}[l l]{l l}
\nomj_0 \leq \pi(\bot), &\sigma(\pi(p)) \leq \cnomm_0
\end{array}
\right\},
\left \{ \begin{array}[l l]{l l}
\nomj_0 \leq \Diamond_{\pi}\nomj_1, &\sigma(\pi(p)) \leq \cnomm_0\\
j\leq\sigma(\top),& \Diamondblack_\sigma\nomj_1\leq p
\end{array}
\right\}
\]
by applying the monotonicity rule for $p$, this system is rewritten into:
\[
\left \{ \begin{array}[l l]{l l}
\nomj_0 \leq \pi(\bot), &\sigma(\pi(\bot)) \leq \cnomm_0
\end{array}
\right\},
\left \{ \begin{array}[l l]{l l}
\nomj_0 \leq \Diamond_{\pi}\nomj_1, &\sigma(\pi(p)) \leq \cnomm_0\\
\nomj_1\leq\sigma(\top),& \Diamondblack_\sigma\nomj_1\leq p
\end{array}
\right\}
\]
by applying the right-handed Ackermann rule for $p$, this system is rewritten into the following system of pure inequalities:
\[
\left \{ \begin{array}[l l]{l l}
\nomj_0 \leq \pi(\bot), &\sigma(\pi(\bot)) \leq \cnomm_0
\end{array}
\right\},
\left \{ \begin{array}[l l]{l l}
\nomj_0 \leq \Diamond_{\pi}\nomj_1, &\sigma(\pi(\Diamondblack_\sigma\nomj_1)) \leq \cnomm_0\\
\nomj_1\leq\sigma(\top)
\end{array}
\right\}
\]
\columnbreak

Consider the inequality $\diamdotland\boxdot p\leq\boxdot\diamdotland p$. The first approximation rule now yields:
\[
\left \{ \begin{array}[l l]{l l}
\nomj_0 \leq \diamdotland\boxdot p, &\boxdot\diamdotland p \leq \cnomm_0
\end{array}
\right\};
\]
by the approximation rule for $\diamdot$, the system is written into:
\[
\left \{ \begin{array}[l l]{l l}
\nomj_0 \leq \diamdotland\mathbf{j}_1, &\boxdot\diamdotland p \leq \cnomm_0\\
\mathbf{j}_1\leq\boxdot p
\end{array}
\right\}
\]
by applying the adjunction rule for $\boxdot$, this system is rewritten into:
\[
\left \{ \begin{array}[l l]{l l}
\nomj_0 \leq \diamdotland\mathbf{j}_1, &\boxdot\diamdotland p \leq \cnomm_0\\
 \diamdotb\mathbf{j}_1\leq p &
\end{array}
\right\}
\]
%
%
%
%
%

%
by applying the right-handed Ackermann rule for $p$, this system is rewritten into the following system of pure inequalities:
\[
\left \{ \begin{array}[l l]{l l}
\nomj_0 \leq \diamdotland\mathbf{j}_1, & \boxdot\diamdotland\diamdotb\mathbf{j}_1 \leq \cnomm_0
\end{array}
\right\}
\]

\end{multicols}

\end{exs}

\end{landscape}

\section{Conclusions}\label{sec:conclusions}
\paragraph{Pseudo-correspondence and relativized canonicity and correspondence. }
In the present paper we applied ALBA to achieve two different but closely related results. We  derived the canonicity of additivity obtained in \cite{Yv98} via pseudo-correspondence as an application of an ALBA-reduction. Key to this result is having expanded the basic language which ALBA manipulates with additional modal operators and their adjoints. With a similar expansion, we obtained a relativized canonicity result for the class of meta-inductive inequalities, which is, by definition, parametric in given term functions $\pi, \sigma, \lambda, \rho$. Clearly, relativized canonicity (cf.\ Definition \ref{def:rela:canonicity}) boils down to canonicity if $K$ is the class of all DLEs, which embeds the canonicity via pseudo-correspondence result as a special case of the relativized canonicity result.

Together with the notion of relativized canonicity, we can consider the notion of correspondence relativized to  a given class $K$. A natural question to ask is whether successful runs of ALBA$^e$ generate pure quasi-inequalities which, under the standard translation, are relativized correspondents of the input formula/inequality w.r.t.\ the class $K$ defined by the following inequalities:

\[\pi(p \lor q) \leq \pi(p) \lor \pi(q)\quad\quad \sigma(p) \land \sigma(q)\leq \sigma(p \land q) \]
\[\lambda(p \land q) \leq \lambda(p) \lor \lambda(q)\quad\quad \rho(p) \land \rho(q)\leq \rho(p \lor q) \]

Unfortunately, we can answer the question in the negative. For the correspondents effectively calculated by ALBA$^e$ to be true correspondents within $K$, i.e.\ relativized correspondents w.r.t.\ this class, the rules of ALBA$^e$  would have to be sound on all perfect DLEs in $K$.  Now, as we mentioned in Remark \ref{rem:can:ext}, certain rules of ALBA$^e$ are sound only on perfect DLEs which are canonical extensions. Indeed, there are perfect DLEs on which $\pi(p) :=\Diamond\Box(p)$ is additive but not completely additive\footnote{To see this, the following considerations are sufficient: for every perfect DLE $B$, $\pi^B$ is completely additive  iff $B\models \pi(p)=\diam_{\pi}(p)\lor\pi(\bot)$ iff $B\models C(\pi)$. If for any perfect DLE the additivity of $\pi^B$ implies its complete additivity, then we could add $B\models \pi(p \lor q) \leq \pi(p) \lor \pi(q)$ to the chain of equivalences mentioned above. Hence we would have shown that  $B\models \pi(p \lor q) \leq \pi(p) \lor \pi(q)$ iff $B\models C(\pi)$ for any perfect DLE $B$, i.e.\ the additivity of $\pi$ would have a first-order correspondent, contradicting the well known fact that  Fine's formula is canonical but not elementary.}. In these lattices, the identity $\pi(p)=\diam_{\pi}(p)\lor\pi(\bot)$ does not hold, and hence the ALBA$^e$ rule based on it is not sound.

However, if we restrict ourselves to the case of finite DLEs,
%
the correspondents effectively calculated by ALBA$^e$ are true correspondents within the finite slice of $K$, of which they define elementary subclasses.

\paragraph{ALBA$^e$ and correspondence for regular DLEs.} In \cite{PaSoZh16}, the theory of unified correspondence is extended to logics algebraically characterized by varieties of regular distributive lattice expansions (regular DLEs), i.e.,  logics the non-lattice connectives of which preserve or reverse binary conjunctions or disjunctions coordinatewise, but are not required to be normal (cf.\ Definition \ref{def:DLE}). The core technical tool is an adaptation of \textsf{ALBA}, referred to as \textsf{ALBA}$^r$. This adaptation builds on results of the present paper. Namely, it is obtained by considering a certain restricted shape of  the additional  rules of the metacalculus \textsf{ALBA}$^e$ (cf.\ Section 5). 

Although \textsf{ALBA}$^r$ is very similar to \textsf{ALBA}$^e$, it is worth mentioning that they are different in important respects.

Firstly, the two settings of these algorithms (that is, the present setting and that of \cite{PaSoZh16}) are different: indeed,
the present setting  is that of {\em normal} DLEs  (i.e.\ the primitive modal connectives are normal), but the term functions $\pi, \sigma, \lambda, \rho$ are assumed to be arbitrary compound formulas. Then,  this basic setting is restricted even further to the class of normal DLEs on which the interpretations of $\pi, \sigma, \lambda, \rho$ verify additional  conditions.
In contrast to this, in the setting of regular DLEs of \cite{PaSoZh16}, the primitive connectives are not normal in the first place, but only preserve or reverse finite {\em nonempty} joins or meets coordinatewise. Hence, the basic setting  of \cite{PaSoZh16} covers a strictly wider class of algebras than  normal DLEs.

Secondly, \textsf{ALBA}$^r$ guarantees all the benefits of classical Sahlqvist correspondence theory for the inequalities on which it succeeds. As discussed above, this is not the case of \textsf{ALBA}$^e$. The reason for this difference is  the fact that  the approximation and the adjunction  rules for \textsf{ALBA}$^r$ concern only primitive regular connectives, and for this reason they can be shown to be sound on arbitrary perfect regular DLEs. 

\paragraph{Generalizing additivity. } In the present paper, the canonicity result in \cite{Yv98} has been slightly generalized so as to apply to non-unary term functions which are positive w.r.t.\ some order-type $\varepsilon$. The axioms which are proved to be canonical state the additivity of those term functions seen as maps from $\varepsilon$-powers of DLEs. It remains an open question whether a similar result can be proven for non-unary maps and axioms stating their coordinatewise additivity.

\bibliographystyle{abbrv}
\bibliography{canonicity_additivity}

\begin{thebibliography}{10}

\bibitem{Ack35}
W.~Ackermann.
\newblock {Untersuchung \:uber das Eliminationsproblem der Mathematischen
  Logic}.
\newblock {\em Mathematische Annalen}, 110:390-413, 1935.

\bibitem{BdRV01}
P.~Blackburn, M.~de~Rijke, and Y.~Venema.
\newblock {\em Modal Logic}, volume~53 of {\em Cambridge Tracts in Theoretical
  Computer Science}.
\newblock Cambridge University Press, 2001.

\bibitem{HBMoL}
P.~Blackburn, J.~van Benthem, and F.~Wolter, editors.
\newblock {\em {Handbook of {M}odal {L}ogic}}, volume~3 of {\em Studies in
  Logic and Practical Reasoning}.
\newblock Elsevier Science, Dec. 2006.

\bibitem{CH90}
B.~Chellas.
\newblock {\em Modal Logic: An Introduction}.
\newblock Cambridge University Press, 1980.

\bibitem{CoCr14}
W.~Conradie and A.~Craig.
\newblock {Canonicity results for mu-calculi: an algorithmic approach}.
\newblock {\em Journal of Logic and Computation}, Forthcoming. ArXiv preprint
  arXiv:1408.6367.

\bibitem{CCPZ}
W.~Conradie, A.~Craig, A.~Palmigiano, and Z.~Zhao.
\newblock Constructive canonicity for lattice-based fixed point logics.
\newblock Submitted. ArXiv preprint arXiv:1603.06547.

\bibitem{CFPS15}
W.~Conradie, Y.~Fomatati, A.~Palmigiano, and S.~Sourabh.
\newblock Algorithmic correspondence for intuitionistic modal mu-calculus.
\newblock {\em Theoretical Computer Science}, 564:30--62, 2015.

\bibitem{CFPPTW}
W.~Conradie, S.~Frittella, A.~Palmigiano, M.~Piazzai, A.~Tzimoulis, and
  N.~Wijnberg.
\newblock Categories: {H}ow {I} {L}earned to {S}top {W}orrying and {L}ove {T}wo
  {S}orts.
\newblock In J.~V\"a\"an\"anen, {\AA}.~Hirvonen, and R.~de~Queiroz, editors,
  {\em {L}ogic, {L}anguage, {I}nformation, and {C}omputation}, pages 145--164.
  Springer, 2016. ArXiv preprint 1604.00777.

\bibitem{CoGhPa14}
W.~Conradie, S.~Ghilardi, and A.~Palmigiano.
\newblock {Unified Correspondence}.
\newblock In A.~Baltag and S.~Smets, editors, {\em Johan van Benthem on Logic
  and Information Dynamics}, volume~5 of {\em Outstanding Contributions to
  Logic}, pages 933--975. Springer International Publishing, 2014.

\bibitem{Conradie:et:al:SQEMAI}
W.~Conradie, V.~Goranko, and D.~Vakarelov.
\newblock {Algorithmic Correspondence and Completeness in Modal Logic. I. The
  Core Algorithm SQEMA}.
\newblock {\em Logical Methods in Computer Science}, 2006.

\bibitem{CoPa12}
W.~Conradie and A.~Palmigiano.
\newblock Algorithmic correspondence and canonicity for distributive modal
  logic.
\newblock {\em Annals of Pure and Applied Logic}, 163(3):338 -- 376, 2012.

\bibitem{CP:constructive}
W.~Conradie and A.~Palmigiano.
\newblock Constructive canonicity of inductive inequalities.
\newblock Submitted. ArXiv preprint 1603.08341.

\bibitem{CoPa:non-dist}
W.~Conradie and A.~Palmigiano.
\newblock {Algorithmic correspondence and canonicity for non-distributive
  logics}.
\newblock Submitted. ArXiv preprint 1603.08515.

\bibitem{ConPalSou}
W.~Conradie, A.~Palmigiano, and S.~Sourabh.
\newblock Algebraic modal correspondence: {S}ahlqvist and beyond.
\newblock {\em Journal of Logical and Algebraic Methods in Programming}, 2016.
  doi: 10.1016/j.jlamp.2016.10.006. ArXiv preprint 1606.06881.

\bibitem{CPZ:Trans}
W.~Conradie, A.~Palmigiano, and Z.~Zhao.
\newblock Sahlqvist via translation.
\newblock Submitted. ArXiv preprint 1603.08220.

\bibitem{ConRob}
W.~Conradie and C.~Robinson.
\newblock On {S}ahlqvist theory for hybrid logic.
\newblock {\em Journal of Logic and Computation}, 2015. doi:
  10.1093/logcom/exv045.

\bibitem{DaPr}
B.~Davey and H.~Priestley.
\newblock {\em Introduction to Lattices and Order}.
\newblock Cambridge University Press, 2002.

\bibitem{Fine75}
K.~Fine.
\newblock {Some Connections Between Elementary and Modal Logic}.
\newblock In {\em Studies in Logic and the Foundations of Mathematics},
  volume~82, pages 15--31. North Holland, 1975.

\bibitem{FrPaSa16}
S.~Frittella, A.~Palmigiano, and L.~Santocanale.
\newblock Dual characterizations for finite lattices via correspondence theory
  for monotone modal logic.
\newblock {\em Journal of Logic and Computation}, 2016.
  doi:10.1093/logcom/exw011.

\bibitem{GNV}
M.~Gehrke, H.~Nagahashi, and Y.~Venema.
\newblock A {S}ahlqvist {T}heorem for {D}istributive {M}odal {L}ogic.
\newblock {\em Annals of Pure and Applied Logic}, 131(1-3):65--102, 2005.

\bibitem{GhMe97}
S.~Ghilardi and G.~Meloni.
\newblock {Constructive Canonicity in Non-classical Logics}.
\newblock {\em Annals of Pure and Applied Logic}, 86(1):1--32, 1997.

\bibitem{GMPTZ}
G.~Greco, M.~Ma, A.~Palmigiano, A.~Tzimoulis, and Z.~Zhao.
\newblock Unified correspondence as a proof-theoretic tool.
\newblock {\em Journal of Logic and Computation}, 2016. doi:
  10.1093/logcom/exw022. ArXiv preprint 1603.08204.

\bibitem{Jo94}
B.~J\'onsson.
\newblock {On the Canonicity of {S}ahlqvist Identities}.
\newblock {\em Studia Logica}, 53:473--491, 1994.

\bibitem{LeRoux:MThesis:2016}
C.~le~Roux.
\newblock {Correspondence theory in many-valued modal logics}.
\newblock Master's thesis, University of Johannesburg, South Africa, 2016.

\bibitem{LeSc66}
E.~Lemmon and D.~Scott.
\newblock {Intensional Logics}.
\newblock In K.~Sergerberg, editor, {\em {An Introduction to Modal Logic}}.
  Oxford, Blackwell, 1977.

\bibitem{MZ16}
M.~Ma and Z.~Zhao.
\newblock Unified correspondence and proof theory for strict implication.
\newblock {\em Journal of Logic and Computation}, 2016. doi:
  10.1093/logcom/exw012. ArXiv preprint 1604.08822.

\bibitem{PaSoZh15}
A.~Palmigiano, S.~Sourabh, and Z.~Zhao.
\newblock {J\'onsson-style canonicity for {ALBA}-inequalities}.
\newblock {\em Journal of Logic and Computation}, 2015.
  doi:10.1093/logcom/exv041.

\bibitem{PaSoZh16}
A.~Palmigiano, S.~Sourabh, and Z.~Zhao.
\newblock Sahlqvist theory for impossible worlds.
\newblock {\em Journal of Logic and Computation}, 2016.
  doi:10.1093/logcom/exw014.

\bibitem{SaVa89}
G.~Sambin and V.~Vaccaro.
\newblock {A New Proof of {S}ahlqvist's Theorem on Modal Definability and
  Completeness}.
\newblock {\em Journal of Symbolic Logic}, 54(3):992--999, 1989.

\bibitem{vanBenthem:Reduction:Principles}
J.~van Benthem.
\newblock {Modal Reduction Principles}.
\newblock {\em Journal of Symbolic Logic}, 41(2):301--312, 06 1976.

\bibitem{vB79}
J.~van Benthem.
\newblock Syntactic aspects of modal incompleteness theorems.
\newblock {\em Theoria}, 45(2):63--77, 1979.

\bibitem{vanbenthem2005}
J.~van Benthem.
\newblock {Minimal Predicates, Fixed-points, and Definability}.
\newblock {\em J. Symbolic Logic}, 70(3):696--712, 09 2005.

\bibitem{Yv98}
Y.~Venema.
\newblock {Canonical Pseudo-correspondence}.
\newblock In M.~Zakharyaschev, K.~Segerberg, M.~de~Rijke, and H.~Wansing,
  editors, {\em Advances in Modal Logic}, pages 421--430. CSLI Publications,
  1998.

\bibitem{YZ16}
K.~Yamamoto and Z.~Zhao.
\newblock Algorithmic correspondence and canonicity for possibility semantics.
\newblock {\em In preparation}.

\bibitem{Zakharyaschev:97}
M.~Zakharyaschev.
\newblock {Canonical Formulas for {K4}. {P}art {III}: {T}he Finite Model
  Property}.
\newblock {\em Journal of Symbolic Logic}, 62:950--975, 1997.

\bibitem{Zhao13}
Z.~Zhao.
\newblock {Algebraic Canonicity for Non-classical Logics}.
\newblock {\em Master's Thesis, University of Amsterdam, The Netherlands},
  2013.

\end{thebibliography}

\section{Appendix}\label{sec:appendix}

\subsection{Intersection lemmas for DLE$^{++}$ formulas}
\begin{dfn}[Syntactically closed and open DLE$^{++}$ formulas]\label{def:syn:closed:and:open}
\begin{enumerate}
\item A DLE$^{++}$ formula is \emph{syntactically closed} if all occurrences of nominals, $\overline{f}^{(i)}$ for $\varepsilon_f(i)=\partial$, $\underline{g}^{(i)}$ for $\varepsilon_g(i)=1$, $\blacktriangleleft_{\lambda}, \Diamondblack_{\sigma}$ are positive,
and all occurrences of co-nominals, $\overline{f}^{(i)}$ for $\varepsilon_f(i)=1$, $\underline{g}^{(i)}$ for $\varepsilon_g(i)=\partial$, $\blacktriangleright_{\rho}, \blacksquare_{\pi}$ are negative;

\item A DLE$^{++}$ formula is \emph{syntactically open} if all occurrences of nominals, $\overline{f}^{(i)}$ for $\varepsilon_f(i)=\partial$, $\underline{g}^{(i)}$ for $\varepsilon_g(i)=1$, $\blacktriangleleft_{\lambda}, \Diamondblack_{\sigma}$ are negative,
and all occurrences of co-nominals, $\overline{f}^{(i)}$ for $\varepsilon_f(i)=1$, $\underline{g}^{(i)}$ for $\varepsilon_g(i)=\partial$, $\blacktriangleright_{\rho}, \blacksquare_{\pi}$ are positive.

\end{enumerate}

\end{dfn}

\noindent Recall that $\blacksquare_\pi, \Diamondblack_\sigma, \blhd, \brhd: A^\delta \to A^\delta$ respectively denote the adjoints of the maps $\diam_\pi, \Box_\sigma, \lhd_\lambda, \rhd_\rho$. Then Lemma \ref{lem:idealtoidealmaps} immediately implies the following facts, which will be needed for the soundness of the topological Ackermann rule:
\begin{lem}\label{lem:idealtoideal}
\begin{enumerate}
\item
If $o \in O(A^\delta)$ and $\pi(\bot)\leq o$, then $\blacksquare_\pi (o) = \bigvee\{ a \ | \ a \leq \blacksquare_{\pi}(o) \} \in  O(A^\delta)$.
\item
If $k\in K(A^\delta)$ and $\sigma(\top)\geq k$, then $\Diamondblack_\sigma (k) = \bigwedge\{ a \ | \ a \geq \Diamondblack_{\sigma}(k) \} \in  K(A^\delta)$.
\item
If $o\in O(A^\delta)$ and $\lambda(\top)\leq o$, then $\blhd (o) = \bigwedge\{ a \ | \ a \geq \blhd(o) \} \in  K(A^\delta)$.
\item
If $k \in K(A^\delta)$ and $\rho(\bot)\geq k$, then $\brhd (k) = \bigvee\{ a \ | \ a \leq \brhd(k) \} \in  O(A^\delta)$.
\end{enumerate}

\end{lem}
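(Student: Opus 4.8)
The plan is to read off all four statements directly from Lemma~\ref{lem:idealtoidealmaps}, instantiating its two clauses at the term functions $\pi,\sigma,\lambda,\rho$; the only genuine work is to check that each of these functions (after the negative ones have been turned into monotone maps by order-duality) satisfies the hypotheses bundled into Theorem~\ref{thm:mainmaps}, at which point Lemma~\ref{lem:idealtoidealmaps} applies with no further argument.

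First I would dispatch items~1 and~2. For $\pi$: it is monotone since it is positive in $p$; it sends $A$ into $K(\ca)$, since $\pi\colon A\to A\subseteq K(\ca)$; it satisfies $\pi(a\lor b)\leq\pi(a)\lor\pi(b)$ by the standing additivity assumption on $\pi$; and it is both closed and open Esakia by Lemma~\ref{lem:Esakia for original maps}. These are exactly the hypotheses of Theorem~\ref{thm:mainmaps}.1, so Lemma~\ref{lem:idealtoidealmaps}.1 applies verbatim with $f=\pi$ and yields item~1 (recalling that $\blacksquare_\pi$ is the adjoint of $\diam_\pi$, cf.\ Definition~\ref{def:definedmaps}). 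Item~2 is the order-dual: $\sigma$ is monotone, sends $A$ into $O(\ca)$, satisfies $\sigma(a)\land\sigma(b)\leq\sigma(a\land b)$, and is closed/open Esakia, so Lemma~\ref{lem:idealtoidealmaps}.2 applies with $g=\sigma$, using $k\leq\sigma(\top)$ and the fact that $\Diamondblack_\sigma$ is the adjoint of $\Box_\sigma$.

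For items~3 and~4 I would present the negative terms $\lambda,\rho$ as monotone maps $(\ca)^\partial\to\ca$, denoted $\lambda',\rho'$. Under the identifications $\jty((\ca)^\partial)=\mty(\ca)$ and $O((\ca)^\partial)=K(\ca)$ (together with the dual swaps of $\bigvee/\bigwedge$ and $\bot/\top$), one reads off from Definition~\ref{def:definedmaps} that $\lhd_\lambda=\diam_{\lambda'}$ and $\rhd_\rho=\Box_{\rho'}$, whence $\blhd=\blacksquare_{\lambda'}$ and $\brhd=\Diamondblack_{\rho'}$. The additivity hypotheses transport correctly: $\lambda(a\land b)\leq\lambda(a)\lor\lambda(b)$ becomes subadditivity of the $f$-type map $\lambda'$, and $\rho(a)\land\rho(b)\leq\rho(a\lor b)$ becomes superadditivity of the $g$-type map $\rho'$; moreover $\lambda'$ sends $A$ into $K(\ca)$, $\rho'$ sends $A$ into $O(\ca)$, and both are closed/open Esakia. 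Then Lemma~\ref{lem:idealtoidealmaps}.1 applied to $\lambda'$ gives item~3 (here $\lambda'(\bot)=\lambda(\top)$ and the output lands in $O((\ca)^\partial)=K(\ca)$), and Lemma~\ref{lem:idealtoidealmaps}.2 applied to $\rho'$ gives item~4 (here $\rho'(\top)=\rho(\bot)$ and the output lands in $K((\ca)^\partial)=O(\ca)$).

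The main obstacle is bookkeeping rather than conceptual: for the negative terms one must track carefully how the order-dual interchanges $\jty\leftrightarrow\mty$, $K\leftrightarrow O$, $\bigvee\leftrightarrow\bigwedge$ and $\bot\leftrightarrow\top$, so that the conclusion of Lemma~\ref{lem:idealtoidealmaps} lands in $K(\ca)$ for $\lambda$ and in $O(\ca)$ for $\rho$ (and not the reverse), with the side conditions $\lambda(\top)\leq o$ and $\rho(\bot)\geq k$ coming out correctly. Two small points deserve a line of justification: that the Esakia properties of Lemma~\ref{lem:Esakia for original maps} hold equally for the negative term functions---immediate, since that lemma is proved by induction on term structure independently of polarity---and that the additivity assumptions on $\pi,\sigma,\lambda,\rho$, which Lemma~\ref{lem:idealtoidealmaps} feeds into Proposition~\ref{prop:algholdsmaps}, are in force throughout as standing hypotheses.
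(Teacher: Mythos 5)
Your proposal is correct and takes exactly the paper's route: the paper derives this lemma with no further argument, stating that it is immediately implied by Lemma~\ref{lem:idealtoidealmaps} applied to the term functions $\pi,\sigma,\lambda,\rho$ (with the order-dual reading for the negative terms left implicit). Your hypothesis-checking via Theorem~\ref{thm:mainmaps} and Lemma~\ref{lem:Esakia for original maps}, and the $K\leftrightarrow O$, $\jty\leftrightarrow\mty$ bookkeeping for $\lambda$ and $\rho$, merely spell out what the paper asserts as immediate.
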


\noindent In the remainder of the paper,  we work under the assumption that the values of all parameters (propositional variables, nominals and conominals) occurring in the term functions mentioned in the statements of propositions and lemmas  are given by  admissible assignments.

\begin{lem}
\label{lemma:synct closed is sem closed}
Let $\phi(p)$ be syntactically closed, $\psi(p)$ be syntactically open, $c\in\kbbas$ and $o\in\obbas$.

\begin{enumerate}
\item If $\phi(p)$ is positive in $p$, $\psi(p)$ is negative in $p$, and

\begin{enumerate}

\item  $\pi(\bot)\leq \psi'^{\ca}(c)$ for any subformula $\blacksquare_{\pi}\psi'(p)$ of $\phi(p)$ and of $\psi(p)$,
\item $\sigma(\top)\geq \psi'^{\ca}(c)$ for any subformula $\Diamondblack_{\sigma}\psi'(p)$ of $\phi(p)$ and of $\psi(p)$,
\item $\lambda(\top)\leq \psi'^{\ca}(c)$ for any subformula $\blhd\psi'(p)$ of $\phi(p)$ and of $\psi(p)$,
\item $\rho(\bot)\geq \psi'^{\ca}(c)$ for any subformula $\brhd\psi'(p)$ of $\phi(p)$ and of $\psi(p)$,

\end{enumerate}

then $\phi(c)\in K(\ca)$ and $\psi(c)\in O(\ca)$ for each $c\in K(\ca)$.

\item If $\phi(p)$ is negative in $p$, $\psi(p)$ is positive in $p$, and

\begin{enumerate}

\item $\pi(\bot)\leq \psi'^{\ca}(o)$ for any subformula $\blacksquare_{\pi}\psi'(p)$ of $\phi(p)$ and of $\psi(p)$,
\item $\sigma(\top)\geq \psi'^{\ca}(o)$ for any subformula $\Diamondblack_{\sigma}\psi'(p)$ of $\phi(p)$ and of $\psi(p)$,
\item $\lambda(\top)\leq \psi'^{\ca}(o)$ for any subformula $\blhd\psi'(p)$ of $\phi(p)$ and of $\psi(p)$,
\item $\rho(\bot)\geq \psi'^{\ca}(o)$ for any subformula $\brhd\psi'(p)$ of $\phi(p)$ and of $\psi(p)$,

\end{enumerate}

then $\phi(o)\in K(\ca)$ and $\psi(o)\in O(\ca)$ for each $o\in O(\ca)$.

\end{enumerate}

\end{lem}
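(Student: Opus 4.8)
The plan is to prove the two statements by a single simultaneous induction on the complexity of the formula, establishing at once that syntactically closed formulas take closed values and syntactically open formulas take open values on the closed input $c$ (for Part 1; the case of $o\in O(\ca)$ in Part 2 is order-dual). The engine of the induction is the observation, built into Definition \ref{def:syn:closed:and:open}, that every subformula occurring in positive (resp.\ negative) position of a syntactically closed formula is itself syntactically closed (resp.\ open), and dually for syntactically open formulas; hence the two halves of the claim feed each other through the subformulas. For the base cases, $\phi=p$ gives $\phi(c)=c\in K(\ca)$; constants are clopen; a nominal is interpreted in $J^\infty(\ca)\subseteq K(\ca)$ and a co-nominal in $M^\infty(\ca)\subseteq O(\ca)$, and the polarity constraints of Definition \ref{def:syn:closed:and:open} guarantee that nominals only ever have to be closed and co-nominals only ever have to be open.

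The inductive cases split into three groups. First the routine ones, which are exactly as in the standard intersection lemma (cf.\ \cite{CoPa12}): the lattice connectives use that $K(\ca)$ is closed under finite joins and arbitrary meets while $O(\ca)$ is closed under finite meets and arbitrary joins; the DLE-operators $f$ and $g$, interpreted on the perfect algebra $\ca$ as completely additive resp.\ completely multiplicative maps, send closed to closed and open to open in their $1$-coordinates and interchange the two in their $\partial$-coordinates; and the residuals $\to,-,\overline{f}^{(i)},\underline{g}^{(j)}$ preserve the relevant property precisely under the polarities imposed on them by Definition \ref{def:syn:closed:and:open}. Second, the connectives $\diam_\pi,\Box_\sigma,\lhd_\lambda,\rhd_\rho$: by Lemma \ref{lem:mapprops} these are completely join- or meet-preserving, and in the formulas produced by $\mathsf{ALBA}^e$, to which the lemma is applied, they occur only applied to a nominal or a co-nominal, so that e.g.\ $\diam_\pi(\nomj)=\pi^{\ca}(j)$ with $j\in J^\infty(\ca)\subseteq K(\ca)$, which is closed since the positive term $\pi$ sends closed elements to closed elements; the three remaining cases are analogous, yielding $\Box_\sigma(\cnomn)=\sigma(n)\in O(\ca)$, $\lhd_\lambda(\cnomm)=\lambda(m)\in K(\ca)$ and $\rhd_\rho(\nomi)=\rho(i)\in O(\ca)$.

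The third group, the black adjoints $\blacksquare_\pi,\Diamondblack_\sigma,\blhd,\brhd$, is the real content, and it is here that the side conditions (a)--(d) are consumed via Lemma \ref{lem:idealtoideal}. Consider a subformula $\blacksquare_\pi\psi'$. Since $\blacksquare_\pi$ is the right adjoint of $\diam_\pi$ it is monotone, and by Definition \ref{def:syn:closed:and:open} it occurs with the polarity that makes its argument $\psi'$ syntactically open; hence $\psi'^{\ca}(c)\in O(\ca)$ by the induction hypothesis, and the side condition (a) supplies $\pi(\bot)\leq\psi'^{\ca}(c)$, so Lemma \ref{lem:idealtoideal}.1 gives $\blacksquare_\pi\psi'^{\ca}(c)\in O(\ca)$, exactly as required for the syntactically open subformula $\blacksquare_\pi\psi'$. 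The cases of $\Diamondblack_\sigma$ (monotone, side condition (b), Lemma \ref{lem:idealtoideal}.2, landing in $K(\ca)$), of $\blhd$ (antitone, so its argument is syntactically open, side condition (c), Lemma \ref{lem:idealtoideal}.3, landing in $K(\ca)$) and of $\brhd$ (antitone, argument syntactically closed, side condition (d), Lemma \ref{lem:idealtoideal}.4, landing in $O(\ca)$) are handled in the same way, the only change being the polarity bookkeeping dictated by whether the connective is monotone or antitone.

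The step I expect to be the main obstacle is precisely this polarity/side-condition bookkeeping for the black connectives: one has to check, uniformly over all four of them, that in any syntactically closed (resp.\ open) formula each black-connective subformula simultaneously (i) receives from the induction hypothesis an argument of exactly the topological type---closed or open---demanded by the corresponding clause of Lemma \ref{lem:idealtoideal}, and (ii) satisfies the matching inequality among (a)--(d), so that one and only one clause of that lemma applies and returns an element of the type prescribed by the syntactic classification of the subformula. Once this matching is verified the induction closes, and Part 2 follows by the order-dual argument.
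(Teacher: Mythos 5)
Your proof is correct and takes essentially the same route as the paper's: a simultaneous induction on $\phi$ and $\psi$, with the base cases and the ordinary lattice/DLE connectives handled as in \cite[Lemma 11.9]{CoPa12}, and each black connective $\blacksquare_\pi,\Diamondblack_\sigma,\blhd,\brhd$ discharged by feeding the induction hypothesis (whose applicability to the argument is exactly the polarity bookkeeping forced by Definition \ref{def:syn:closed:and:open}) into the matching side condition (a)--(d) and the corresponding clause of Lemma \ref{lem:idealtoideal}. The only divergence is that you additionally treat $\diam_\pi,\Box_\sigma,\lhd_\lambda,\rhd_\rho$ -- which the syntactic closed/open definition leaves unconstrained and the paper's proof silently omits -- via the observation that in ALBA$^e$-generated systems they occur only applied to nominals/conominals, so that e.g.\ $\diam_\pi(\nomj)=\pi^{\ca}(j)\in K(\ca)$; this is a point of care the paper itself skips rather than a genuinely different argument.
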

\begin{proof}
The proof proceeds by simultaneous induction on $\phi$ and $\psi$. It is easy to see that $\phi$ cannot be $\cnomm$, and the outermost connective of $\phi$ cannot be $\overline{f}^{(i)}$ with $\varepsilon_f(i) = 1$, or $ \underline{g}^{(j)}$ with $\varepsilon_g(j) = \partial$, or $\blacksquare_{\pi}, \blacktriangleright_{\rho}, \rightarrow$. Similarly, $\psi$ cannot be $\nomi$, and the outermost connective of $\psi$ cannot be $\underline{g}^{(j)}$ with $\varepsilon_g(j) = 1$, or $\overline{f}^{(i)}$ with $\varepsilon_f(i) = \partial$, or $\Diamondblack_{\sigma}, \blacktriangleleft_{\lambda}, -$.

The basic cases, that is, $\phi=\perp, \top, p, q, \nomi$ and $\psi=\perp, \top, p, q, \cnomm$ are straightforward.

Assume that $\phi(p)=\Diamondblack_{\sigma}\phi'(p)$. Since $\phi(p)$ is positive in $p$, the subformula $\phi'(p)$ is syntactically closed and positive in $p$, and assumptions 1(a)-1(d) hold also for $\phi'(p)$.  Hence, by inductive hypothesis, $\phi'(c)\in K(\ca) $ for any $c\in K(\ca)$. 
In particular, assumption 1(b) implies that $\sigma(\top)\geq \phi'(c)$. Hence, by Lemma \ref{lem:idealtoideal}, $\Diamondblack_{\sigma}\phi'(c)\in K(\ca)$, as required. The case in which $\phi(p)$ is negative in $p$ is argued order-dually.

The cases in which $\phi(p)=\Diamondblack_{\sigma}\phi'(p), \blhd\phi'(p), \brhd\phi'(p)$ are similar to the one above.

The cases of the remaining connectives are treated as in \cite[Lemma 11.9]{CoPa12} and the corresponding proofs are omitted.
\end{proof}

\begin{lem}[Intersection lemma]\label{MJ:Pres}

Let $\phi(p)$ be syntactically closed, $\psi(p)$ be syntactically open, $\mathcal{C}\subseteq \kbbas$ be downward-directed, $\mathcal{O}\subseteq \obbas$ be upward-directed. Then

\begin{enumerate}
\item if $\phi(p)$ is positive in $p$, $\psi(p)$ is negative in $p$, and

\begin{enumerate}

\item $\pi(\bot)\leq \psi'^{\ca}(\bigwedge\mathcal{C})$  for any subformula $\blacksquare_{\pi}\psi'(p)$ of $\phi(p)$ and of $\psi(p)$,
\item $\sigma(\top)\geq \psi'^{\ca}(\bigwedge\mathcal{C})$ for any subformula $\Diamondblack_{\sigma}\psi'(p)$ of $\phi(p)$ and of $\psi(p)$,
\item $\lambda(\top)\leq \psi'^{\ca}(\bigwedge\mathcal{C})$ for any subformula $\blhd\psi'(p)$ of $\phi(p)$ and of $\psi(p)$,
\item $\rho(\bot)\geq \psi'^{\ca}(\bigwedge\mathcal{C})$ for any subformula $\brhd\psi'(p)$ of $\phi(p)$ and of $\psi(p)$,

\end{enumerate}

then

\begin{enumerate}

\item[(a)] $\phi^{\ca}(\bigwedge\mathcal{C})=\bigwedge\{\phi^{\ca}(c): c\in \mathcal{C}'\}$ for some down-directed subcollection $\mathcal{C}'\subseteq \mathcal{C}$ such that $\phi^{\ca}(c)\in K(\ca)$ for each $c\in \mathcal{C}'$.
\item[(b)] $\psi^{\ca}(\bigwedge\mathcal{C})=\bigvee\{\psi^{\ca}(c): c\in \mathcal{C}'\}$ for some  down-directed subcollection $\mathcal{C}'\subseteq \mathcal{C}$ such that $\psi^{\ca}(c)\in O(\ca)$ for each $c\in \mathcal{C}'$.

\end{enumerate}

\item If $\phi(p)$ is negative in $p$, $\psi(p)$ is positive in $p$, and

\begin{enumerate}

\item $\pi(\bot)\leq \psi'^{\ca}(\bigvee\mathcal{O})$  for any subformula $\blacksquare_{\pi}\psi'(p)$ of $\phi(p)$ and of $\psi(p)$,
\item $\sigma(\top)\geq \psi'^{\ca}(\bigvee\mathcal{O})$ for any subformula $\Diamondblack_{\sigma}\psi'(p)$ of $\phi(p)$ and of $\psi(p)$,
\item $\lambda(\top)\leq \psi'^{\ca}(\bigvee\mathcal{O})$ for any subformula $\blhd\psi'(p)$ of $\phi(p)$ and of $\psi(p)$,
\item $\rho(\bot)\geq \psi'^{\ca}(\bigvee\mathcal{O})$ for any subformula $\brhd\psi'(p)$ of $\phi(p)$ and of $\psi(p)$,

\end{enumerate}

then

\begin{enumerate}

\item[(a)] $\phi^{\ca}(\bigvee\mathcal{O})=\bigwedge\{\phi^{\ca}(o): o\in \mathcal{O}'\}$ for some  up-directed subcollection $\mathcal{O}'\subseteq \mathcal{O}$ such that $\phi^{\ca}(o)\in K(\ca)$ for each $o\in \mathcal{O}'$.

\item[(b)] $\psi^{\ca}(\bigvee\mathcal{O})=\bigvee\{\psi^{\ca}(o): o\in \mathcal{O}'\}$ for some up-directed subcollection $\mathcal{O}'\subseteq\mathcal{O}$ such that $\psi^{\ca}(o)\in O(\ca)$ for each $o\in \mathcal{O}'$.

\end{enumerate}

\end{enumerate}

\end{lem}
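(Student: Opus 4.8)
The plan is to prove parts 1 and 2 by simultaneous induction on the complexity of $\phi$ and $\psi$, following the structure of the proof of \cite[Lemma 11.9]{CoPa12} for the standard DLE-connectives, and treating the new black adjoints $\blacksquare_\pi, \Diamondblack_\sigma, \blhd, \brhd$ by means of the conditional Esakia lemma (Proposition \ref{prop:Esa}). I describe part 1, since part 2 is order-dual. As observed in the preceding Lemma \ref{lemma:synct closed is sem closed}, the outermost connective of a syntactically closed, $p$-positive $\phi$ (resp.\ of a syntactically open, $p$-negative $\psi$) is restricted to those not ruled out there, and the base cases $\phi,\psi\in\{\bot,\top,p,q,\nomi,\cnomm\}$ are immediate, since constant and single-variable maps admit the trivial refinement $\mathcal{C}'=\mathcal{C}$.

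For the cases in which the outermost connective is one of $\wedge,\vee, f, g$ or a residual, the inductive steps run exactly as in \cite[Lemma 11.9]{CoPa12}: one first checks that syntactic closedness/openness and the side conditions (a)--(d) are inherited by the immediate subformulas, applies the inductive hypothesis to each of them to obtain directed preservation over suitable refining subcollections, and then merges the finitely many subcollections so produced into a single common refinement, using the down-directedness of $\mathcal{C}$ together with Lemma \ref{lemma:synct closed is sem closed} to guarantee that all intermediate values lie in $K(\ca)$. This part is routine bookkeeping and we suppress it.

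The genuinely new cases are those in which the outermost connective of $\phi$ is a black adjoint. Consider, for instance, $\phi(p)=\Diamondblack_\sigma\phi'(p)$ with $\phi$ positive in $p$, so that $\phi'$ is syntactically closed, positive in $p$, and conditions (a)--(d) hold for $\phi'$. By the inductive hypothesis, $\phi'^{\ca}(\bigwedge\mathcal{C})=\bigwedge\{\phi'^{\ca}(c)\mid c\in\mathcal{C}'\}$ for some down-directed $\mathcal{C}'\subseteq\mathcal{C}$ with $\phi'^{\ca}(c)\in K(\ca)$ for each $c\in\mathcal{C}'$. Since $\phi'$ is positive in $p$, its term function is monotone in that argument, so the image $\mathcal{K}:=\{\phi'^{\ca}(c)\mid c\in\mathcal{C}'\}\subseteq K(\ca)$ of the down-directed family $\mathcal{C}'$ is itself down-directed, with $\bigwedge\mathcal{K}=\phi'^{\ca}(\bigwedge\mathcal{C})$. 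Condition (b) applied to the subformula $\Diamondblack_\sigma\phi'(p)$ yields $\sigma(\top)\geq\phi'^{\ca}(\bigwedge\mathcal{C})=\bigwedge\mathcal{K}$, which is precisely the hypothesis of Proposition \ref{prop:Esa}.2 for the map $\sigma^{\ca}$ and the down-directed collection $\mathcal{K}$. That proposition gives $\Diamondblack_\sigma(\bigwedge\mathcal{K})=\bigwedge\{\Diamondblack_\sigma k\mid k\in\mathcal{K}''\}$ for a down-directed $\mathcal{K}''\subseteq\mathcal{K}$ with $\Diamondblack_\sigma k\in K(\ca)$ throughout; pulling $\mathcal{K}''$ back along $c\mapsto\phi'^{\ca}(c)$ produces the required refinement of $\mathcal{C}'$ witnessing (a), since $\phi^{\ca}(c)=\Diamondblack_\sigma(\phi'^{\ca}(c))$. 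The remaining black-connective cases for $\phi$ and for $\psi$ are handled identically using the matching items of Proposition \ref{prop:Esa} and Lemma \ref{lem:idealtoideal}.

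I expect the main obstacle to be bookkeeping rather than anything conceptual: one must verify at each step that the side conditions (a)--(d), stated uniformly for all relevant subformulas, descend correctly to the immediate subformulas so that the inductive hypothesis applies, and that the several directed subcollections produced in the binary and $n$-ary cases can be amalgamated into one common refinement on which all the required closedness and preservation facts hold simultaneously. The one point meriting genuine care beyond \cite[Lemma 11.9]{CoPa12} is that the hypotheses feeding Proposition \ref{prop:Esa} are exactly the \emph{weak} side conditions of the form $\sigma(\top)\geq\bigwedge\mathcal{K}$ evaluated at the directed-meet value, rather than pointwise hypotheses; these are precisely what conditions (a)--(d) supply, so the refinement guaranteed by the ``Moreover'' clause of Proposition \ref{prop:Esa} is available without assuming anything stronger.
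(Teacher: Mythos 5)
Your proposal is correct and takes essentially the same route as the paper's own proof: a simultaneous induction in which the base cases are trivial, the standard connectives are deferred to the corresponding lemma of \cite{CoPa12}, and each black-adjoint case is handled by applying the inductive hypothesis to the immediate subformula, noting that the relevant side condition among (a)--(d) supplies exactly the weak hypothesis (e.g.\ $\sigma(\top)\geq \phi'^{\ca}(\bigwedge\mathcal{C})$) needed for the conditional Esakia lemma (Proposition \ref{prop:Esa}), and then transporting the refinement from its ``Moreover'' clause back to a down-directed (resp.\ up-directed) subcollection of $\mathcal{C}$ (resp.\ $\mathcal{O}$). The only cosmetic discrepancies are your citation of Lemma 11.9 where the paper's proof of this lemma points to \cite[Lemma 11.10]{CoPa12} for the remaining connectives, and your slightly more explicit pullback along $c\mapsto\phi'^{\ca}(c)$, which the paper leaves implicit.
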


\begin{proof}

The proof proceeds by simultaneous induction on $\phi$ and $\psi$. It is easy to see that $\phi$ cannot be $\cnomm$, and the outermost connective of $\phi$ cannot be $\overline{f}^{(i)}$ with $\varepsilon_f(i) = 1$, or $ \underline{g}^{(j)}$ with $\varepsilon_g(j) = \partial$, or $\blacksquare_{\pi}, \blacktriangleright_{\rho}, \rightarrow$. Similarly, $\psi$ cannot be $\nomi$, and the outermost connective of $\psi$ cannot be $\underline{g}^{(j)}$ with $\varepsilon_g(j) = 1$, or $\overline{f}^{(i)}$ with $\varepsilon_f(i) = \partial$, or $\Diamondblack_{\sigma}, \blacktriangleleft_{\lambda}, -$.

The basic cases in which $\phi=\perp, \top, p, q, \nomi$ and $\psi=\perp, \top, p, q, \cnomm$ are straightforward.

Assume that $\phi(p)=\Diamondblack_{\sigma}\phi'(p)$. Since $\phi(p)$ is positive in $p$, the subformula $\phi'(p)$ is syntactically closed and positive in $p$, and assumptions 1(a)-1(d) hold also for $\phi'(p)$.  Hence, by inductive hypothesis, $\phi'(\bigwedge\mathcal{C}) = \bigwedge\{\phi'(c)\mid c\in\mathcal{C}''\}$ for some down-directed subcollection $\mathcal{C}''\subseteq \mathcal{C}$ such that $\phi'(c)\in K(\ca)$ for each $c\in \mathcal{C}''$. In particular, assumption 1(b) implies that $\sigma(\top)\geq \phi'(\bigwedge\mathcal{C}) = \bigwedge\{\phi'(c)\mid c\in \mathcal{C}''\}$. Notice that $\phi'(p)$ being positive in $p$ and $\mathcal{C}''$ being down-directed imply that $\{\phi'(c)\mid c\in \mathcal{C}''\}$ is down-directed. Hence, by Proposition \ref{prop:Esa} applied to $\{\phi'(c)\mid c\in \mathcal{C}''\}$, we get that $\Diamondblack_{\sigma}\phi'(\bigwedge\mathcal{C}) = \Diamondblack_{\sigma}(\bigwedge\{\phi'(c)\mid c\in\mathcal{C}''\}) = \bigwedge\{\Diamondblack_{\sigma}\phi'(c)\mid c\in\mathcal{C}''\}$. Moreover, there exists some down-directed subcollection $\mathcal{C}'\subseteq \mathcal{C}''$ such that $\Diamondblack_{\sigma}\phi'(c)\in K(\ca)$ for each $c\in \mathcal{C}'$ and $\bigwedge\{\Diamondblack_{\sigma}\phi'(c)\mid c\in\mathcal{C}'\} = \bigwedge\{\Diamondblack_{\sigma}\phi'(c)\mid c\in\mathcal{C}''\}$. This gives us $\Diamondblack_{\sigma}\phi'(\bigwedge\mathcal{C}) =  \bigwedge\{\Diamondblack_{\sigma}\phi'(c)\mid c\in\mathcal{C}'\}$ as required. The case in which $\phi(p)$ is negative in $p$ is argued order-dually.

The cases in which $\phi(p)=\Diamondblack_{\sigma}\phi'(p), \blhd\phi'(p), \brhd\phi'(p)$ are similar to the one above.

The cases of the remaining connectives are treated as in \cite[Lemma 11.10]{CoPa12} and the corresponding proofs are omitted.
\end{proof}

\subsection{Topological Ackermann for $\mathrm{DLE}^{++}$}

\begin{prop}[Right-handed Topological Ackermann Lemma]\label{Right:Ack}

Let $S$ be a topologically adequate system of DLE$^{++}$ inequalities which is the union of the following disjoint subsets:
\begin{itemize}
\renewcommand\labelitemi{--}
\item
$S_1$ consists only of inequalities in which $p$ does not occur;
\item
$S_2$ consists of inequalities of the type $\alpha\leq p$, where $\alpha$ is syntactically closed and $p$ does not occur in $\alpha$;
\item
$S_3$ consists of inequalities of the type $\beta(p)\leq \gamma(p)$ where $\beta(p)$ is syntactically closed and positive in $p$, and $\gamma(p)$ be syntactically open and negative in $p$,

\end{itemize}

Then the following are equivalent:
\vspace{1mm}
\begin{enumerate}
\item
\vspace{1mm}
$\beta^{\bbas}(\bigvee\alpha^{\bbas})\leq\gamma^{\bbas}(\bigvee\alpha^{\bbas})$ for all inequalities in $S_3$, where $\bigvee\alpha$ abbreviates $\bigvee\{\alpha\mid \alpha\leq p\in S_2\}$;

\vspace{1mm}
\item
There exists $a_0\in\bba$ such that $\bigvee\alpha^{\bbas}\leq a_0$ and $\beta^{\bbas}(a_0)\leq\gamma^{\bbas}(a_0)$ for all inequalities in $S_3$.

\end{enumerate}

\end{prop}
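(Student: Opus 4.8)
The plan is to prove the two implications separately, the direction (2)$\Rightarrow$(1) being routine and (1)$\Rightarrow$(2) carrying all the weight. Throughout I would write $c := \bigvee\alpha^{\ca}$ and $S_3 = \{\beta_j(p)\leq\gamma_j(p)\mid 1\leq j\leq m\}$ (a \emph{finite} set, since ALBA$^e$ operates on finite systems), and I would keep in mind that each $\beta_j$ is syntactically closed and positive in $p$, so $\beta_j^{\ca}$ is monotone, while each $\gamma_j$ is syntactically open and negative in $p$, so $\gamma_j^{\ca}$ is antitone. For (2)$\Rightarrow$(1), suppose $a_0\in A$ satisfies $c\leq a_0$ and $\beta_j(a_0)\leq\gamma_j(a_0)$ for every $j$. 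Then $c\leq a_0$ gives $\beta_j(c)\leq\beta_j(a_0)$ by monotonicity and $\gamma_j(a_0)\leq\gamma_j(c)$ by antitonicity, so chaining with the hypothesis yields $\beta_j(c)\leq\beta_j(a_0)\leq\gamma_j(a_0)\leq\gamma_j(c)$, which is exactly (1).

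The substance is (1)$\Rightarrow$(2). First I would observe that $c$ is a \emph{closed} element of $\ca$: each $\alpha$ is $p$-free and syntactically closed, so $\alpha^{\ca}\in K(\ca)$ by Lemma \ref{lemma:synct closed is sem closed} (under the admissible assignment in force), and $c$ is a finite join of such elements, hence closed. Consequently $\mathcal{C} := \{a\in A\mid c\leq a\}$ is down-directed (closed under finite meets) and satisfies $\bigwedge\mathcal{C} = c$. Next I would apply the conditional intersection lemma (Lemma \ref{MJ:Pres}) to each pair $\beta_j,\gamma_j$ along the down-directed family $\mathcal{C}$; its hypotheses, namely the side conditions on subformulas of the form $\blacksquare_\pi\psi'$, $\Diamondblack_\sigma\psi'$, $\blhd\psi'$, $\brhd\psi'$ evaluated at $\bigwedge\mathcal{C}=c$, are supplied precisely by the companion inequalities guaranteed by the topological adequacy of $S$. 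The lemma then yields $\beta_j(c)=\bigwedge\{\beta_j(a)\mid a\in\mathcal{C}\}$, a meet of closed elements and hence closed, and $\gamma_j(c)=\bigvee\{\gamma_j(a)\mid a\in\mathcal{C}\}$, a join of open elements and hence open; here I use that monotonicity of $\beta_j$ and antitonicity of $\gamma_j$ force these meets and joins over all of $\mathcal{C}$ to coincide with those over the subcollections produced by Lemma \ref{MJ:Pres}.

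From (1) we then have, for each $j$, that the closed element $\beta_j(c)$ lies below the open element $\gamma_j(c)$. Writing each $\beta_j(a)$ as a meet and each $\gamma_j(a)$ as a join of elements of $A$, and invoking compactness together with the down-directedness of $\{\beta_j(a)\}_{a\in\mathcal{C}}$, the up-directedness of $\{\gamma_j(a)\}_{a\in\mathcal{C}}$, the monotonicity of $\beta_j$ and the antitonicity of $\gamma_j$, I would extract witnesses $a_j,b_j\in\mathcal{C}$ with $\beta_j(a_j)\leq\gamma_j(b_j)$. Finally, setting $a_0 := \bigwedge_{j=1}^m (a_j\wedge b_j)$, which lies in $\mathcal{C}\subseteq A$, gives $c\leq a_0$ and, for each $j$, $\beta_j(a_0)\leq\beta_j(a_j)\leq\gamma_j(b_j)\leq\gamma_j(a_0)$ by monotonicity of $\beta_j$ and antitonicity of $\gamma_j$; this is precisely (2).

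The hard part will be the descent from $\ca$ back to $A$: converting the single abstract inequality $\beta_j(c)\leq\gamma_j(c)$ holding in the canonical extension into concrete witnesses $a_j,b_j\in A$, and then merging the finitely many witnesses into one $a_0$ valid for all of $S_3$ at once. This is where closedness of $c$, the conditional intersection lemma, and compactness for down-directed families of closed elements and up-directed families of open elements must all be brought together. The most delicate bookkeeping point is verifying that topological adequacy genuinely furnishes the side-condition hypotheses of Lemma \ref{MJ:Pres} once the minimal valuation $p\mapsto c$ has been substituted into the black-connective subformulas occurring in the $\beta_j$ and $\gamma_j$.
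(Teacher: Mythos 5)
Your proof is correct and follows essentially the same route as the paper's own: establishing that $\bigvee\alpha^{\ca}$ is closed, rewriting it as the down-directed meet of the elements of $A$ above it, invoking the intersection lemma (Lemma \ref{MJ:Pres}) with its side conditions supplied by topological adequacy, using compactness to extract finite witnesses, and merging them by a single meet into the required $a_0\in A$. The only cosmetic difference is that you form per-inequality witnesses $a_j, b_j$ before taking the global meet, whereas the paper meets the finite families $\mathcal{A}'_1, \mathcal{A}'_2$ across all of $S_3$ directly; the arguments are otherwise identical.
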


\begin{proof}

$(\Leftarrow)$ By the monotonicity of $\beta_i(p)$ and antitonicity of $\gamma_i(p)$ in $p$ for $1\leq i\leq n$, together with $\alpha^{\bbas}\leq a_0$ we have that $\beta_i^{\bbas}(\alpha^{\bbas})\leq\beta_i^{\bbas}(a_0)\leq\gamma_i^{\bbas}(a_0)\leq\gamma_i^{\bbas}(\alpha^{\bbas})$.\\

$(\Rightarrow)$ Since the quasi-inequality is topologically adequate, by Lemma \ref{MJ:Pres}.1, $\alpha^{\bbas}\in\kbbas$.

Hence, $\alpha^{\bbas}=\bigwedge\{a\in\bba : \alpha^{\bbas}\leq a\}$, making it the meet of a downward-directed set of clopen elements. Therefore, we can rewrite each inequality in $S_3$ as \[\beta^{\bbas}(\bigwedge\{a\in\bba : \alpha^{\bbas}\leq a\})\leq\gamma^{\bbas}(\bigwedge\{a\in\bba : \alpha^{\bbas}\leq a\}).\] Since $\beta$ is syntactically closed and positive in $p$, $\gamma$ is syntactically open and negative in $p$, again by topological adequacy, we can apply Lemma \ref{MJ:Pres} and get that \[\bigwedge\{\beta^{\bbas}(a): a\in\mathcal{A}_1\}\leq\bigvee\{\gamma_i^{\bbas}(b): b\in\mathcal{A}_2\}\] for some $\mathcal{A}_1, \mathcal{A}_2\subseteq \{a\in\bba : \alpha^{\bbas}\leq a\}$ such that $\beta^{\bbas}(a)\in K(\ca)$ for each $a\in \mathcal{A}_1$, and $\gamma^{\bbas}(b)\in O(\ca)$ for each $b\in \mathcal{A}_2$. By compactness,

\[\bigwedge\{\beta_i^{\bbas}(a): a\in\mathcal{A}'_1\}\leq\bigvee\{\gamma_i^{\bbas}(b): b\in\mathcal{A}'_2\}\] for some finite subsets $\mathcal{A}'_1\subseteq\mathcal{A}_1$ and $\mathcal{A}'_2\subseteq\mathcal{A}_2$.  Then let $a^*=\bigwedge\{\bigwedge\mathcal{A}'_1\wedge \bigwedge\mathcal{A}'_2 \mid \beta\leq\gamma\in S_3\}.$ Clearly, $a^*\in\bba$, and
$\alpha^{\bbas}\leq a^*$.  By the monotonicity of $\beta(p)$ and the antitonicity of $\gamma(p)$ in $p$ for each $\beta\leq \gamma$ in $S_3$, we have $\beta^{\bbas}(a^*)\leq\beta^{\bbas}(a)$ and $\gamma_i^{\bbas}(b)\leq\gamma_i^{\bbas}(a^*)$ for all $a\in\mathcal{A}'_1$ and all $b\in\mathcal{A}'_2$. Therefore, \[\beta_i^{\bbas}(a^*)\leq\bigwedge\{\beta_i^{\bbas}(a): a\in\mathcal{A}'_1\}\leq\bigvee\{\gamma_i^{\bbas}(b): b\in\mathcal{A}'_2\}\leq\gamma_i^{\bbas}(a^*)\] for each  $\beta\leq \gamma$ in $S_3$.
\end{proof}

\begin{prop}[Left-handed Topological Ackermann Lemma]\label{Left:Ack}

Let $S$ be a topologically adequate system of DLE$^{++}$ inequalities which is the union of the following disjoint subsets:
\begin{itemize}
\renewcommand\labelitemi{--}
\item
$S_1$ consists only of inequalities in which $p$ does not occur;
\item
$S_2$ consists of inequalities of the type $p\leq\alpha$, where $\alpha$ is syntactically open and $p$ does not occur in $\alpha$;
\item
$S_3$ consists of inequalities of the type $\beta(p)\leq \gamma(p)$ where $\beta(p)$ is syntactically closed and negative in $p$, and $\gamma(p)$ be syntactically open and positive in $p$,

\end{itemize}
Then the following are equivalent:
\vspace{1mm}
\begin{enumerate}
\item
\vspace{1mm}
$\beta^{\bbas}(\bigwedge\alpha^{\bbas})\leq\gamma^{\bbas}(\bigwedge\alpha^{\bbas})$ for all inequalities in $S_3$, where $\bigwedge\alpha$ abbreviates $\bigwedge\{\alpha\mid p\leq\alpha\in S_2\}$;

\vspace{1mm}
\item
There exists $a_0\in\bba$ such that $a_0\leq\bigwedge\alpha^{\bbas}$ and $\beta^{\bbas}(a_0)\leq\gamma^{\bbas}(a_0)$ for all inequalities in $S_3$.

\end{enumerate}

\end{prop}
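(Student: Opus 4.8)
The plan is to establish this as the exact order-dual of Proposition~\ref{Right:Ack}, reproducing that argument with all joins and meets, closed and open elements, and monotone/antitone polarities systematically interchanged. The direction $(\Leftarrow)$ is immediate: given $a_0\in\bba$ with $a_0\leq\bigwedge\alpha^{\bbas}$ and $\beta^{\bbas}(a_0)\leq\gamma^{\bbas}(a_0)$, I would use that each $\beta$ is negative (hence antitone) and each $\gamma$ positive (hence monotone) in $p$ to obtain, for each $\beta\leq\gamma$ in $S_3$, the chain $\beta^{\bbas}(\bigwedge\alpha^{\bbas})\leq\beta^{\bbas}(a_0)\leq\gamma^{\bbas}(a_0)\leq\gamma^{\bbas}(\bigwedge\alpha^{\bbas})$.

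For the substantial direction $(\Rightarrow)$, I would first record that $\bigwedge\alpha^{\bbas}$ is open: each $\alpha$ occurring in $S_2$ is syntactically open, so each $\alpha^{\bbas}\in\obbas$ by Lemma~\ref{lemma:synct closed is sem closed} (the open analogue of its use in Proposition~\ref{Right:Ack}), and since $S_2$ is finite the meet $\bigwedge\alpha^{\bbas}\in\obbas$. Hence $\bigwedge\alpha^{\bbas}=\bigvee\mathcal{O}$, where $\mathcal{O}:=\{a\in\bba\mid a\leq\bigwedge\alpha^{\bbas}\}$ is an up-directed family of clopen elements. For each $\beta\leq\gamma$ in $S_3$ I would then apply the intersection lemma (Lemma~\ref{MJ:Pres}.2, with $\beta$ negative and $\gamma$ positive in $p$), whose hypotheses on the subformulas $\blacksquare_\pi\psi'$, $\Diamondblack_\sigma\psi'$, $\blhd\psi'$, $\brhd\psi'$ are exactly the side conditions guaranteed by the topological adequacy of $S$. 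This rewrites $\beta^{\bbas}(\bigvee\mathcal{O})$ as a down-directed meet $\bigwedge\{\beta^{\bbas}(o)\mid o\in\mathcal{O}_\beta\}$ of closed elements and $\gamma^{\bbas}(\bigvee\mathcal{O})$ as an up-directed join $\bigvee\{\gamma^{\bbas}(o)\mid o\in\mathcal{O}_\gamma\}$ of open elements, for suitable up-directed subcollections $\mathcal{O}_\beta,\mathcal{O}_\gamma\subseteq\mathcal{O}$. Feeding in hypothesis~1, namely $\beta^{\bbas}(\bigwedge\alpha^{\bbas})\leq\gamma^{\bbas}(\bigwedge\alpha^{\bbas})$, yields a meet of closed elements below a join of open elements, so compactness (Definition~\ref{def:can:ext2.3}) produces finite $\mathcal{O}_\beta'\subseteq\mathcal{O}_\beta$ and $\mathcal{O}_\gamma'\subseteq\mathcal{O}_\gamma$ with $\bigwedge\{\beta^{\bbas}(o)\mid o\in\mathcal{O}_\beta'\}\leq\bigvee\{\gamma^{\bbas}(o)\mid o\in\mathcal{O}_\gamma'\}$.

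Finally, dualizing the witness chosen in Proposition~\ref{Right:Ack}, I would set $a^*:=\bigvee\{\bigvee\mathcal{O}_\beta'\vee\bigvee\mathcal{O}_\gamma'\mid\beta\leq\gamma\in S_3\}$. As $S_3$ and all the $\mathcal{O}_\beta',\mathcal{O}_\gamma'$ are finite and contained in $\mathcal{O}\subseteq\bba$, this is a finite join of elements of $\bba$, so $a^*\in\bba$; and since every joinand lies in $\mathcal{O}$, we have $a^*\leq\bigwedge\alpha^{\bbas}$. For each $\beta\leq\gamma$ in $S_3$, antitonicity of $\beta$ gives $\beta^{\bbas}(a^*)\leq\bigwedge\{\beta^{\bbas}(o)\mid o\in\mathcal{O}_\beta'\}$ and monotonicity of $\gamma$ gives $\bigvee\{\gamma^{\bbas}(o)\mid o\in\mathcal{O}_\gamma'\}\leq\gamma^{\bbas}(a^*)$, which together with the displayed compactness inequality yield $\beta^{\bbas}(a^*)\leq\gamma^{\bbas}(a^*)$; thus $a_0:=a^*$ witnesses statement~2. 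The main obstacle — and the only place where the argument is not a purely formal dualization — is verifying that topological adequacy supplies precisely the hypotheses required to invoke Lemma~\ref{MJ:Pres}.2 together with the preceding openness claim; once that bookkeeping is secured, the polarity tracking and the compactness extraction are routine.
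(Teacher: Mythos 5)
Your proposal is correct and is exactly what the paper intends: the paper omits this proof, stating only that it is similar to that of Proposition~\ref{Right:Ack}, and your argument is precisely the order-dual of that proof, with the witness $a^*$ as a finite join instead of a finite meet and the polarity/compactness bookkeeping dualized correctly. Your citation of Lemma~\ref{lemma:synct closed is sem closed} for the openness of $\bigwedge\alpha^{\bbas}$ is in fact more apt than the paper's own citation of Lemma~\ref{MJ:Pres} at the corresponding step of the right-handed proof.
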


\begin{proof}

The proof is similar to the proof of the right-handed Ackermann lemma and is omitted.

\end{proof}

\end{document}